\theoremstyle{definition}
\newtheorem{theorem}{Theorem}
\newtheorem{lemma}[theorem]{Lemma}
\newtheorem{proposition}[theorem]{Proposition}
\newtheorem{example}[theorem]{Example}
\newtheorem{definition}[theorem]{Definition}
\newtheorem{remark}[theorem]{Remark}
\newtheorem*{lemma*}{Lemma}
\newcommand{\cA}{\mathcal{A}}
\newcommand{\cB}{\mathcal{B}}
\newcommand{\cI}{\mathcal{I}}
\newcommand{\hI}{\widehat{\cI}}
\newcommand{\hN}{\widehat{N}}
\newcommand{\cP}{\mathcal{P}}
\newcommand{\hP}{\widehat{\cP}}
\newcommand{\cS}{\mathcal{S}}
\newcommand{\cT}{\mathcal{T}}
\newcommand{\cZ}{\mathcal{Z}}
\newcommand{\fa}{\mathfrak{a}}
\newcommand{\fb}{\mathfrak{b}}
\newcommand{\hPhi}{\Phi}
\newcommand{\Nat}{\mathbb{N}}
\newcommand{\Int}{\mathbb{Z}}
\newcommand{\equals}{=}
\newcommand{\<}{\langle}
\renewcommand{\>}{\rangle}
\newcommand{\vars}{\text{\upshape{vars}}}
\newcommand{\consts}{\text{\upshape{consts}}}
\newcommand{\bconsts}{\text{\upshape{bconsts}}}
\newcommand{\fconsts}{\text{\upshape{fconsts}}}
\newcommand{\prop}{\rightrightarrows}
\newcommand{\downcl}{\mathop{\Downarrow}}
\newcommand{\upcl}{\mathop{\Uparrow}}
\newcommand{\subst}[1]{\bigl[#1\bigr]}
\newcommand{\Mapsto}{{\mathop{\mapsto}}}
\newcommand{\False}{\text{False}}
\newcommand{\lvl}{\text{lvl}}
\newcommand{\BsrSli}{\text{BSR(SLI)}}
\newcommand{\BsrSliT}{\text{BSR(SLI$+\cT$)}}
\newcommand{\Tterms}{\mathbb{T}_{\cT}}
\begin{document}
\title{On the Combination of the\\ Bernays--Sch\"onfinkel--Ramsey Fragment\\ with Simple Linear Integer Arithmetic}
\author{
	\begin{tabular}{l}
		Matthias Horbach\\
		\small\textit{Max Planck Institute for Informatics, Saarland Informatics Campus, Saarbr\"ucken, Germany}
	\end{tabular}
	\and
	\begin{tabular}{l}
		Marco Voigt\\
		\small\textit{Max Planck Institute for Informatics, Saarland Informatics Campus, Saarbr\"ucken, Germany,}\\
		\small\textit{Saarbr\"ucken Graduate School of Computer Science}
	\end{tabular}
	\and
	\begin{tabular}{l}
		Christoph Weidenbach \\
		\small\textit{Max Planck Institute for Informatics, Saarland Informatics Campus, Saarbr\"ucken, Germany}
	\end{tabular}
}	
\date{}
\maketitle

\begin{abstract}
	In general, first-order predicate logic extended with linear integer arithmetic is undecidable. 
	We show that the Bernays--Sch\"onfinkel--Ramsey fragment ($\exists^* \forall^*$-sentences) extended with a restricted form of
	linear integer arithmetic is decidable via finite ground instantiation.
	The identified ground instances can be employed to restrict the search space of 
	existing automated reasoning procedures considerably, 
	e.g., when reasoning about quantified properties of array data structures formalized in Bradley, Manna, and Sipma's \emph{array property fragment}.
	Typically, decision procedures for the array property fragment are based on an exhaustive instantiation 
	of universally quantified array indices with all the ground index terms that occur in the formula at hand. 
	Our results reveal that one can get along with significantly fewer instances.
\end{abstract}


\section{Introduction} \label{section:Introduction}

The Bernays-Sch\"onfinkel-Ramsey (BSR) fragment comprises exactly the first-order logic prenex sentences 
with the $\exists^*\forall^*$ quantifier prefix, resulting in a CNF where all occurring function symbols are constants. 
Formulas may contain equality.
Satisfiability of the BSR fragment is decidable and \textsc{NExpTime}-complete~\cite{Lewis1980}.
Its extension with linear arithmetic is undecidable~\cite{Putnam1957,Downey1972,Halpern1991,Fietzke2012}. 

We prove decidability of the restriction to arithmetic constraints of the form $s\triangleleft t$, $x \triangleleft t$, where $\triangleleft$ is one of the standard relations $<, \leq, \equals, \not\equals, \geq, >$ and $s$, $t$ are ground arithmetic terms, and $x\trianglelefteq y$, where $\trianglelefteq$ stands for $\leq$, $\equals$, or $\geq$. 
Underlying the result is the observation that similar to the finite model property of BSR, only finitely many instances of universally quantified clauses with arithmetic constraints need to be considered. 
Our construction is motivated by results from quantifier elimination~\cite{Loos1993} and hierarchic superposition~\cite{Bachmair1994b,Althaus2009,Kruglov2012,Fietzke2012,Baumgartner2013}. 
In particular, the insights gained from the quantifier elimination side lead to instantiation methods that can result in significantly fewer instances than known, more naive approaches for comparable logic fragments generate, such as the original instantiation approach for the \emph{array property fragment}~\cite{Bradley2006,BradleyPhD2007}.
For example, consider the following two clauses ($\wedge$ and $\vee$ bind stronger than $\rightarrow$)\\
	\centerline{
	$\begin{array}{r@{\;}r@{\;\;\rightarrow\;\;}l}
	x_2\neq 5 			&\wedge\; R(x_1)  	& Q(u_1,x_2)\\
	y_1<7 \wedge y_2\leq 2	&	  		& Q(d,y_2) \vee R(y_1)
	\end{array}$
	}
where the variable $u_1$ ranges over a freely selectable domain, $x_i$, $y_i$ are variables over the integers, and the constant
$d$ addresses an element of the same domain that $u_1$ ranges over. All occurring variables are implicitly universally quantified.
Our main result reveals that this clause set is satisfiable if and only if a finite set of ground instances is satisfiable in which
(i) $u_1$ is being instantiated with the constant $d$,
(ii) $x_2$ and $y_2$ are being instantiated with the (abstract) integer values
$5+1$ and $-\infty$, and
(iii) $x_1$ and $y_1$ are being instantiated with $-\infty$ only. 
The instantiation does not need to consider the constraints $y_1<7$, $y_2\leq 2$, because it is sufficient to explore the integers either from
$-\infty$ upwards---in this case upper bounds on integer variables can be ignored---or from $+\infty$ downwards---ignoring lower bounds---, as is similarly done in linear quantifier elimination over the reals~\cite{Loos1993}. 
Moreover, instantiation does not need
to consider the value $5+1$ for $x_1$ and $y_1$, motivated by the fact that the argument $x_1$ of $R$ is not affected by the constraint $x_2 \neq 5$.

The abstract values $-\infty$ and $+\infty$ are represented by Skolem constants over the integers, together with defining axioms. 
For the example, we introduce the fresh Skolem constant $c_{-\infty}$ to represent $-\infty$ (a ``sufficiently small'' value)
together with the axiom $c_{-\infty} < 2$, where $2$ is the smallest occurring constant.
Eventually, we obtain the ground clause set\\
\centerline{
	$\begin{array}{r@{\;}r@{\;\;\rightarrow\;\;}l}
		5+1 \neq 5 					& \wedge\; R(c_{-\infty}) 		& Q(d,5+1)\\
		c_{-\infty}\neq 5 				& \wedge\; R(c_{-\infty}) 		& Q(d,c_{-\infty})\\
		c_{-\infty}<7 \wedge 5+1\leq 2 		& 					& Q(d,5+1) \vee R(c_{-\infty})\\
		c_{-\infty}<7 \wedge c_{-\infty}\leq 2 	& 					& Q(d,c_{-\infty}) \vee R(c_{-\infty})\\
		\multicolumn{3}{c}{c_{-\infty} < 2}
	\end{array}$
}
which has the model $\cA$ with
$c_{-\infty}^{\cA} = 1$, $R^{\cA} = \{1\}$, $Q^{\cA} = \{(d,6), (d,1)\}$.

After developing our instantiation methodology in Section~\ref{section:DecidingBSRwithSLIC},
we show in Sections~\ref{section:StratifiedClauseSets} that our instantiation methods are also compatible with uninterpreted functions and additional background theories under certain syntactic restrictions.
These results are based on an (un)satifiability-preserving embedding of uninterpreted functions into BSR clauses.
There are interesting known logic fragments that fall into this syntactic category: many-sorted clause sets over \emph{stratified vocabularies} \cite{Abadi2010,Korovin2013}, the \emph{array property fragment} \cite{Bradley2006}, and the \emph{finite essentially uninterpreted fragment}, possibly extended with simple integer arithmetic~\cite{Ge2009}.
Consequently, reasoning procedures for these fragments that employ forms of instantiation may benefit from our findings.
The paper ends with a discussion in Section~\ref{section:conclusion}, where we consider the impact of our results on automated reasoning procedures for our and similar logic fragments and outline possible further improvements.

In order to facilitate smooth reading, lengthy technical proofs are only sketched in the main text and presented in full in the appendix. 
The present paper is an extended version of~\cite{VoigtCADE2017}.


\section{Preliminaries}\label{section:preliminaries}

Hierarchic combinations of first-order logic with background theories build upon sorted logic with equality \cite{Bachmair1994b,Baumgartner2013}. We instantiate this framework with the BSR fragment and linear arithmetic over the integers as the \emph{base theory}. The \emph{base sort $\cZ$} shall always be interpreted by the integers $\Int$. For simplicity, we restrict our considerations to a single \emph{free sort $\cS$}, which may be freely interpreted as some nonempty domain, as usual. 

We denote by $V_\cZ$ a countably infinite set of base-sort variables.
\emph{Linear integer arithmetic (LIA) terms} are build from integer constants $0, 1, -1, 2, -2, \ldots$, the operators $+, -$, and the variables from $V_\cZ$.
We moreover allow base-sort constant symbols whose values have to be determined by an interpretation (\emph{Skolem constants}). 
They can be conceived as existentially quantified.
The LIA constraints we consider are of the form $s \triangleleft t$, where $\triangleleft \in \{<, \leq, \equals, \not\equals, \geq, >\}$ and $s$ and $t$ are either LIA 
variables or ground LIA terms. 

In order to hierarchically extend the base theory by the BSR fragment, we introduce the free sort $\cS$, a countably infinite set $V_\cS$ of \emph{free-sort variables}, 
a finite set $\Omega$ of \emph{free (uninterpreted) constant symbols of sort $\cS$} and a finite set $\Pi$ of \emph{free predicate symbols} equipped with sort information. 
Note that every predicate symbol in $\Pi$ has a finite, nonnegative arity and can have a mixed sort over the two sorts $\cZ$ and $\cS$, e.g.\ $P : \cZ \times \cS \times \cZ$.
We use the symbol $\approx$ to denote the built-in equality predicate on $\cS$. 
To avoid confusion, we tacitly assume that no constant or predicate symbol is overloaded, i.e.\ they have a unique sort.

\begin{definition}[BSR\hspace{-0.2ex} with\hspace{-0.2ex} Simple\hspace{-0.2ex} Linear\hspace{-0.2ex} Integer\hspace{-0.2ex} Constraints--\BsrSli]\label{definition:BSRwithConstrSyntax}	
		A \emph{\BsrSli\ clause} has the form $\Lambda \,\|\, \Gamma \to \Delta$, where $\Lambda$, $\Gamma$, $\Delta$ are multisets of atoms satisfying the following conditions.
		\begin{enumerate}[label=(\roman{*}), ref=(\roman{*})]
			\item Every atom in $\Lambda$ is a LIA constraint of the form $s\triangleleft t$ or $x\triangleleft t$ or $x \trianglelefteq y$ 
                              where $s, t$ are ground, $\triangleleft \,{\in} \{<,\leq,\equals,\not\equals,\geq,>\}$,  and $\trianglelefteq {\in} \{\leq,\equals,\geq\}$,
			\item Every atom in $\Gamma$ and $\Delta$ is either an equation $s\approx s'$ with $s,s' \in \Omega \cup V_\cS$, 
                              or a non-equational atom $P(s_1, \ldots, s_m)$, 
				where
				every $s_i$ of sort $\cZ$ must be a variable $x\in V_\cZ$, and
				every $s_i$ of sort $\cS$ may be a variable $u\in V_\cS$ or a constant symbol $c\in\Omega$.
		\end{enumerate}
\end{definition}

We omit the empty multiset left of ``$\to$'' and denote it by $\Box$ right of ``$\to$'' 
(where $\Box$ at the same time stands for \emph{falsity}).
The clause notation separates arithmetic constraints from the \emph{free} (also: \emph{uninterpreted}) part. 
We use the vertical double bar ``$\|$'' to indicate this separation syntactically.  
Intuitively, clauses $\Lambda \,\|\, \Gamma \to \Delta$ can be read as 
$\bigl(\bigwedge\Lambda \wedge \bigwedge\Gamma\bigr) \to \bigvee\Delta$, i.e.\ the multisets $\Lambda, \Gamma$ 
stand for conjunctions of atoms and $\Delta$ stands for a disjunction of atoms. 

Requiring the free part $\Gamma\to\Delta$ of clauses to not contain any base-sort terms apart from variables does not limit expressiveness. 
Every base-sort term $t \not\in V_\cZ$ in the free part can safely be replaced by a fresh base-sort variable $x_t$ when an atomic constraint $x_t \equals t$ is added to the constraint part of the clause (a process known as \emph{purification} or \emph{abstraction}~\cite{Bachmair1994b,Kruglov2012}).

A \emph{hierarchic interpretation} is an algebra $\cA$ which interprets the base sort $\cZ$ as $\cZ^\cA = \Int$, assigns integer values to all occurring base-sort Skolem constants, and interprets all LIA terms and constraints in the standard way. 
Moreover, $\cA$ comprises a nonempty domain $\cS^\cA$, assigns to each free-sort constant symbol $c$ in $\Omega$ a domain element $c^\cA \in \cS^\cA$, and interprets every sorted predicate symbol $P\!:\!\xi_1\times\ldots\times\xi_m$ in $\Pi$ by a set $P^\cA\subseteq \xi_1^\cA\times\ldots\times\xi_m^\cA$, as usual. 

Given a hierarchic interpretation $\cA$ and a sort-respecting \emph{variable assignment} $\beta: V_\cZ\cup V_\cS \to \cZ^\cA \cup \cS^\cA$, we write $\cA(\beta)(s)$ to address the \emph{value of the term $s$ under $\cA$ with respect to the variable assignment $\beta$}. 
The variables occurring in clauses are implicitly universally quantified. 
Therefore, given a clause $C$, we call $\cA$ a \emph{hierarchic model of $C$}, denoted $\cA\models C$, if and only if $\cA,\beta\models C$ holds for every variable assignment $\beta$.
For clause sets $N$, $\cA\models N$ holds if and only if $\cA \models C$ holds true for every clause $C \in N$.
We call a clause $C$ (a clause set $N$) \emph{satisfiable} if and only if there exists a hierarchic model $\cA$ of $C$ (of $N$). 
Two clauses $C,D$ (clause sets $N,M$) are \emph{equisatisfiable} if and only if $C$ ($N$) is satisfiable whenever $D$ ($M$) is satisfiable and vice versa.


Given a \BsrSli\ clause $C$, 
	$\consts(C)$ denotes the set of all constant symbols occurring in $C$. 
	The set $\bconsts(N)$ ($\fconsts(N)$) is the restriction of $\consts(N)$ to base-sort (free-sort) constant symbols. 
	By $\vars(C)$ we denote the set of all variables occurring in $C$.
	Similar notation is used for other syntactic objects.

We define \emph{substitutions} $\sigma$ in the standard way as sort-respecting mappings from variables to terms. 
The \emph{restriction of the domain of a substitution $\sigma$ to a set $V$ of variables} is denoted by $\sigma|_V$ and is defined such that $v\sigma|_V := v\sigma$ for every $v\in V$ and $v\sigma|_V = v$ for every $v\not\in V$. 
While the application of a substitution $\sigma$ to terms, atoms and multisets thereof is defined as usual, we need to be more specific for clauses. 
Consider a \BsrSli\ clause $C := \Lambda \,\|\, \Gamma \to \Delta$ and let $x_1, \ldots, x_k$ denote all base-sort variables occurring in $C$ for which $x_i\sigma \neq x_i$. We then set $C\sigma :=\; \Lambda\sigma, x_1 \equals x_1\sigma,\, \ldots,\, x_k \equals x_k\sigma \,\|\, \Gamma\sigma|_{V_\cS} \rightarrow \Delta\sigma|_{V_\cS}$.

A term, atom, etc.\ is called \emph{ground}, if it does not contain any variables. 
A \BsrSli\ clause $C$ is called \emph{essentially ground} if it does not contain free-sort variables and for every base-sort variable $x$ occurring in $C$ there is a constraint $x\equals t$ in $C$ for some ground LIA term $t$. 
A clause set $N$ is \emph{essentially ground} if all the clauses it contains are essentially ground.

\begin{definition}[Normal Form of \BsrSli\ Clauses]\label{definition:BSRwithConstrNormalform}
A \BsrSli\ clause $\Lambda \,\|\, \Gamma \to \Delta$ is in \emph{normal form} if
	\begin{enumerate}[label=(\arabic{*}), ref=(\arabic{*})]
		\item\label{enum:BSRwithConstrNormalform:I} all non-ground atoms in $\Lambda$ have the form $x \trianglelefteq c$ or $x \leq y$ 
                  (or their symmetric variants) where $c$ is an integer or Skolem constant and $\trianglelefteq {\in} \{\leq,\equals,\geq\}$, 
		\item\label{enum:BSRwithConstrNormalform:IV} all base-sort variables that occur in $\Lambda$ also occur in $\Gamma \to \Delta$, and
		
		\item\label{enum:BSRwithConstrNormalform:V} $\Gamma$ does not contain any equation of the form $u \approx t$. 
	\end{enumerate}
A \BsrSli\ clause set $N$ is in \emph{normal form} if all clauses in $N$ are in normal form and pairwise variable disjoint.
Moreover, we assume that $N$ contains at least one free-sort constant symbol.
\end{definition}

\begin{lemma}
	For every \BsrSli\ clause set $N$ there is an equisatisfiable \BsrSli\ clause set $N'$ in normal form.
\end{lemma}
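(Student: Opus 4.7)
The plan is to convert $N$ into an equisatisfiable clause set $N'$ by a modular sequence of equivalence-preserving clause rewrites that address each condition of Definition~\ref{definition:BSRwithConstrNormalform} in turn. First, I would rename the variables of every clause apart and, if $N$ mentions no free-sort constant, adjoin a fresh $d\in\Omega$; this preserves satisfiability because $\cS^\cA$ is required to be nonempty and because clauses are independently universally closed.

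For condition~\ref{enum:BSRwithConstrNormalform:V}, I would iteratively remove every equation $u\approx t$ in $\Gamma$ whose left side is a free-sort variable: if $t$ is another variable, substitute one for the other throughout the clause and drop the equation; if $t$ is a constant $c$, substitute $c$ for $u$ throughout and drop the equation. Each step is equivalence-preserving since $\forall u.\,(u\approx t \wedge \varphi \to \psi) \leftrightarrow \forall u.\,(\varphi[t/u] \to \psi[t/u])$ (after renaming when $t$ is a variable), and it strictly decreases the number of offending equations.

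To achieve condition~\ref{enum:BSRwithConstrNormalform:I}, I would normalize each non-ground atom of $\Lambda$ in four small passes. First, I would abstract every non-constant ground LIA subterm $t$ by a fresh Skolem constant $c_t$ with the ground defining equation $c_t\equals t$ added to $\Lambda$. Second, I would eliminate variable equalities $x\equals y$ by unification. Third, I would split each clause on a disequality $x\neq c$ into two copies carrying $x\leq c_-$ and $x\geq c_+$ respectively, where $c_\pm$ are fresh Skolem constants axiomatized by $c_-\equals c-1$ and $c_+\equals c+1$; this is justified over $\Int$ by $x\neq c \leftrightarrow x\leq c-1 \vee x\geq c+1$. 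Fourth, the remaining non-ground strict inequalities $x<c$ and $x>c$ are rewritten into $x\leq c_-$ and $x\geq c_+$ in the same spirit.

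Finally, condition~\ref{enum:BSRwithConstrNormalform:IV} is attained by Fourier--Motzkin quantifier elimination on each orphan base-sort variable $x$. Writing $\Lambda=\Lambda_x\uplus\Lambda'$, where $\Lambda_x$ collects the atoms on $x$, the clause is equivalent to $\varphi,\Lambda' \,\|\, \Gamma\to\Delta$ for a quantifier-free $\varphi$ equivalent to $\exists x.\,\Lambda_x$. After the previous passes, $\Lambda_x$ contains only atoms $x\leq t$, $x\geq t$, or $x\equals t$ with $t$ a constant or variable, so QE reduces to substitution on any equality followed by pairwise combination of upper and lower bounds into atoms $\ell\leq u$, which again lie in the permitted shapes of Definition~\ref{definition:BSRwithConstrSyntax} and condition~\ref{enum:BSRwithConstrNormalform:I}. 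The main point requiring care will be the compatibility of the passes: I need to check that abstraction only adds ground atoms, that splitting and strict-inequality rewriting never reintroduce offending equations in $\Gamma$, and that QE introduces no new variables and no atoms outside the permitted shapes. With these invariants, a termination measure such as the number of orphan variables plus the number of non-normal atoms strictly decreases with every rewrite, so the procedure terminates and yields the desired $N'$.
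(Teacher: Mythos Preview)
Your overall plan matches the paper's proof: handle Condition~\ref{enum:BSRwithConstrNormalform:V} by substituting away antecedent equations $u\approx t$, handle Condition~\ref{enum:BSRwithConstrNormalform:I} by replacing compound ground terms with fresh Skolem constants, splitting on disequalities, and rewriting strict inequalities, and handle Condition~\ref{enum:BSRwithConstrNormalform:IV} by Fourier--Motzkin elimination of orphan base-sort variables. The organization differs only cosmetically (the paper splits $x\not\equals s$ into $x<s$ and $x>s$ first and then removes the strict bounds), and your treatment of the preamble, of Condition~\ref{enum:BSRwithConstrNormalform:V}, and of the QE step is fine.

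There is, however, one genuine gap in your first pass for Condition~\ref{enum:BSRwithConstrNormalform:I}. You write that you ``abstract every non-constant ground LIA subterm $t$ by a fresh Skolem constant $c_t$ with the ground defining equation $c_t\equals t$ added to~$\Lambda$.'' Adding $c_t\equals t$ to the constraint part of the \emph{same} clause does not pin down the value of $c_t$: $\Lambda$ is a guard, so the clause $c_t\equals t,\; x\leq c_t,\;\Lambda' \,\|\, \Gamma\to\Delta$ is vacuously satisfied under any interpretation with $c_t^{\cA}\neq t^{\cA}$. This breaks the ``new set satisfiable $\Rightarrow$ old set satisfiable'' direction of equisatisfiability. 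Concretely, the set $\{\, x\leq 5{+}3 \,\|\, P(x)\to\Box,\;\; z\equals 0 \,\|\, {\to}\, P(z)\,\}$ is unsatisfiable, but your transform of the first clause is satisfied by any model with $c_t^{\cA}\neq 8$ and $P^{\cA}=\{0\}$. The fix is exactly what you already do for $c_{\pm}$ in your third pass: add a \emph{separate} unit clause $c_t\not\equals t \,\|\, {\to}\,\Box$ that forces $c_t^{\cA}=t^{\cA}$, as the paper does. With that correction the argument goes through.
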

\begin{proof}[Proof sketch]
	We go through the conditions of Definition~\ref{definition:BSRwithConstrNormalform}.
	\paragraph{Ad~\ref{enum:BSRwithConstrNormalform:I}.}
		Clauses of the form $x \not\equals s, \Lambda' \;\|\; \Gamma \to \Delta$ can be equivalently replaced by two clauses $x < s, \Lambda' \;\|\; \Gamma \to \Delta$ and $x > s, \Lambda' \;\|\; \Gamma \to \Delta$.
		
		Clauses of the form $x \trianglelefteq s, \Lambda' \;\|\; \Gamma \to \Delta$, where $s$ is ground but not a constant symbol and where $\triangleleft {\in} \{ \leq, \equals, \geq \}$, can be replaced---under preservation of (un)satisfiability---by two clauses \mbox{$s \neq c \,\|\, \rightarrow \Box$} and $x \trianglelefteq c, \Lambda' \;\|\; \Gamma \to \Delta$ for some fresh constant symbol $c$.
		
		Similarly, clauses of the form $x > s, \Lambda' \;\|\; \Gamma \to \Delta$ can be replaced---under preservation of (un)satisfiability---by two clauses $s+1 \neq c \;\|\; \to \Box$ and $x \geq c, \Lambda' \;\|\; \Gamma \to \Delta$ for some fresh constant symbol $c$. An analogous solution exists for constraints of the form $x < s$.
					
		Atoms of the form $x \equals y$ can be eliminated by replacing every occurrence of $y$ in the respective clause with $x$---also in the free part of the clause.
			
	\paragraph{Ad~\ref{enum:BSRwithConstrNormalform:IV}.}
		Consider a clause $\Lambda', \Lambda \;\|\; \Gamma \to \Delta$, where every atom in $\Lambda'$ contains a base-sort variable $x$ that does not occur in $\Lambda \;\|\; \Gamma \to \Delta$.
		We remove all atoms $x \not\equals t$ as done above.
		Moreover, we remove all trivial atoms $x \trianglelefteq x$ with $\trianglelefteq {\in} \{ \leq, \equals, \geq \}$ from $\Lambda'$ and partition the result into three parts $\Lambda'_1, \Lambda'_2, \Lambda'_3$ such that
			$\Lambda'_1$ contains exclusively atoms of the form $t < x$ and $t \leq x$,
			$\Lambda'_2$ contains exclusively atoms of the form $x \equals t$,
			$\Lambda'_3$ contains exclusively atoms of the form $x < t$ and $x \leq t$,
		and $t$ stands for some ground base-sort term or some base-sort variable.
		Let $\Lambda''$ be the following set of atoms
			\begin{align*}
				\Lambda'' := \;
							&\Bigl\{ t < t' 		\Bigm| (t < x) \in \Lambda'_1 \text{ and } (x \trianglelefteq t') \in \Lambda'_2 \cup \Lambda'_3 \text{ with } \trianglelefteq {\in} \{\leq, \equals\} \Bigr\} \\
							&\cup \Bigl\{ t < t' 		\Bigm| (t \trianglelefteq x) \in \Lambda'_1 \cup \Lambda'_2 \text{ and } (x < t') \in \Lambda'_3 \text{ with } \trianglelefteq {\in} \{\leq, \equals\} \Bigr\} \\
							&\cup \Bigl\{ t+1 < t' 	\Bigm| (t < x) \in \Lambda'_1 \text{ and } (x < t') \in \Lambda'_3 \Bigr\} \\
							&\cup \Bigl\{ t \leq t' 	\Bigm| (t \leq x) \in \Lambda'_1 \text{ and } (x \trianglelefteq t') \in \Lambda'_2 \cup \Lambda'_3 \text{ with } \trianglelefteq {\in} \{\leq, \equals\} \Bigr\} \\
							&\cup \Bigl\{ t \leq t' 	\Bigm| (x \equals t) \in \Lambda'_2 \text{ and } (x \leq t') \in \Lambda'_3 \Bigr\} \\
							&\cup \Bigl\{ t \equals t'	\Bigm| (x \equals t) \in \Lambda'_2 \text{ and } (x \equals t') \in \Lambda'_2 \Bigr\} ~.
			\end{align*}				
		We replace the clause $\Lambda', \Lambda \;\|\; \Gamma \to \Delta$ by the equivalent one \mbox{$\Lambda'', \Lambda \;\|\; \Gamma \to \Delta$}.

	\paragraph{Ad~\ref{enum:BSRwithConstrNormalform:V}.}
		Clauses of the form $\Lambda \;\|\; u \approx t, \Gamma \to \Delta$ can be equivalently replaced by $(\Lambda \;\|\; \Gamma \to \Delta)\subst{u/t}$, where every occurrence of $u$ is substituted with $t$.
\end{proof}


\section{Instantiation for \BsrSli}\label{section:DecidingBSRwithSLIC}

In this section, we present and prove our main technical result: 
\begin{theorem}\label{theorem:DecidabilityOfBSRwithSimpleBounds}
	Satisfiability of a finite \BsrSli\ clause set $N$ is decidable.
\end{theorem}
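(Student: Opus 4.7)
The plan is to reduce satisfiability of $N$ to that of a finite set of ground \BsrSli\ clauses, which can then be decided by a standard linear arithmetic decision procedure combined with propositional reasoning. By the preceding lemma I may assume that $N$ is already in normal form, so every non-ground constraint atom in $\Lambda$ has the particularly simple shape $x \trianglelefteq c$ or $x \leq y$, and every base-sort variable in $\Lambda$ also appears in $\Gamma \to \Delta$. The first step is to extract a finite set of instantiation candidates. For free-sort variables I would use $\fconsts(N)$, which is finite and nonempty by the normal form convention. For base-sort variables I would collect the ground LIA terms already occurring in $N$, augment them with ``shifted'' variants such as $s+1$ (to witness disequation atoms of the form $x \neq s$), and introduce one fresh Skolem constant $c_{-\infty}$ together with defining axioms $c_{-\infty} < c$ for every integer or base-sort constant $c$ of $N$. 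Intuitively, $c_{-\infty}$ plays the role of the ``sufficiently small'' value familiar from linear quantifier elimination.

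Next I form the finite ground instance set $\hN$ by substituting, clause by clause, every free-sort variable by an element of $\fconsts(N)$ and every base-sort variable by a term from the finite base-sort candidate set. The refinement pointed to by the introductory example is that the candidates actually needed for a particular variable $x$ depend on the atoms in which $x$ occurs: a variable constrained by $x \neq s$ must be instantiated with the witness $s+1$, whereas a variable appearing only on the upper-bound side of $\Lambda$ can be covered by $c_{-\infty}$ alone. This economy is not essential for decidability, but it does keep $\hN$ small; for the correctness argument one can afford to overapproximate and take all combinations.

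The core of the proof is the equisatisfiability claim for $N$ and $\hN$ together with the defining axioms. One direction is immediate: any hierarchic model $\cA$ of $N$ extends to a model of $\hN$ by choosing $c_{-\infty}^\cA$ strictly below every ground bound of $N$. The main obstacle, and the technical heart of the paper, is the converse: starting from a model $\cA$ of $\hN$ I must build a hierarchic model $\cB$ of the original universally quantified clause set $N$. My plan is to preserve the integer universe, the domain $\cS^\cA$, and the interpretations of all Skolem and free-sort constants, and to redefine each uninterpreted predicate $P \in \Pi$ via a finitary projection $\pi$ on its base-sort arguments: an integer argument strictly below every ground bound of $N$ is projected to $c_{-\infty}^\cA$; an integer argument equal to a critical ground term $s$ is projected to $s^\cA$; and the remaining integers are projected to the nearest canonical witness on the appropriate side. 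The verification then amounts to showing that for every clause $C \in N$ and every variable assignment $\beta$, the projected assignment $\pi \circ \beta$ takes values in the finite candidate set and satisfies exactly those atoms of $\Lambda$ that $\beta$ does. This is precisely where the syntactic restriction of $\Lambda$ to atoms of the form $x \trianglelefteq c$ and $x \leq y$ pays off: each such constraint is invariant under an order-faithful projection to a finite grid that contains all relevant ground bounds, while more general linear constraints would not be.

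Once equisatisfiability is in hand, the decision procedure is immediate. The set $\hN$ is finite and essentially ground; its satisfiability is a standard ground problem in quantifier-free linear integer arithmetic combined with propositional reasoning over the free uninterpreted atoms, which is decidable. The anticipated hard part of the proof is not the construction of $\hN$ but the model transformation $\cA \mapsto \cB$ above, since it must be checked uniformly against every clause and every variable assignment while relying only on the simple shape of the constraints guaranteed by the normal form.
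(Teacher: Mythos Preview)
Your overall strategy matches the paper's: normalize, compute a finite instantiation set, form the essentially ground set $\hN$, and for the hard direction construct a model $\cB$ of $N$ from a model $\cA$ of $\hN$ by redefining each uninterpreted predicate through a projection of its integer arguments onto a finite grid. The paper carries this out one variable at a time in Lemmas~\ref{lemma:EquisatisfiabilityBaseInstantiation} and~\ref{lemma:EquisatisfiabilityFreeInstantiation}. Your choice of grid---all occurring ground terms plus $c_{-\infty}$, with a single global projection---is coarser than the paper's position-specific, lower-bounds-only sets $\cI_{\downcl\<P,i\>}$, but that affects only the size of $\hN$, not decidability. (The $s{+}1$ shifts you mention for $x\neq s$ are unnecessary once $N$ is in normal form, since such atoms no longer occur.)

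There is, however, a genuine gap in your verification step. You claim that $\pi\circ\beta$ ``satisfies exactly those atoms of $\Lambda$ that $\beta$ does'' and that constraints $x\trianglelefteq c$ and $x\leq y$ are ``invariant'' under an order-faithful projection. This is false. Take $x\leq c$ (or $x=c$) with $c$ a grid point but $c{+}1$ not: if $\beta(x)=c^\cA{+}1$ the constraint is violated by $\beta$ yet satisfied by $\pi\circ\beta$, since both $c^\cA$ and $c^\cA{+}1$ lie in the interval whose representative is $c^\cA$. Likewise, $x\leq y$ can be violated by $\beta$ but satisfied after projection when $\beta(x)$ and $\beta(y)$ fall into the same interval. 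The projection is only weakly order-preserving; truth of individual constraint atoms is not preserved in both directions.

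What actually makes the proof go through is an \emph{asymmetric} argument, and this is exactly what the paper does in the proof of Lemma~\ref{lemma:EquisatisfiabilityBaseInstantiation}: for each clause $C=\Lambda\,\|\,\Gamma\to\Delta$ and each $\beta$, one analyses \emph{why} $\cA,\beta_\pi\models C$ and shows that $\cB,\beta\models C$ for the same reason. If a constraint in $\Lambda$ is violated under $\beta_\pi$, one checks it is also violated under $\beta$---this direction \emph{does} hold, because projecting to the left endpoint of an interval never moves a value past a grid point from below. If instead a free atom in $\Gamma$ or $\Delta$ witnesses $\cA,\beta_\pi\models C$, the definition of $P^\cB$ via $\pi$ transfers the witness directly. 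The converse implication on constraint atoms is neither true nor needed. Once you replace the bidirectional claim by this one-sided case analysis, your argument becomes essentially the paper's.
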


In essence, one can show that $N$ is equisatisfiable to a finite set of essentially ground clauses (cf.\ Lemma~\ref{lemma:EquisatisfiabilityBaseInstantiation}). 
There are calculi, such as hierarchic superposition \cite{Bachmair1994b,Althaus2009,Kruglov2012,Fietzke2012,Baumgartner2013} or DPLL(T) \cite{NieuwenhuisEtAl06}, that can decide satisfiability of ground clause sets.
Our decidability result for \BsrSli\ does not come as a surprise, given the similarity to other logic fragments that are known to be decidable, such as the \emph{array property fragment} by Bradley, Manna, and Sipma \cite{Bradley2006,BradleyManna2007} and Ge and de Moura's \emph{finite essentially uninterpreted fragment} extended with simple integer arithmetic constraints \cite{Ge2009}.

More important than the obtained decidability result is the instantiation methodology that we employ, in particular for integer-sort variables.
Typically, decision procedures for the integer-indexed array property fragment are based on an exhaustive instantiation of universally quantified array indices with all the ground index terms that occur in the formula at hand (cf.\ the original approach \cite{Bradley2006,BradleyPhD2007} and standard literature \cite{BradleyManna2007,Kroening2016}). 
In more sophisticated approaches, only a \emph{relevant portion} of the occurring arithmetic terms is singled out before instantiation \cite{Ge2009}.

Our methodology will also be based on a concept of relevant terms, determined by connections between the arguments of predicate symbols 
and instantiation points that are propagated along these connections. 
This part of our method is not specific for the integers but can be applied to the free part of our language as well.
For integer variables, we investigate additional criteria to filter out unnecessary instances, 
inspired by the Loos--Weispfenning quantifier elimination procedure~\cite{Loos1993}.
 We elaborate on this in Sections~\ref{section:BaseSortInstantiation} -- \ref{section:AvoidingNonlinearBlowups}.


\subsection{Instantiation of Integer Variables}\label{section:BaseSortInstantiation}

We first summarize the overall approach for the instantiation of integer variables in an intuitive way. To keep the informal exposition simple, we pretend that all LIA terms are constants from $\Int$. 
We even occasionally refer to the improper values $-\infty$ / $+\infty$ ---``sufficiently small/large'' integers. 
A formal treatment with proper definitions will follow.

Given a finite \BsrSli\ clause set $N$ in normal form, we intend to partition $\Int$ into a set $\cP$ of finitely many subsets $p \in \cP$ such that 
satisfiability of $N$ necessarily leads to the existence of a \emph{uniform} hierarchic model.
\begin{definition}[Uniform Interpretations]
	A hierarchic interpretation $\cA$ is \emph{uniform} with respect to a partition $\cP$ of the integers if and only if for every free predicate symbol $Q$ occurring in $N$, every part $p \in \cP$, and all integers $r_1, r_2 \in p$ we have $\<\ldots, r_1, \ldots\> \in Q^\cA$ if and only if $\<\ldots, r_2, \ldots\> \in Q^\cA$.
\end{definition}	
As soon as we have found such a finite partition $\cP$, we pick one integer value $r_p \in p$ as \emph{representative} from each and every part $p\in\cP$. 
Given a clause $C$ that contains a base-sort variable $x$, and given constant symbols $d_1, \ldots, d_{k}$ whose values cover all these representatives, i.e.\ $\{d_1^\cA, \ldots, d_k^\cA\} = \{r_p \mid p\in\cP\}$, we observe\\
\centerline{$
	\cA\models C \;\text{ if and only if }\; \cA\models \bigl\{C\subst{x/d_i} \bigm| 1\leq i \leq k \bigr\} ~.
$}
This equivalence claims that we can transform universal quantification over the integer domain into finite conjunction over all representatives of subsets in $\cP$. 
Formulated differently, we can extrapolate a model for a universally quantified clause set, if we can find a model of finitely many instances of this clause set.
The formal version of this statement is given in Lemma \ref{lemma:EquisatisfiabilityBaseInstantiation}. 
Uniform hierarchic models play a key role in its proof.

When we extract the partition $\cP$ from the given clause set $N$, we exploit three aspects to increase efficiency:
\begin{enumerate}[label=(E-\roman{*}), ref=(E-\roman{*})]
	\item\label{enum:EfficiencyForInstantiation:I} We group argument positions of free predicate symbols in such a way that the instantiation points relevant for these argument positions are identical.
		This means the variables that are associated to these argument positions, e.g.\ because they occur in such a place in some clause, need to be instantiated only with terms that are relevant for the respective group of argument positions.
		This is illustrated in Example~\ref{example:ConnectedArgumentPositions}.
		
	\item\label{enum:EfficiencyForInstantiation:II} Concerning the \emph{relevant} integer constraints, i.e. the ones that produce instantiation points, one can choose to either stick to lower bounds exclusively, use $-\infty$ as a default (the lowest possible lower bound), and ignore upper bounds.
		Alternatively, one can focus on upper bounds, use $+\infty$ as default, and ignore lower bounds.
		This idea goes back to the Loos--Weispfenning quantifier elimination procedure over the reals \cite{Loos1993}.
		Example~\ref{example:LoosWeispfenningBoundFiltering} gives some intuition.
	
	\item\label{enum:EfficiencyForInstantiation:III} The choice described under \ref{enum:EfficiencyForInstantiation:II} can be made independently for every integer variable  that is to be instantiated.
		See Examples~\ref{example:LoosWeispfenningBoundFiltering} and~\ref{example:EquisatisfiableEssentiallyGroundSet}.
\end{enumerate}

\begin{example}\label{example:ConnectedArgumentPositions}
	Consider the following clauses:\\
	\centerline{$
	\begin{array}{c@{\hspace{4ex}}rclcll}
		C_1 :=	&	1 \leq x_1, x_2 \leq 0		&\| 	&			&\to	&T(x_1), 	&Q(x_1, x_2) ~, \\
		C_2 :=	&	y_3 \leq 7,\; y_1 \leq y_3		&\|	&Q(y_1,y_2)		&\to	&R(y_3) ~, \\
		C_3 :=	&	6 \leq z_1				&\|	&T(z_1)		&\to	&\Box	~.
	\end{array}
	$}	
	The variables $x_1$, $x_2$, $y_1$, $y_2$, $y_3$, and $z_1$ are affected by the constraints in which they occur explicitly. 
	Technically, it is more suitable to speak of the \emph{argument position} $\<T,1\>$ instead of variables $x_1$ and $z_1$ that occur as the first argument of the predicate symbol $T$ in $C_1$ and $C_3$, respectively. 
	Speaking in such terms, argument position $\<T,1\>$ is directly affected by the constraints $1 \leq x_1$ and $6 \leq z_1$, argument position $\<Q,1\>$ is directly affected by $1 \leq x_1$ and $y_1 \leq y_3$, $\<Q,2\>$ is affected by $x_2 \leq 0$, and, finally, $\<R,1\>$ is affected by $y_3 \leq 7$ and $y_1 \leq y_3$.
	Besides such direct effects, there are also indirect effects that have to be taken into account.
	For example, the argument position $\<Q,1\>$ is indirectly affected by the constraint $6 \leq z_1$, because $C_1$ establishes a connection between argument positions $\<T,1\>$ and $\<Q,1\>$ via the simultaneous occurrence of $x_1$ in both argument positions and $\<T,1\>$ is affected by $6 \leq z_1$.
	This is witnessed by the fact that $C_1$ and $C_3$ together logically entail the clause $D := 6 \leq x, y \leq 0 \,\|\, \to Q(x,y)$.
	$D$ can be obtained by a hierarchic superposition step from $C_1$ and $C_3$, for instance.
	Another entailed clause is $6 \leq z, z \leq 7 \,\|\, \rightarrow R(z)$, the (simplified) result of hierarchically resolving $D$ with $C_2$.
	Hence, $\<R,1\>$ is affected by the constraints $6  \leq z$ and $z \leq 7$.
	Speaking in terms of argument positions, this effect can be described as propagation of the lower bound $6 \leq y_1$ from $\<Q,1\>$ to $\<R,1\>$ via the constraint $y_1 \leq y_3$ in $C_2$.
	\qed
\end{example}
One lesson learned from the example is that argument positions can be connected by variable occurrences or constraints of the form $x \leq y$. 
Such links in a clause set $N$ are expressed by the relation $\prop_N$.


\begin{definition}[Connections Between Argument Positions and Argument Position Closures]\label{definition:ConnectedArgumentPositions}
	Let $N$ be a \BsrSli\ clause set in normal form.
	We define $\prop_N$ to be the smallest preorder (i.e.\ a reflexive and transitive relation) over $\Pi\times\Nat$ such that
	$\<Q,j\> \prop_N \<P,i\>$ whenever there is a clause $\Lambda \,\|\, \Gamma \to \Delta$ in $N$ containing free atoms $Q(\ldots,u,\ldots)$ and $P(\ldots,v,\ldots)$ in which the variable $u$ occurs at the $j$-th and the variable $v$ occurs at the $i$-th argument position and 
			\begin{enumerate}[label=(\arabic{*}), ref=(\arabic{*})]	
				\item either $u = v$,
				\item or $u \neq v$, both are of sort $\cZ$ and there are constraints $u \equals v$ or $u \leq v$ in $\Lambda$,
				\item or $u \neq v$, both are of sort $\cS$ and there is an atom $u \approx v$ in $\Gamma$ or in $\Delta$.\footnote{\label{footnote:equationVariables}For any free-sort variable $v$  that occurs in a clause $(\Lambda \,\|\, \Gamma \rightarrow \Delta) \in N$ exclusively in equations, we pretend that $\Delta$ contains an atom $\False_v(v)$, for a fresh predicate symbol $\False_v : \cS$. This is merely a technical assumption. Without it, we would have to treat such variables $v$ as a separate case in all definitions. The atom $\False_v(v)$ is not added ``physically'' to any clause.}			
			\end{enumerate}	
	$\prop_N$ induces downward closed sets $\downcl_N\<P,i\>$ of argument positions, called \emph{argument position closures}:
		${\downcl}_N\<P,i\> := \bigl\{ \<Q,j\> \bigm| \<Q,j\> \prop_N \<P,i\> \bigr\}$.

	Consider a variable $v$ that occurs at the $i$-th argument position of a free atom $P(\ldots, v, \ldots)$ in $N$. We denote the argument position closure related to $v$'s argument position in $N$ by $\downcl_N(v)$, i.e.\ $\downcl_N(v) := \downcl_N\<P,i\>$. 
	If $v$ is a free-sort variable that exclusively occurs in equations, we set $\downcl_N(v) := \downcl\<\False_v,1\>$ (cf. footnote~\textsuperscript{\ref{footnote:equationVariables}}).
	To simplify notation a bit, we write $\prop$, $\downcl \<P,i\>$, and $\downcl(v)$ instead of $\prop_N$, $\downcl_N\<P,i\>$, and $\downcl_N(v)$, when the set $N$ is clear from the context.
\end{definition}
Notice that $\prop$ confined to argument position pairs of the free sort is always symmetric. Asymmetry is only introduced by atomic constraints $x \leq y$.

While the relation $\prop$ indicates how instantiation points are propagated between argument positions, the set $\downcl\<P,i\>$ comprises all argument positions from which  instantiation points are propagated to $\<P,i\>$.
For a variable $v$ the set $\downcl(v)$ contains all argument positions that may produce instantiation points for $v$.

\begin{remark}
	In order to make the propagation relation $\prop$ capture all relevant propagation channels for integer-valued instantiation points, it is vital that the clause set under consideration is in normal form.
	In particular, Condition \ref{enum:BSRwithConstrNormalform:IV} of Definition \ref{definition:BSRwithConstrNormalform} guarantees that every variable $x$ occurring in the constraint part $\Lambda$ of a \BsrSli\ clause $\Lambda \,\|\, \Gamma \rightarrow \Delta$ is associated with an argument position $\<P,i\>$, since $\Gamma$ or $\Delta$ must contain some non-equational atom $P(\ldots, x, \ldots)$.
	
	Moreover, transitivity of $\prop$ entails that two LIA constraints $x \leq y$, $y \leq z$ lead to $\<P,i\> \prop \<Q,j\>$, $\<Q,j\> \prop \<R,k\>$, and $\<P,i\> \prop \<R,k\>$, where $\<P,i\>$, $\<Q,j\>$, and $\<R,k\>$ are intended to be the argument positions associated with $x$, $y$, and $z$, respectively.
	On the other hand, two LIA constraints $x \leq c$, $c \leq y$, where $c$ is a Skolem constant, do not entail propagation of instantiation points from $\<P,i\>$ to $\<Q,j\>$.
	In such cases lower bounds do not have to be propagated for the following reasons. 
	If $y$ is assigned any value smaller than the value assigned to $c$, the constraint $c \leq y$ is violated and, therefore, the clause is satisfied. 
	The constraint $c \leq y$ directly leads to an instantiation point $c$ for $y$, as we shall see in the following definition.
\end{remark}

Next, we collect the instantiation points that are necessary to eliminate base-sort variables by means of finite instantiation. 
\begin{definition}[Instantiation Points for Base-Sort Argument Positions]\label{definition:BaseInstantiationPoints}
	Let $N$ be a \BsrSli\ clause set in normal form and let $P: \xi_1\times\ldots\times \xi_m$ be a free predicate symbol occurring in $N$. For every $i$ with $\xi_i = \cZ$ we define $\cI_{P,i}$ to be the smallest set satisfying the following condition.
	 We have $d\in\cI_{P,i}$ for any constant symbol $d$ for which there exists a clause $C$ in $N$ that contains an atom $P(\ldots, x, \ldots)$ in which $x$ occurs as the $i$-th argument and that contains a constraint $x\equals d$ or $x \geq d$.
\end{definition}

The most apparent peculiarity about this definition is that LIA constraints of the form $x \leq d$ are completely ignored when collecting instantiation points for $x$'s argument position. This is one of the aspects that makes this definition interesting from the efficiency point of view, because the number of instances that we have to consider might decrease considerably in this way. The following example may help to develop an intuitive understanding. 

\begin{example}\label{example:LoosWeispfenningBoundFiltering}
	Consider two clauses $C :=\; 3 \leq x,\, x \leq 5 \;\| \to T(x)$ and $D :=\; x \leq 0 \,\|\, T(x)\to\Box$. 
	Recall that we are looking for a finite partition $\cP$ of $\Int$ such that we can construct a uniform hierarchic model $\cA$ of $\{C, D\}$, i.e.\ for every subset $p\in \cP$ and all integers $r_1, r_2 \in p$ we want $r_1 \in T^\cA$ to hold if and only if $r_2 \in T^\cA$. 
	A natural candidate for $\cP$ is $\{ (-\infty,0], [1,2], [3,5], [6,+\infty) \}$, which takes every LIA constraint in $C$ and $D$ into account. 
	Correspondingly, we find the candidate model $\cA$ with $T^\cA = [3,5]$. Obviously, $\cA$ is uniform with respect to $\cP$.
	
	But there are other interesting possibilities, for instance, the more coarse-grained partition $\{(-\infty, 2], [3, +\infty)\}$ together with the predicate $T^\cA = [3, +\infty)$.
	This latter candidate partition completely ignores the constraints $x \leq 0$ and $x \leq 5$ that constitute upper bounds on $x$ and in this way induces a simpler partition. 
	Dually, we could have concentrated on the upper bounds instead (completely ignoring the lower bounds). 
	This would have led to the partition $\{ (-\infty, 0], [1, 5], [6, +\infty) \}$ and the candidate predicate $T^\cA = [1, 5]$ (or $T^\cA = [1, +\infty)$). 
	Both ways are possible, but the former yields a coarser partition and is thus more attractive, as it will cause fewer instances in the end.
	\qed
\end{example}
The example reveals quite some freedom in choosing an appropriate partition of the integers. A large number of parts directly corresponds to a large number of instantiation points---one for each interval---, and therefore leads to a large number of instances that need to be considered by a reasoning procedure. Hence, regarding efficiency, it is of great importance to keep the partition $\cP$ of $\Int$ coarse.

It remains to address the question of why it is sufficient to consider lower bounds only.
At this point, we content ourselves with an informal explanation.
	Let $\varphi(x)$ be a satisfiable $\wedge$-$\vee$-combination of upper and lower bounds on some integer variable $x$.
	For the sake of simplicity, we assume that every atom in $\varphi$ is of the form $c \leq x$ or $x \leq c$ with $c \in \Int$.
	When we look for some value of $x$ that satisfies $\varphi$, we start from some ``sufficiently small value'' $-\infty$.
	If $-\infty$ yields a solution for $\varphi$, we are done.
	If $[x \Mapsto {-\infty}] \not\models \varphi$, there must be some lower bound in $\varphi$ that prevents $-\infty$ from being a solution.
	In order to find a solution, we successively increase the value of $x$ until a solution is found.
	Interesting test points $r \in \Int$ for $x$ are those where $r-1$ violates some lower bound $c \leq x$ in $\varphi$ and $r$ satisfies the bound, i.e.\ $r = c$.
	Consider two lower bounds $c_1 \leq x$ and $c_2 \leq x$ in $\varphi$ such that $c_1 < c_2$ and $\varphi$ contains no further bound $d \leq x$ with $c_1 < d < c_2$.
	Any assignment $[x \Mapsto r]$ with $c_1 < r < c_2$ satisfies exactly the same lower bounds as the assignment $[x \Mapsto c_1]$ does.
	Moreover, any such assignment satisfies \emph{at most} the upper bounds that $[x \Mapsto c_1]$ satisfies.
	In fact, it may violate some of them.
	Consequently, if neither $[x \Mapsto c_1]$ nor $[x \Mapsto c_2]$ satisfy $\varphi$, then $[x \Mapsto r]$ with $c_1 < r < c_2$ cannot satisfy $\varphi$ either.
	In other words, it suffices to test only values induced by lower bounds.
	The abstract value $-\infty$ serves as the default value, which corresponds to the implicit lower bound $-\infty < x$.

\begin{definition}[Instantiation Points for Base-Sort Argument Position Closures and Induced Partition]\label{definition:BaseVariableInstantiationPoints}
	Let $N$ be a \BsrSli\ clause set in normal form and let $\cA$ be a hierarchic interpretation.
	For every base-sort argument position closure $\downcl \<P,i\>$ induced by $\prop$ we define the following:

	The set $\cI_{\downcl \<P,i\>}$ of \emph{instantiation points for $\downcl \<P,i\>$} is defined by \\
		\centerline{$\cI_{\downcl \<P,i\>} := \{c_{-\infty}\} \cup\, \bigcup_{\<Q,j\> \in \downcl \<P,i\>} \cI_{Q,j}$,}
	where we assume $c_{-\infty}$ to be a distinguished base-sort constant symbol that may occur in $N$.
		
	Let the sequence $r_1, \ldots, r_k$ comprise all integers in the set $\bigl\{c^\cA \bigm| c\in\cI_{\downcl\<P,i\>} \setminus \{c_{-\infty}\} \bigr\}$ ordered so that $r_1 < \ldots < r_k$.
	The partition $\cP_{\downcl\<P,i\>}^\cA$ of the integers into finitely many intervals is defined by\\
		\centerline{$\cP_{\downcl\<P,i\>}^\cA := \bigl\{ (-\infty, r_1-1], [r_1, r_2-1], \ldots, [r_{k-1}, r_k-1], [r_k, +\infty) \bigr\}$.}
\end{definition}
Please note that partitions as described in the definition do always exist, and do not contain empty parts.

\begin{lemma}\label{lemma:BaseVariableInstantiationPointsClosure}
	Let $N$ be a \BsrSli\ clause set in normal form and let $\cA$ be a hierarchic interpretation.
	Consider two argument position pairs $\<Q,j\>, \<P,i\>$ for which $\<Q,j\> \prop \<P,i\>$ holds in $N$.
	Then $\cI_{\downcl\<Q,j\>} \subseteq \cI_{\downcl\<P,i\>}$.
	Moreover, $\cP_{\downcl\<P,i\>}^\cA$ is a refinement of $\cP_{\downcl\<Q,j\>}^\cA$, i.e.\ for every $p \in \cP_{\downcl\<P,i\>}^\cA$ there is some $p' \in \cP_{\downcl\<Q,j\>}^\cA$ such that $p \subseteq p'$.
\end{lemma}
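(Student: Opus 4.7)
The proof is essentially a straightforward unfolding of definitions, driven by the transitivity of the preorder $\prop$. The plan is to first establish the inclusion $\downcl\<Q,j\> \subseteq \downcl\<P,i\>$ at the level of argument position closures, then lift it to instantiation points, and finally translate it into a refinement statement about the induced partitions.

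First I would verify that $\downcl\<Q,j\> \subseteq \downcl\<P,i\>$. Take any $\<R,k\> \in \downcl\<Q,j\>$. By definition this means $\<R,k\> \prop \<Q,j\>$. Combined with the hypothesis $\<Q,j\> \prop \<P,i\>$ and the transitivity of $\prop$ from Definition~\ref{definition:ConnectedArgumentPositions}, we obtain $\<R,k\> \prop \<P,i\>$, i.e.\ $\<R,k\> \in \downcl\<P,i\>$. Unfolding Definition~\ref{definition:BaseVariableInstantiationPoints}, we have
\[
\cI_{\downcl\<Q,j\>} \,=\, \{c_{-\infty}\} \cup \!\!\bigcup_{\<R,k\>\in\downcl\<Q,j\>}\!\! \cI_{R,k}
\;\subseteq\; \{c_{-\infty}\} \cup \!\!\bigcup_{\<R,k\>\in\downcl\<P,i\>}\!\! \cI_{R,k}
\;=\; \cI_{\downcl\<P,i\>},
\]
because the indexing set of the union on the left is contained in the one on the right.

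For the second claim, I would reason in terms of breakpoints. Let $B_X := \bigl\{c^\cA \bigm| c \in \cI_{\downcl X} \setminus \{c_{-\infty}\}\bigr\}$ for $X \in \{\<Q,j\>,\<P,i\>\}$. By the inclusion just shown, $B_{\<Q,j\>} \subseteq B_{\<P,i\>}$. Writing the elements of these finite integer sets in ascending order as $r_1 < \ldots < r_k$ and $r'_1 < \ldots < r'_\ell$ respectively, both partitions $\cP_{\downcl\<Q,j\>}^\cA$ and $\cP_{\downcl\<P,i\>}^\cA$ are formed exactly as specified in Definition~\ref{definition:BaseVariableInstantiationPoints}. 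Adding further breakpoints to a finite sequence only splits existing intervals into smaller contiguous pieces; more formally, for every interval $[r_s, r_{s+1}-1]$ of $\cP_{\downcl\<Q,j\>}^\cA$ (and analogously for the unbounded end intervals), the breakpoints from $B_{\<P,i\>}$ that fall inside subdivide this interval into consecutive members of $\cP_{\downcl\<P,i\>}^\cA$, each a subset of $[r_s, r_{s+1}-1]$. Hence $\cP_{\downcl\<P,i\>}^\cA$ refines $\cP_{\downcl\<Q,j\>}^\cA$.

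No real obstacle arises; the only degenerate case to mention is when $B_{\<Q,j\>} = \emptyset$, so that $\cP_{\downcl\<Q,j\>}^\cA = \{(-\infty,+\infty)\}$ and the refinement claim holds trivially. The argument uses nothing beyond transitivity of $\prop$ and the construction of the partition from its breakpoint set.
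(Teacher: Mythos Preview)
Your proof is correct. The paper states this lemma without proof, treating it as an immediate consequence of the definitions; your unfolding via transitivity of $\prop$ to obtain $\downcl\<Q,j\> \subseteq \downcl\<P,i\>$, followed by the breakpoint-inclusion argument for the refinement, is exactly the intended routine verification.
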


\begin{lemma}\label{lemma:PartitionRepresentatives}
	Let $N$ be a \BsrSli\ clause set in normal form and let $\cA$ be a hierarchic interpretation.
	For every part $p\in \cP_{\downcl\<P,i\>}^\cA$ of the form $p = [r_\ell, r_u]$ or $p = [r_\ell, +\infty)$ we find some constant symbol $c_{\downcl\<P,i\>, p} \in \cI_{\downcl\<P,i\>}$ with $c_{\downcl\<P,i\>, p}^\cA = r_\ell$.
\end{lemma}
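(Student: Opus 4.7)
The plan is to observe that this lemma is essentially a direct unpacking of Definition~\ref{definition:BaseVariableInstantiationPoints}. I would proceed by inspecting how the partition $\cP_{\downcl\<P,i\>}^\cA$ is built from the sorted sequence $r_1 < \ldots < r_k$ of integer values that are assumed by the constants in $\cI_{\downcl\<P,i\>} \setminus \{c_{-\infty}\}$ under $\cA$.

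First I would note that by the very definition of $\cP_{\downcl\<P,i\>}^\cA$, only the leftmost part $(-\infty, r_1 - 1]$ is unbounded from below, while all remaining parts have the shape $[r_j, r_{j+1} - 1]$ for some $j \in \{1, \ldots, k-1\}$, or the form $[r_k, +\infty)$. Consequently, every part $p$ of the form assumed in the lemma (i.e.\ $p = [r_\ell, r_u]$ or $p = [r_\ell, +\infty)$) has $r_\ell = r_j$ for some index $j \in \{1, \ldots, k\}$.

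Second, by the definition of the sequence $r_1, \ldots, r_k$, each $r_j$ lies in the set $\bigl\{c^\cA \bigm| c\in\cI_{\downcl\<P,i\>} \setminus \{c_{-\infty}\} \bigr\}$. Hence there exists a constant symbol $c \in \cI_{\downcl\<P,i\>} \setminus \{c_{-\infty\}} \subseteq \cI_{\downcl\<P,i\>}$ with $c^\cA = r_j = r_\ell$. Setting $c_{\downcl\<P,i\>, p} := c$ (choosing any such $c$ if several are available) yields the desired constant symbol.

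There is no real obstacle here beyond bookkeeping: the lemma is essentially a restatement of how the partition's interior breakpoints are chosen. The only care needed is to exclude the leftmost part $(-\infty, r_1-1]$, whose left ``endpoint'' is $-\infty$ and for which no finite integer value of a constant in $\cI_{\downcl\<P,i\>} \setminus \{c_{-\infty}\}$ is required---this part is implicitly covered by the distinguished default constant $c_{-\infty}$ in subsequent lemmas.
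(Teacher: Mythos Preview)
Your proposal is correct and matches the paper's treatment: the paper states this lemma without proof, as it is an immediate consequence of Definition~\ref{definition:BaseVariableInstantiationPoints}, and your argument is precisely the direct unpacking of that definition. (One typographical note: in your second paragraph the set braces around $c_{-\infty}$ are unbalanced; it should read $\cI_{\downcl\<P,i\>} \setminus \{c_{-\infty}\}$.)
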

Note that the lemma did not say anything about the part $(-\infty, r_u]$ which also belongs to every $\cP_{\downcl\<P,i\>}^\cA$.
Our intention is that the constant symbol $c_{-\infty}$ shall be interpreted by a value from this interval.
Hence, we add the set of clauses $\Psi^{-\infty}_{N} := \bigl\{ (c_{-\infty} \geq c \,\| \rightarrow \Box)  \bigm| c \in \bconsts(N) \setminus \{ c_{-\infty} \} \bigr\}$ whenever necessary.
Note that if $\cA$ is a hierarchic model of a given \BsrSli\ clause set $N$, then $\cA$ can be turned into a model of $\Psi^{-\infty}_{N}$ just by changing the interpretation of $c_{-\infty}$. After this modification $\cA$ is still a model of $N$, if $c_{-\infty}$ does not occur in $N$.

The next lemma shows that we can eliminate base-sort variables $x$ from clauses $C$ in a finite \BsrSli\ clause set $N$ by replacing $C$ with finitely many instances in which $x$ is substituted with the instantiation points that we computed for $x$. 
In addition, the axioms that stipulate the meaning of $c_{-\infty}$ need to be added. 
Iterating this instantiation step for every base-sort variable in $N$ eventually leads to a clause set that is essentially ground with respect to the constraint parts of the clauses it contains (free-sort variables need to be treated separately, of course, see Section~\ref{section:FreeSortInstantiation}).
\begin{lemma}[Finite Integer-Variable Elimination]\label{lemma:EquisatisfiabilityBaseInstantiation}
	Let $N$ be a finite \BsrSli\ clause set in normal form such that, if the constant symbol $c_{-\infty}$ occurs in $N$, then $\Psi^{-\infty}_{N} \subseteq N$.
	Suppose there is a clause $C$ in $N$ which contains a base-sort variable $x$. Let $\hN_x$ be the clause set 
		$\hN_x := \bigl(N\setminus\{ C \}\bigr)  \;\cup\; \bigl\{C\subst{x/c}	\;\bigm|\; c\in\cI_{\downcl_N(x)} \bigr\} \;\cup\; \Psi^{-\infty}_{N}$.
	$N$ is satisfiable if and only if $\hN_x$ is satisfiable. 
\end{lemma}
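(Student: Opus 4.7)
The plan is to prove both implications; the forward direction is straightforward, while the converse requires lifting a model of $\hN_x$ to a model of $N$ by modifying the interpretation of the free predicate symbols so that it becomes uniform with respect to the partitions from Definition~\ref{definition:BaseVariableInstantiationPoints}.

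For the direction ($\Rightarrow$), I would start from $\cA\models N$ and observe that, if $c_{-\infty}$ does not occur in $N$, we may freely re-interpret it as any integer strictly below $\min\{c^\cA\mid c\in\bconsts(N)\setminus\{c_{-\infty}\}\}$; if $c_{-\infty}$ already occurs in $N$, the hypothesis guarantees $\Psi^{-\infty}_{N}\subseteq N$ and so $\cA$ already satisfies these axioms. Each clause $C\subst{x/c}$ is a universal instance of $C$ at $[x\Mapsto c^\cA]$ and is therefore satisfied, while the remaining clauses in $N\setminus\{C\}$ stay in $\hN_x$ unchanged.

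For the direction ($\Leftarrow$), suppose $\cA\models\hN_x$. I would build a uniform companion interpretation $\cA'$ that agrees with $\cA$ on sorts, constants, and the LIA symbols, but redefines each free predicate $Q:\xi_1\times\ldots\times\xi_m$ so that $\<v_1,\ldots,v_m\>\in Q^{\cA'}$ iff $\<\hat v_1,\ldots,\hat v_m\>\in Q^{\cA}$. Here $\hat v_i=v_i$ whenever $\xi_i=\cS$, and whenever $\xi_i=\cZ$ the value $\hat v_i$ is the representative of the unique part of $\cP^\cA_{\downcl\<Q,i\>}$ containing $v_i$---either $c_{\downcl\<Q,i\>,p}^\cA$ from Lemma~\ref{lemma:PartitionRepresentatives}, or $c_{-\infty}^\cA$ for the leftmost part. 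Accordingly, to every variable assignment $\beta$ I associate an assignment $\hat\beta$ that sends each base-sort variable $v$ to the representative of the part of $\cP^\cA_{\downcl(v)}$ containing $\beta(v)$ and leaves free-sort variables untouched. Since all occurrences of a given variable lie in a single closure $\downcl(v)$, the representatives used to define $\cA'$ and $\hat\beta$ coincide, which yields the uniformity property $\cA'(\beta)\models A \iff \cA(\hat\beta)\models A$ for every free atom $A$ occurring in a clause of $N$.

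The crux is to establish $\cA'\models N$, which I would reduce to the following preservation lemma: for every clause $\Lambda'\,\|\,\Gamma'\to\Delta'$ in $N$ and every $\beta$, if $\cA(\beta)\models\Lambda'$ then $\cA(\hat\beta)\models\Lambda'$. I would prove this by case analysis over the normal-form shapes of atoms: ground constraints are unaffected; a lower bound $d\trianglelefteq v$ is preserved because the instantiation point $d\in\cI_{\downcl(v)}$ places $d^\cA$ on a partition boundary that $\hat\beta(v)$ respects; an upper bound $v\trianglelefteq d$ is preserved because $\hat\beta(v)$ is either some $r_\ell\leq\beta(v)\leq d^\cA$ or $c_{-\infty}^\cA$, the latter strictly below $d^\cA$ by $\Psi^{-\infty}_{N}$; and the delicate constraint $v\leq w$ between two base-sort variables uses the $\prop$-link $\<P,i\>\prop\<Q,j\>$ it induces, together with Lemma~\ref{lemma:BaseVariableInstantiationPointsClosure} which makes $w$'s partition refine $v$'s, and then a case split on whether $\beta(v)$ and $\beta(w)$ lie in the same $v$-part yields $\hat\beta(v)\leq\hat\beta(w)$. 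Once this preservation is in hand, uniformity combined with $\cA\models C'$ for $C'\neq C$, or with $\cA\models C\subst{x/c_p}$ where Lemma~\ref{lemma:PartitionRepresentatives} supplies a $c_p\in\cI_{\downcl(x)}$ satisfying $\hat\beta(x)=c_p^\cA$ for $C'=C$, transfers satisfaction from $\cA(\hat\beta)$ to $\cA'(\beta)$. The whole construction rests on the Loos--Weispfenning-style asymmetry of choosing lower endpoints as representatives, which is precisely what makes the ignored upper bounds fall into place automatically.
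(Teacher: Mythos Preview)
Your proposal is correct and follows essentially the same route as the paper. You build the same uniform model via the same projection functions $\pi_{\downcl\langle P,i\rangle}$, invoke Lemma~\ref{lemma:PartitionRepresentatives} and Lemma~\ref{lemma:BaseVariableInstantiationPointsClosure} at the same spots, and rely on the same Loos--Weispfenning asymmetry (left endpoints as representatives) to dispose of upper bounds. The only difference is organizational: you isolate an explicit ``preservation lemma'' $\cA(\beta)\models\Lambda' \Rightarrow \cA(\hat\beta)\models\Lambda'$ for the constraint part and combine it with uniformity for the free part, whereas the paper runs a single monolithic case distinction on why $\cA,\beta_\pi\models C'$ holds and, in each case, derives $\cB,\beta\models C'$ directly; the paper's cases are exactly the contrapositives of your preservation-lemma cases, so the two decompositions amount to the same work.
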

\begin{proof}[Proof sketch]
	The ``only if''-part is trivial.
			
	The ``if''-part requires a more sophisticated argument.
		In what follows, the notations $\prop$ and $\downcl$ always refer to the original clause set $N$.
		Let $\cA$ be a hierarchic model of $\hN_x$.
		We use $\cA$ to construct the hierarchic model $\cB \models N$ as follows.
		For the domain $\cS^\cB$ we reuse $\cA$'s free domain $\cS^\cA$. 
		For every base-sort or free-sort constant symbol $c\in \consts(N)$ we set $c^{\cB} := c^{\cA}$.
		For every predicate symbol $P:\xi_1\times\ldots\times\xi_m$ that occurs in $N$, for every argument position $i$, $1\leq i\leq m$, with $\xi_i = \cZ$, and for every interval $p\in \cP_{\downcl\<P,i\>}^\cA$ Lemma \ref{lemma:PartitionRepresentatives} and the extra clauses in $\Psi^{-\infty}_{N}$ guarantee the existence of a base-sort constant symbol $c_{\downcl\<P,i\>, p}\in \cI_{\downcl(x)}$, such that $c_{\downcl\<P,i\>, p}^\cA \in p$.
		Based on this observation, we define the family of projection functions $\pi_{\downcl\<P,i\>} : \Int\cup\cS^\cB \to \Int\cup\cS^\cA$ by\\
					\centerline{$
						 \pi_{\downcl\<P,i\>}(\fa) :=
							\begin{cases}
								c_{\downcl\<P,i\>, p}^\cA		&\text{if $\xi_i = \cZ$ and $p\in \cP_{\downcl\<P,i\>}^\cA$}\\
												&\text{is the interval $\fa$ lies in,} \\
								\fa				&\text{if $\xi_i = \cS$.}
							\end{cases}
					$}
		Using the projection functions $\pi_{\downcl\<P,i\>}$, we define the sets $P^\cB$ in such a way that for all domain elements $\fa_1, \ldots, \fa_m$ of appropriate sorts \\
		\centerline{$\bigl\<\fa_1, \ldots, \fa_m\bigr\> \in P^{\cB}$ if and only if $\bigl\<\pi_{\downcl\<P,1\>}(\fa_1), \ldots,$ $\pi_{\downcl\<P,m\>}(\fa_m)\bigr\> \in P^\cA$.}

		We next show $\cB\models N$. Consider any clause $C' := \Lambda' \,\|\, \Gamma' \to \Delta'$ in $N$ and let $\beta : V_\cZ\cup V_\cS\to \Int\cup \cS^\cB$ be some variable assignment. 
		From $\beta$ we derive a special variable assignment $\beta_\pi$ for which we shall infer $\cA,\beta_\pi \models C'$ as an intermediate step:
			$\beta_\pi(v) := \pi_{\downcl(v)}(\beta(v))$
		for every variable $v$.
		If $C' \neq C$, then $\hN_x$ already contains $C'$, and thus $\cA,\beta_\pi\models C'$ must hold.
		In case of $C' = C$, let $p_*$ be the interval in $\cP_{\downcl(x)}^\cA$ containing the value $\beta(x)$, and let $c_*$ be an abbreviation for $c_{\downcl(x), p_*}$. Due to $\beta_\pi(x) = c_*^\cA$ and since $\cA$ is a model of the clause $C\subst{x/c_*}$ in $\hN_x$, we conclude $\cA,\beta_\pi \models C$. 
		Hence, in any case we can deduce $\cA,\beta_\pi\models C'$. 
		By case distinction on why $\cA,\beta_\pi \models C'$ holds, we may use this result to infer $\cB,\beta\models C'$.
		It follows that $\cB\models N$.
\end{proof}

\subsection{Independent Bound Selection}\label{section:IndependentBoundSelection}

By now we have mainly focused on lower bounds as sources for instantiation points.
However, as we have already pointed out (cf.\ \ref{enum:EfficiencyForInstantiation:II} and~\ref{enum:EfficiencyForInstantiation:III} in Section~\ref{section:BaseSortInstantiation} and Example~\ref{example:LoosWeispfenningBoundFiltering}), there is also a dual approach in which upper bounds on integer variables play the central role.
It turns out that the choice between the two approaches can be made independently for every variable that is to be instantiated.
In the interest of efficiency, it makes sense to always choose the approach that results in fewer non-redundant instances or, more abstractly speaking, a set of instances whose satisfiability is easier to decide. 
Example~\ref{example:EquisatisfiableEssentiallyGroundSet} illustrates the overall approach.

Given a clause set $N$ in normal form, the relation $\prop_N$ is defined as before.
Dually to the sets $\downcl_N\<P,i\>$, we define the sets $\upcl_N\<P,i\> := \bigl\{ \<Q,j\> \bigm| \<P,i\> \prop_N \<Q,j\> \bigr\}$, which constitute \emph{upwards closed} sets with respect to $\prop_N$ rather than \emph{downwards closed} sets.
Regarding instantiation points, only LIA constraints $x = d$ and $x \leq d$ lead to $d \in \cI_{\upcl_N(x)}$.
In addition, $c_{+\infty}$ is by default added to every set $\cI_{\upcl_N\<P,i\>}$.
In order to fix the meaning of $c_{+\infty}$, we introduce the set of axioms $\Psi^{+\infty}_{N} := \bigl\{ (c_{+\infty} \leq c \,\| \rightarrow \Box)  \bigm| c \in \bconsts(N) \setminus \{ c_{+\infty} \} \bigr\}$.

The dual versions of Definitions~\ref{definition:BaseInstantiationPoints} and~\ref{definition:BaseVariableInstantiationPoints} and Lemma~\ref{lemma:EquisatisfiabilityBaseInstantiation} read as follows.

\begin{definition}[Dual Instantiation Points for Base-Sort Argument Positions]\label{definition:DualBaseInstantiationPoints}
	Let $N$ be a \linebreak \BsrSli\ clause set in normal form and let $P: \xi_1\times\ldots\times \xi_m$ be a free predicate symbol occurring in $N$. For every $i$ with $\xi_i = \cZ$ we define $\cI^{\text{dual}}_{P,i}$ to be the smallest set satisfying the following condition.
	 We have $d\in\cI^{\text{dual}}_{P,i}$ for any constant symbol $d$ for which there exists a clause $C$ in $N$ that contains an atom $P(\ldots, x, \ldots)$ in which $x$ occurs as the $i$-th argument and that contains a constraint $x\equals d$ or $x \leq d$.
\end{definition}

\begin{definition}[Dual Instantiation Points for Base-Sort Argument Position Closures and Induced Partition]\label{definition:DualBaseVariableInstantiationPoints}
	Let $N$ be a \BsrSli\ clause set in normal form and let $\cA$ be a hierarchic interpretation.
	For every base-sort argument position closure $\upcl \<P,i\>$ induced by $\prop$ we define the following:

	The set $\cI_{\upcl \<P,i\>}$ of \emph{instantiation points for $\upcl \<P,i\>$} is defined by \\
		\centerline{$\cI_{\upcl\<P,i\>} := \{c_{+\infty}\} \cup\, \bigcup_{\<Q,j\> \in \upcl\<P,i\>} \cI^\text{dual}_{Q,j}$.}
	
	Let the sequence $r_1, \ldots, r_k$ comprise all integers in the set $\bigl\{c^\cA \bigm| c\in\cI_{\upcl\<P,i\>} \setminus \{ c_{+\infty}\} \bigr\}$ ordered so that $r_1 < \ldots < r_k$.
	The partition $\cP_{\upcl\<P,i\>}^\cA$ of the integers into finitely many intervals is defined by\\
		\centerline{$\cP_{\upcl\<P,i\>}^\cA := \bigl\{ (-\infty , r_1], [r_1+1, r_2], \ldots, [r_{k-1}+1, r_k], [r_k+1, +\infty) \bigr\}$.}
\end{definition}

In the following lemma we refer to the set\\
\centerline{$\Psi^{+\infty}_{N} := \bigl\{ (c_{+\infty} \leq c \,\| \rightarrow \Box)  \bigm| c \in \bconsts(N) \setminus \{ c_{+\infty} \} \bigr\}$.}
\begin{lemma}\label{lemma:EquisatisfiabilityDualBaseInstantiation}
	Let $N$ be a finite \BsrSli\ clause set in normal form such that, if the constant symbol $c_{+\infty}$ occurs in $N$, then $\Psi^{+\infty}_{N} \subseteq N$.
	Suppose there is a clause $C$ in $N$ which contains a base-sort variable $x$. Let
		$\hN_x := \bigl(N\setminus\{ C \}\bigr)  \;\cup\; \bigl\{C\subst{x/c}	\;\bigm|\; c\in\cI_{\upcl_N(x)} \bigr\} \;\cup\; \Psi^{+\infty}_{N}$.		
	$N$ is satisfiable if and only if $\hN_x$ is satisfiable. 
\end{lemma}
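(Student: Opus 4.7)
The proof is dual to that of Lemma~\ref{lemma:EquisatisfiabilityBaseInstantiation}, with lower bounds and upper bounds exchanging roles. The ``only if'' direction is immediate: each $C\subst{x/c}$ is an instance of $C$ and is therefore entailed by $C$, and if $c_{+\infty}$ does not already occur in $N$ then $\Psi^{+\infty}_{N}$ can be satisfied by reinterpreting $c_{+\infty}$ as a sufficiently large integer without disturbing the rest of a given model. For the ``if'' direction, start from a hierarchic model $\cA \models \hN_x$ and build $\cB \models N$ by reusing $\cS^\cA$ as the free domain, setting $c^\cB := c^\cA$ for every $c \in \consts(N)$, and defining the free predicates through projection functions constructed below.

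The construction requires dualizing Lemmas~\ref{lemma:BaseVariableInstantiationPointsClosure} and~\ref{lemma:PartitionRepresentatives}: if $\<Q,j\> \prop \<P,i\>$ then $\upcl\<P,i\> \subseteq \upcl\<Q,j\>$ by transitivity, hence $\cI_{\upcl\<P,i\>} \subseteq \cI_{\upcl\<Q,j\>}$ and $\cP_{\upcl\<P,i\>}^\cA$ refines $\cP_{\upcl\<Q,j\>}^\cA$; moreover, every part $p = [r_\ell + 1, r_u]$ or $p = (-\infty, r_u]$ of $\cP_{\upcl\<P,i\>}^\cA$ admits a representative $c_{\upcl\<P,i\>, p} \in \cI_{\upcl\<P,i\>}$ with value exactly $r_u$, while the unbounded tail $[r_k + 1, +\infty)$ is represented by $c_{+\infty}$, whose interpretation is pinned into this part by $\Psi^{+\infty}_{N}$. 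Define $\pi_{\upcl\<P,i\>} : \Int \cup \cS^\cB \to \Int \cup \cS^\cA$ to map every integer $\fa$ to $c_{\upcl\<P,i\>, p}^\cA$ for the part $p$ containing $\fa$, and to act as the identity on $\cS^\cB$; finally set $\<\fa_1, \ldots, \fa_m\> \in P^\cB$ iff $\<\pi_{\upcl\<P,1\>}(\fa_1), \ldots, \pi_{\upcl\<P,m\>}(\fa_m)\> \in P^\cA$.

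To verify $\cB \models N$, fix a clause $C' \in N$ and an assignment $\beta$ into $\cB$, and put $\beta_\pi(v) := \pi_{\upcl_N(v)}(\beta(v))$. As in the original argument, one first obtains $\cA, \beta_\pi \models C'$---directly when $C' \neq C$ (since then $C' \in \hN_x$), and via the instance $C\subst{x/c_*} \in \hN_x$ otherwise, where $c_*$ is the representative of the part of $\cP_{\upcl(x)}^\cA$ containing $\beta(x)$---and then transfers this to $\cB, \beta \models C'$ by case analysis on the atoms. Non-equational free atoms carry over by the definition of $P^\cB$; upper bounds on $v$ survive because the relevant constant's value is itself a partition point of $\cP_{\upcl(v)}^\cA$ and therefore bounds the upper endpoint of $\beta(v)$'s part; lower bounds on $v$ survive because $\pi_{\upcl(v)}$ only moves values upwards; and inequalities $v \leq w$ or equalities $v \equals w$ between two variables are handled using the refinement property above. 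The principal obstacle is the orientation bookkeeping: one has to check that each case dualizes correctly, exploiting the fact that $\prop$ is oriented so that the Loos--Weispfenning-style elimination from $+\infty$ downwards is genuinely symmetric to the elimination from $-\infty$ upwards used in Lemma~\ref{lemma:EquisatisfiabilityBaseInstantiation}.
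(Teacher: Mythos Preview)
Your approach is correct and is precisely the paper's intended treatment: the paper states Lemma~\ref{lemma:EquisatisfiabilityDualBaseInstantiation} as the straightforward dual of Lemma~\ref{lemma:EquisatisfiabilityBaseInstantiation} without spelling out a separate proof, so dualizing the model construction via projections onto interval \emph{upper} endpoints (with $c_{+\infty}$ covering the rightmost unbounded part) is exactly what is meant.

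One slip to correct in your dualization of Lemma~\ref{lemma:BaseVariableInstantiationPointsClosure}: from $\cI_{\upcl\<P,i\>} \subseteq \cI_{\upcl\<Q,j\>}$ you should conclude that $\cP_{\upcl\<Q,j\>}^\cA$ refines $\cP_{\upcl\<P,i\>}^\cA$, not the reverse---more instantiation points yield the \emph{finer} partition. This orientation is exactly what the $y \leq z$ case requires: the constraint gives $\<P_y,i_y\> \prop \<P_z,i_z\>$, hence $\upcl(z) \subseteq \upcl(y)$, hence $\cP_{\upcl(y)}^\cA$ refines $\cP_{\upcl(z)}^\cA$. The roles of $y$ and $z$ are swapped relative to the primal argument (where $\cP_{\downcl(z)}^\cA$ refines $\cP_{\downcl(y)}^\cA$), and the case analysis then goes through symmetrically once you track this reversal.
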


In both, Lemma~\ref{lemma:EquisatisfiabilityBaseInstantiation} and its dual version, Lemma~\ref{lemma:EquisatisfiabilityDualBaseInstantiation}, the equisatisfiable instantiation can be applied to the respective variable independently of the instantiation steps that have already been done or are still to be done in the future.
This means, we can choose independently, whether to stick to the lower or upper bounds for instantiation.
This choice can, for example, be made depending on the number of non-redundant instances that have to be generated.

\begin{example}\label{example:EquisatisfiableEssentiallyGroundSet}
	Consider the following \BsrSli\ clause set $N$:\\
	\small
	\centerline{$
		\begin{array}{rclcl@{\hspace{0.5ex}}l}
			1 \leq x_1, x_2 \leq 0		&\| 	&			&\to	&T(x_1), 	&Q(x_1, x_2) \\
			y_3 \leq 7,\; y_1 \leq y_3		&\|	&Q(y_1,y_2)		&\to	&R(y_3)  \\
			6 \leq z_1,\; z_1 \leq 9		&\|	&T(z_1)		&\to	&\Box	
		\end{array}
	$}
	\normalsize
	We intend to instantiate the variables $y_3, y_1, x_1, z_1$ in this order.
	For $y_3$ we can choose between $\cI_{\downcl_N(y_3)} = \{ c_{-\infty}, 1, 6 \}$ and $\cI_{\upcl_N(y_3)} = \{ 7, c_{+\infty} \}$.
	Using the latter option, we obtain the instances\\
	\small
	\centerline{$
	\begin{array}{@{\hspace{-1ex}}rclcl@{\hspace{0.5ex}}l}
		7 \leq 7, y_1 \leq 7, y_3 = 7						&\|	&Q(y_1,y_2)		&\to	&R(y_3)  \\
		c_{+\infty} \leq 7, y_1 \leq c_{+\infty}, y_3 = c_{+\infty}		&\|	&Q(y_1,y_2)		&\to	&R(y_3) 
	\end{array}
	$}
	\normalsize
	plus the clauses in $\Psi^{+\infty}_{N}$.
	The constraint $7 \leq 7$ can be removed, as it is redundant.
	The second instance can be dropped immediately, since the constraint $c_{+\infty} \leq 7$ is false in any model satisfying $\Psi^{+\infty}_{N}$. 
	Dual simplifications can be applied to constraints with $c_{-\infty}$.
	Let $N'$ contain the clauses in $\Psi^{+\infty}_{N}$ and the clauses\\
	\small
	\centerline{$
		\begin{array}{rclcl@{\hspace{0.5ex}}l}
			1 \leq x_1, x_2 \leq 0		&\| 	&			&\to	&T(x_1), 	&Q(x_1, x_2)  \\
			y_1 \leq 7, y_3 = 7			&\|	&Q(y_1,y_2)		&\to	&R(y_3)  \\
			6 \leq z_1, z_1 \leq 9		&\|	&T(z_1)		&\to	&\Box	
		\end{array}
	$}
	\normalsize
	For $y_1$ we use $\cI_{\downcl_{N'}(y_1)} = \{ c_{-\infty}, 1, 6 \}$ rather than $\cI_{\upcl_{N'}(y_1)} = \{ 7, 9, c_{+\infty} \}$ for instantiation and obtain $N''$ (after simplification):\\
	\small
	\centerline{$
		\begin{array}{rclcl@{\hspace{0.5ex}}l}
			1 \leq x_1, x_2 \leq 0		&\| 	&			&\to	&T(x_1), 	&Q(x_1, x_2) \\
			y_3 = 7, y_1 = c_{-\infty}		&\|	&Q(y_1,y_2)		&\to	&R(y_3)  \\
			y_3 = 7, y_1 = 1			&\|	&Q(y_1,y_2)		&\to	&R(y_3) \\
			y_3 = 7, y_1 = 6			&\|	&Q(y_1,y_2)		&\to	&R(y_3) \\
			6 \leq z_1,\; z_1 \leq 9		&\|	&T(z_1)		&\to	&\Box	
		\end{array}
	$}
	\normalsize
	plus the clauses in $\Psi^{-\infty}_{N}$ and $\Psi^{+\infty}_{N}$ and plus the clause $c_{-\infty} \geq c_{+\infty} \| \rightarrow \Box$.
	The sets of instantiation points for $x_1$ in $N''$ are $\cI_{\downcl_{N''}(x_1)} = \{c_{-\infty}, 1, 6\}$ and $\cI_{\upcl_{N''}(x_1)} = \{c_{-\infty},1,6,9, c_{+\infty}\}$.
	The latter set nicely illustrates how instantiation sets for particular variables can evolve during the incremental process of instantiation.
	We take the set with fewer instantiation points and obtain $N'''$:\\
	\small
	\centerline{$
		\begin{array}{rclcl@{\hspace{0.5ex}}l}
			x_2 \leq 0, x_1 = 1			&\| 	&			&\to	&T(x_1), 	&Q(x_1, x_2)  \\
			x_2 \leq 0, x_1 = 6			&\| 	&			&\to	&T(x_1), 	&Q(x_1, x_2)  \\
			y_3 = 7, y_1 = c_{-\infty}		&\|	&Q(y_1,y_2)		&\to	&R(y_3)  \\
			y_3 = 7, y_1 = 1			&\|	&Q(y_1,y_2)		&\to	&R(y_3)  \\
			y_3 = 7, y_1 = 6			&\|	&Q(y_1,y_2)		&\to	&R(y_3)  \\
			6 \leq z_1,\; z_1 \leq 9		&\|	&T(z_1)		&\to	&\Box	
		\end{array}
	$}
	\normalsize
	plus $\Psi^{-\infty}_{N} \cup \Psi^{+\infty}_{N} \cup \{c_{-\infty} \geq c_{+\infty} \| \rightarrow \Box\}$.
	We instantiate $z_1$ using the set $\cI_{\downcl_{N'''}(z_1)} = \{c_{-\infty}, 1, 6\}$ and not $\cI_{\upcl_{N'''}(z_1)} = \{c_{-\infty}, 1, 6, 9, c_{+\infty}\}$:\\
	\small
	\centerline{$
		\begin{array}{rclcl@{\hspace{0.5ex}}l}
			x_2 \leq 0, x_1 = 1			&\| 	&			&\to	&T(x_1), 	&Q(x_1, x_2) \\
			x_2 \leq 0, x_1 = 6			&\| 	&			&\to	&T(x_1), 	&Q(x_1, x_2) \\
			y_3 = 7, y_1 = c_{-\infty}		&\|	&Q(y_1,y_2)		&\to	&R(y_3) \\
			y_3 = 7, y_1 = 1			&\|	&Q(y_1,y_2)		&\to	&R(y_3) \\
			y_3 = 7, y_1 = 6			&\|	&Q(y_1,y_2)		&\to	&R(y_3) \\
			z_1 = 6				&\|	&T(z_1)		&\to	&\Box	
		\end{array}
	$}
	\normalsize
	\noindent
	plus $\Psi^{-\infty}_{N} \cup \Psi^{+\infty}_{N} \cup \{c_{-\infty} \geq c_{+\infty} \| \rightarrow \Box\}$.
	Until now, we have introduced $6$ non-redundant instances.	
	A completely naive instantiation approach where $x_1, y_1, y_3, z_1$ are instantiated with all occurring constant symbols $0, 1, 6, 7, 9$ leads to $17$ non-redundant instances.
	This corresponds to the originally proposed method for the \emph{array property fragment}, cf.~\cite{Bradley2006}. 
	A more sophisticated instantiation approach where $x_1, y_1, y_3, z_1$ are instantiated with $1, 6, 7, 9$ (as there is no connection from $0$ to $x_1$, $y_1$, $y_3$, $z_1$) leads to $13$ non-redundant instances. 
	For instance, the methods described in~\cite{Ge2009} produce this set of instances.\\[1ex]
	\centerline{
		\begin{tabular}{llc}
			instantiation				& instantiation points			& non-redundant \\
			method 	 	  		& for $y_3$, $y_1$, $x_1$, $z_1$	& instances \\[0.5ex]
			\hline
			exhaustive \cite{Bradley2006}	& 4 times $\{0,1,3,6,9\}$			& 17 \\[0.5ex]
			\hspace{-2ex}
			\begin{tabular}{ll}
				filtered by argu- \\
				ment positions \cite{Ge2009}
			\end{tabular}
								& 4 times $\{1,6,7,9\}$			& 13 \\[0.5ex]
			our approach			& \hspace{-1.5ex}\begin{tabular}{l@{\hspace{0.5ex}}l} $\{ 7, c_{+\infty} \}$, &$\{ c_{-\infty}, 1, 6 \}$,\\ $\{c_{-\infty}, 1, 6\}$, &$\{c_{-\infty}, 1, 6\}$	\end{tabular}					& 6
		\end{tabular}
	}
	\qed
\end{example}
The example shows that our approach to instantiation can reduce the number of introduced instances substantially.
Our approach is particularly beneficial in cases where argument positions are to a large degree independent (i.e.\ not connected via $\prop$) and/or where there is a strong imbalance between the number of upper and lower bounds that are connected to a certain argument position.
To illustrate the latter, consider a clause $C$ in a \BsrSli\ clause set $N$ with base-sort variables $x_1, \ldots, x_n$, which are all pairwise connected via $\prop$, and which are subject (directly or via $\prop$) to $\ell$ lower bounds $c_1 \leq z_1, \ldots, c_\ell \leq z_\ell$ and $k$ upper bounds $z'_1 \leq d_1, \ldots, z'_k \leq d_k$.
Assume that the $c_1, \ldots, c_\ell, d_1, \ldots, d_k$ are all pairwise distinct and different from $c_{-\infty}$.
Moreover, suppose $\ell < k$.
Instantiating the variables $x_1, \ldots, x_n$ in $C$ with all constant symbols $c_1, \ldots, c_\ell, d_1, \ldots, d_k$ yields $(\ell+k)^n$ instances. 
In constrast, by Lemma~\ref{lemma:EquisatisfiabilityDualBaseInstantiation}, it is sufficient to consider the instances of $C$ resulting from instantiating every $x_i$ with $c_{-\infty}, c_1, \ldots, c_\ell$.
Hence, only $(\ell+1)^n$ instances need to be considered.
In the extreme case where $\ell = 0$  and $k > 0$, our approach only needs a single instance instead of $k^n$ instances.

\subsection{Instantiation of Free-Sort Variables}\label{section:FreeSortInstantiation}

We can also follow an instantiation approach for free-sort variables.
In a nutshell, we collect only \emph{relevant} instantiation points for a given argument position (cf.\ \ref{enum:EfficiencyForInstantiation:I}).  
A similar approach is taken in~\cite{Ge2009}.

\begin{definition}[Instantiation Points for Free-Sort Argument Positions]\label{definition:FreeInstantiationPoints}
	Let $N$ be a \BsrSli\ clause set in normal form and let $P: \xi_1\times\ldots\times \xi_m$ be a free predicate symbol occurring in $N$ (we pretend that $P$ also reaches over the predicate symbols $\False_v:\cS$, cf.\ footnote~\textsuperscript{\ref{footnote:equationVariables}}). 
	For every $i$ with $\xi_i = \cS$ we define $\cI_{P,i}$ to be the smallest set satisfying the following conditions:
	\begin{enumerate}[label=(\alph{*}), ref=(\alph{*})]
		\item\label{enum:FreeInstantiationPoints:I} $d\in\cI_{P,i}$ for any constant symbol $d$ for which there exists an atom $P(\ldots, d, \ldots)$ in $N$ with $d$ in the $i$-th argument position,
		\item\label{enum:FreeInstantiationPoints:II} $\cI_{P,i} = \fconsts(N)$ for any clause $\Lambda\|\Gamma\rightarrow\Delta$ in $N$ such that $\Gamma\rightarrow \Delta$ contains $P(\ldots, u, \ldots)$ in which $u$ occurs as the $i$-th argument and $\Delta$ contains an atom of the form $u \approx t$ where $t$ is either a variable or a constant symbol.
	\end{enumerate}
\end{definition}

\begin{definition}[Instantiation Points for Free-Sort Argument Position Closures]\label{definition:FreeVariableInstantiationPoints}
	Let $N$ be a \BsrSli\ clause set in normal form.
	For every free-sort argument position closure $\downcl \<P,i\>$ induced by $\prop$ we define the set $\cI_{\downcl \<P,i\>}$ of \emph{instantiation points for $\downcl \<P,i\>$} by $\cI_{\downcl \<P,i\>} := \bigcup_{\<Q,j\> \in \downcl \<P,i\>} \cI_{Q,j}$, if this results in a non-empty set. 
	Otherwise, we set $\cI_{\downcl \<P,i\>} := \{d\}$ for an arbitrarily chosen $d \in \fconsts(N)$.
\end{definition}

\begin{lemma}\label{lemma:EquisatisfiabilityFreeInstantiation}
	Let $N$ be a finite \BsrSli\ clause set in normal form.
	Suppose there is a clause $C$ in $N$ which contains a free-sort variable $u$. Let $\hN_u := \bigl(N\setminus\{ C \}\bigr) \;\cup\; \bigl\{C\subst{u/c} \;\bigm|\; c\in\cI_{\downcl_N(u)} \bigr\}$.
	$N$ is satisfiable if and only if $\hN_u$ is satisfiable. 
\end{lemma}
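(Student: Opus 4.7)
The plan is to mirror the proof sketch of Lemma~\ref{lemma:EquisatisfiabilityBaseInstantiation}, with a projection construction tailored to the free sort. The ``only if'' direction is immediate: every clause of $\hN_u$ is either already in $N$ or is an instance of $C\in N$, so any hierarchic model of $N$ is a model of $\hN_u$.

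For the ``if'' direction, I would start from $\cA\models\hN_u$ and construct $\cB\models N$ as follows. Keep the integer interpretation, let $\cS^\cB := \cS^\cA$, and set $c^\cB := c^\cA$ for every $c\in\consts(N)$. For each free-sort argument position closure $\downcl\<P,i\>$ I choose a projection $\pi_{\downcl\<P,i\>}\colon \cS^\cA \to \cS^\cA$ whose image equals $\{d^\cA \mid d \in \cI_{\downcl\<P,i\>}\}$ and which fixes $d^\cA$ for every $d\in\cI_{\downcl\<P,i\>}$ (by Definition~\ref{definition:FreeVariableInstantiationPoints} this set is always non-empty); for base-sort positions I let $\pi$ be the identity. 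Then I define $\<\fa_1,\ldots,\fa_m\>\in P^\cB$ iff $\<\pi_{\downcl\<P,1\>}(\fa_1),\ldots,\pi_{\downcl\<P,m\>}(\fa_m)\>\in P^\cA$.

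To show $\cB\models N$, I fix any $C' = \Lambda'\,\|\,\Gamma'\to\Delta'\in N$ and any assignment $\beta$, and set $\beta_\pi(v) := \pi_{\downcl_N(v)}(\beta(v))$. First I argue $\cA,\beta_\pi\models C'$: if $C'\neq C$ this holds because $C'\in\hN_u$; if $C'=C$ then $\beta_\pi(u) = c_*^\cA$ for some $c_*\in\cI_{\downcl_N(u)}$ by the choice of image of $\pi_{\downcl(u)}$, so $\cA\models C\subst{u/c_*}\in\hN_u$ gives $\cA,\beta_\pi\models C$. Then I perform a case distinction on the literal witnessing satisfaction. Constraints in $\Lambda'$ are handled trivially because $\beta$ and $\beta_\pi$ coincide on base-sort variables. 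For non-equational atoms $P(s_1,\ldots,s_m)$ the definition of $P^\cB$ yields the required equivalence once I observe that $\downcl(v) = \downcl\<P,i\>$ for any variable $v$ at position $i$ of $P$ (since $v$'s argument positions in $C'$ are $\prop$-equivalent by clause~(1) of Definition~\ref{definition:ConnectedArgumentPositions}), and that any free-sort constant $s_i = d$ lies in $\cI_{P,i}\subseteq\cI_{\downcl\<P,i\>}$ by Condition~\ref{enum:FreeInstantiationPoints:I} and is therefore fixed by $\pi_{\downcl\<P,i\>}$.

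The subtle case, and the main obstacle, is a positive equation $s \approx t$ in $\Delta'$ with at least one side a variable: I need to deduce $\cB(\beta)(s) = \cB(\beta)(t)$ from $\cA(\beta_\pi)(s) = \cA(\beta_\pi)(t)$. This is where Condition~\ref{enum:FreeInstantiationPoints:II} becomes essential: for any variable $v$ participating in such an equation in $\Delta'$, together with a non-equational occurrence of $v$ somewhere in $\Gamma'\to\Delta'$ (or with the $\False_v(v)$ surrogate of footnote~\ref{footnote:equationVariables} if $v$ appears only in equations), we obtain $\cI_{P,i} = \fconsts(N)$ for the associated position $\<P,i\>$, hence $\pi_{\downcl(v)}$ is the identity on $\{d^\cA \mid d\in\fconsts(N)\}$. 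Combined with $c^\cB = c^\cA$ for every constant and with the fact that, when both sides of the equation are variables, $\downcl(s) = \downcl(t)$ so the same projection is applied on both sides, this suffices to conclude $\cB(\beta)(s) = \cB(\beta)(t)$. Using the normal-form assumption that $\Gamma$ contains no equations ensures that no further case for negative equations arises, and the transfer is complete.
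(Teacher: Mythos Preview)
Your approach mirrors the paper's almost exactly, but there is one genuine gap: you set $\cS^\cB := \cS^\cA$, whereas the paper takes $\cS^\cB := \{\,c^\cA \mid c\in\fconsts(N)\,\}$. This matters precisely in the ``subtle case'' you flag. Suppose $v\approx d$ occurs in $\Delta'$ and $\beta(v)=\fa$ for some $\fa\in\cS^\cA$ that is \emph{not} named by any constant in $\fconsts(N)$. Then $\pi_{\downcl(v)}(\fa)$ is the chosen default, say $d_0^\cA$; if $d^\cA=d_0^\cA$ you get $\cA,\beta_\pi\models v\approx d$, yet $\cB(\beta)(v)=\fa\neq d^\cA=\cB(\beta)(d)$. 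The observation that $\pi_{\downcl(v)}$ is the identity on $\{c^\cA\mid c\in\fconsts(N)\}$ does not help, because $\beta(v)$ need not lie in that set. The same problem hits the $v\approx w$ case: using ``the same projection on both sides'' is not enough, since a non-injective projection can identify two distinct anonymous elements.

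The paper's remedy is to shrink the free-sort domain to the Herbrand-like set $\{c^\cA \mid c\in\fconsts(N)\}$. Then every $\beta(v)\in\cS^\cB$ is already $c^\cA$ for some $c\in\fconsts(N)=\cI_{\downcl(v)}$ (the latter equality coming from Condition~\ref{enum:FreeInstantiationPoints:II}), so $\pi_{\downcl(v)}$ is the identity on all of $\cS^\cB$ and $\beta_\pi(v)=\beta(v)$ follows. With this single change your argument goes through. (A minor side remark: normal form only forbids equations $u\approx t$ with a \emph{variable} $u$ in $\Gamma$; constant--constant equations may remain, but they are handled trivially.)
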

\begin{proof}[Proof sketch]
	The proof of the ``if''-part proceeds along similar lines as in the proof of Lemma~\ref{lemma:EquisatisfiabilityBaseInstantiation}.
	The main difference is the family of projection functions $\pi_{\downcl\<P,i\>} : \Int\cup\cS^\cB \to \Int\cup\cS^\cA$, which we now define by\\
			\centerline{$
				\pi_{\downcl\<P,i\>}(\fa) :=
					\begin{cases}
						\fa					&\text{if $\xi_i = \cS$ and $\fa = c^\cA$ for some $c \in \cI_{\downcl\<P,i\>}$,} \\
						d_{\downcl\<P,i\>}^\cA		&\text{if $\xi_i = \cS$ and $\fa \neq c^\cA$ for every $c \in \cI_{\downcl\<P,i\>}$,} \\
						\fa					&\text{if $\xi_i = \cZ$,}
					\end{cases}
			$}
	where for every argument position closure $\downcl\<P,i\>$ we fix some \emph{default instantiation point} $d_{\downcl\<P,i\>} \in \cI_{\downcl\<P,i\>}$, for which we choose an arbitrary constant symbol from $\cI_{\downcl\<P,i\>}$.
\end{proof}

\subsection{Avoiding Immediate Blowups}\label{section:AvoidingNonlinearBlowups}

Compared to naive approaches to instantiation of integer-sort and free-sort variables, our methods produce exponentially fewer instances in certain cases.
Still, the number of instances can become very large.
Consider again the clause $C :=\; y_3 \leq 7, y_1 \leq y_3 \,\|\, Q(y_1,y_2) \rightarrow R(y_3)$ from Example~\ref{example:EquisatisfiableEssentiallyGroundSet}.
Instantiating $y_3$ with $\cI_{\upcl(y_3)} = \{7, c_{+\infty}\}$ first and then $y_1$ with $\cI_{\downcl(y_1)} = \{c_{-\infty}, 1, 6\}$ leads to $|\cI_{\upcl(y_3)}| \cdot |\cI_{\downcl(y_1)}| = 6$ instances of $C$ (before simplification):\\
	\small
	\centerline{$
		\begin{array}{rclcll}
			7 \leq 7,\; c_{-\infty} \leq 7, y_3 = 7, y_1 = c_{-\infty}						&\|	&Q(y_1,y_2)		&\to	&R(y_3) ~, \\
			7 \leq 7,\; 1 \leq 7, y_3 = 7, y_1 = 1									&\|	&Q(y_1,y_2)		&\to	&R(y_3) ~, \\
			7 \leq 7,\; 6 \leq 7, y_3 = 7, y_1 = 6									&\|	&Q(y_1,y_2)		&\to	&R(y_3) ~, \\
			c_{+\infty} \leq 7,\; c_{-\infty} \leq c_{+\infty}, y_3 = c_{+\infty}, y_1 = c_{-\infty}		&\|	&Q(y_1,y_2)		&\to	&R(y_3) ~, \\
			c_{+\infty} \leq 7,\; 1 \leq c_{+\infty}, y_3 = c_{+\infty}, y_1 = 1				&\|	&Q(y_1,y_2)		&\to	&R(y_3) ~, \\
			c_{+\infty} \leq 7,\; 6 \leq c_{+\infty}, y_3 = c_{+\infty}, y_1 = 6				&\|	&Q(y_1,y_2)		&\to	&R(y_3) ~.
		\end{array}
	$}
	\normalsize
We refer to this set as $M_1$.
Although simplification will remove the last three clauses, as they are redundant, we add instances to the clause set without knowing whether they are really necessary for showing unsatisfiability, for instance.

We can, on the other hand, leave it to the theorem prover to decide when instantiation is appropriate.
In order to do so, we need to encode the information contained in the computed sets of instantiation points into the clause set using a standard technique.
Regarding the above example, this leads to the set $M_2$ containing $|\cI_{\upcl(y_3)}| + |\cI_{\downcl(y_1)}| + 1 = 6$ clauses:\\
	\small
	\centerline{$
		\begin{array}{rclcll}
			y_3 \leq 7,\; y_1 \leq y_3		&\|	&S_{y_3}(y_3), S_{y_1}(y_1), Q(y_1,y_2)		&\rightarrow	&R(y_3) ~, \\
			y'_3 = 7				&\|	&	\qquad\rightarrow	S_{y_3}(y'_3) ~, \\
			y''_3 = c_{+\infty}			&\|	&	\qquad\rightarrow	S_{y_3}(y''_3) ~, \\
			y'_1 = c_{-\infty}			&\|	&	\qquad\rightarrow	S_{y_1}(y'_1) ~, \\
			y''_1 = 1				&\|	&	\qquad\rightarrow	S_{y_1}(y''_1) ~, \\
			y'''_1 = 6				&\|	&	\qquad\rightarrow	S_{y_1}(y'''_1) ~.
		\end{array}
	$}
	\normalsize
Hierarchic superposition, for instance, can generate the clauses in $M_1$ from the clauses in $M_2$ by resolving over the atoms $S_{y_1}(\ldots)$.
However, in order to derive the empty clause from an unsatisfiable clause set, it is not always necessary to generate all instances.
Instead, a refuting theorem prover can use the information encoded in $M_2$ to instantiate $C$ on demand.
This might prevent a non-linear blowup caused by immediate instantiation with all instantiation points, since we trade the multiplication in $|M_1| = |\cI_{\upcl(y_3)}| \cdot |\cI_{\downcl(y_1)}|$ for addition in $|M_2| = |\cI_{\upcl(y_3)}| + |\cI_{\downcl(y_1)}| + 1$.


\section{Stratified Clause Sets}\label{section:StratifiedClauseSets}

In this section we treat certain clause sets with uninterpreted non-constant function symbols.
By a transformation into an equisatisfiable set of BSR clauses, we show that our instantiation methods are also applicable in such settings.

\begin{definition}\label{definition:StratifiedFunctionalClauseSet}
	Let $N$ be a finite set of variable-disjoint first-order clauses in which also non-constant function symbols occur.
	By $\Pi_N$ and $\Omega_N$ we denote the set of occurring predicate symbols and function symbols (including constants), respectively.
	$N$ is considered to be \emph{stratified} if we can define a mapping $\lvl_N : (\Pi_N \cup \Omega_N) \times \Nat \to \Nat$ that maps argument position pairs (of predicate and function symbols) to nonnegative integers such that the following conditions are satisfied.
	\begin{enumerate}[label=(\alph{*}), ref=(\alph{*})]
		\item For every function symbol $f : \xi_1 \times \ldots \times \xi_m \to \xi_{m+1}$ and every $i \leq m$ we have $\lvl_N\<f,i\> > \lvl_N\<f,m+1\>$.
		\item For every (sub)term $g(s_1, \ldots, s_{k-1}, f(t_1, \ldots, t_m), s_{k+1}, \ldots, s_{m'})$ occurring in $N$ we have \linebreak $\lvl_N\<f, m+1\> = \lvl_N\<g,k\>$.
			This includes the case where $f$ is a constant symbol and $m = 0$.
			Moreover, this also includes the case where $g$ is replaced with a predicate symbol $P$.
		\item For every variable $v$ that occurs in two (sub)terms $f(s_1, \ldots, s_{k-1}, v, s_{k+1}, \ldots,$ $s_m)$ and $g(t_1, \ldots, t_{k'-1}, v, t_{k'+1}, \ldots, t_{m'})$ in $N$ we have $\lvl_N\hspace{-0.2ex}\<f,k\> \!=\! \lvl_N\hspace{-0.2ex}\<g,k'\>$.
			The same applies, if $f$ or $g$ or both are replaced with predicate symbols.
		\item For every equation $f(s_1, \ldots, s_m) \approx g(t_1, \ldots, t_{m'})$ we have $\lvl_N\<f,m+1\> = \lvl_N\<g,m'+1\>$.
			This includes the cases where $f$ or $g$ or both are constant symbols (with $m = 0$ or $m' = 0$ or both, respectively). 	
	\end{enumerate}
\end{definition}
Several known logic fragments fall into this syntactic category:
many-sorted clauses over \emph{stratified vocabularies} as described in \cite{Abadi2010,Korovin2013}, and
clauses belonging to the \emph{finite essentially uninterpreted fragment} (cf.\ Proposition~2 in \cite{Ge2009}).

\begin{lemma}\label{lemma:FunctionFlattening}
	Let $C = \Gamma \rightarrow \Delta$ be a first-order clause and 
	let $f_1, \ldots, f_n$ be a list of all uninterpreted non-constant function symbols occurring in $C$.
	Let $R_1, \ldots, R_n$ be distinct predicate symbols that do not occur in $C$ and that have the
		 sort $R_i : \xi_1 \times \ldots \times \xi_m \times \xi_{m+1}$, if and only if $f_i$ has the sort $\xi_1 \times \ldots \times \xi_m \to \xi_{m+1}$.
	Let $\Phi_1$ and $\Phi_2$ be the following sets of sentences: \\
		\centerline{$\Phi_1 := \bigl\{ \forall x_1 \ldots x_m u v.\; R_i(x_1, \ldots, x_m, u) \wedge R_i(x_1, \ldots, x_m, v) \rightarrow u \approx v  \bigm| 1 \leq i \leq n \bigr\}$}
	and
			$\Phi_2 := \bigl\{ \forall x_1 \ldots x_m \exists v.\; R_i(x_1, \ldots, x_m, v) \bigm| 1 \leq i \leq n \bigr\}$.
	There is a clause $D$ that does not contain non-constant function symbols and for which the set $\{D\} \cup \Phi_1 \cup \Phi_2$ is equisatisfiable to $C$.
\end{lemma}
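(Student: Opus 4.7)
The plan is to obtain $D$ by iterated \emph{flattening}: as long as $C$ contains a subterm $f_i(t_1,\ldots,t_m)$ in which $t_1,\ldots,t_m$ are either variables or constant symbols (i.e.\ an innermost non-constant function application), pick a fresh variable $v$, replace \emph{every} occurrence of this subterm in $C$ by $v$, and add the atom $R_i(t_1,\ldots,t_m,v)$ to the antecedent $\Gamma$. Since every non-constant function-symbol occurrence is eventually the innermost one, this rewriting terminates after finitely many steps, yielding a clause $D$ in which no non-constant function symbol appears (constants remain untouched; they are already allowed in $D$).

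The correctness proof is an induction on the number of flattening steps; it suffices to verify that a single step preserves equisatisfiability modulo $\Phi_1\cup\Phi_2$. For the direction ``$C$ satisfiable $\Rightarrow$ $\{D\}\cup\Phi_1\cup\Phi_2$ satisfiable,'' take a model $\cA$ of $C$ and extend it by setting $R_i^{\cA} := \bigl\{\langle a_1,\ldots,a_m,b\rangle \bigm| b = f_i^{\cA}(a_1,\ldots,a_m)\bigr\}$ for each $i$. Then $\cA \models \Phi_1 \cup \Phi_2$ trivially, and each flattening step is validity-preserving because $\cA(\beta)\bigl(R_i(t_1,\ldots,t_m,v)\bigr)$ forces $\beta(v) = f_i^{\cA}(\cA(\beta)(t_1),\ldots,\cA(\beta)(t_m))$, so adding $R_i(t_1,\ldots,t_m,v)$ to $\Gamma$ and substituting $v$ for the subterm leaves the clause's truth value unchanged.

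For the converse direction, let $\cB \models \{D\}\cup\Phi_1\cup\Phi_2$. By $\Phi_2$ and $\Phi_1$, for every tuple $(a_1,\ldots,a_m)$ of appropriate sort there is a \emph{unique} $b$ with $\langle a_1,\ldots,a_m,b\rangle \in R_i^{\cB}$; define $f_i^{\cA}(a_1,\ldots,a_m) := b$ and let $\cA$ agree with $\cB$ on all other symbols. A straightforward induction on terms shows that for every variable assignment $\beta$ and every term $t$ occurring in $C$, $\cA(\beta)(t)$ equals the value to which $t$ gets mapped when one unfolds the flattening steps under $\cB$. Thus $\cA \models C$, again by reversing the clause equivalence used for each step.

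The main technical nuisance will be the bookkeeping for shared subterms and for equations: one must apply the flattening uniformly to all occurrences of the same subterm (so that a \emph{single} fresh variable is used for it), and one must handle equational atoms $f(\bar s)\approx g(\bar t)$ by flattening both sides, ending up with an equation $u\approx v$ plus the two hypotheses $R_f(\bar s,u)$, $R_g(\bar t,v)$ in $\Gamma$. Once this is carried out consistently and one checks that a top-level constant occurrence $f()$ (i.e., a constant) need not be flattened because $D$ is allowed to contain constants, the argument is routine.
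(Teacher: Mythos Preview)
Your proof is correct and actually more complete than the paper's, which only gives a proof sketch consisting of a list of rewrite rules without spelling out the semantic argument for equisatisfiability. Your model constructions in both directions are exactly what is needed and are left implicit in the paper.

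The constructions differ in one respect worth noting. You flatten uniformly and innermost-first, always adding the new atom $R_i(\bar t,v)$ to the antecedent; for an equation $f(\bar s)\approx g(\bar t)$ in $\Delta$ this leaves a residual equation $u\approx v$ in $\Delta$, guarded by $R_f(\bar s,u)$ and $R_g(\bar t,v)$ in $\Gamma$. The paper instead uses a small family of context-sensitive rules; in particular, for $f_i(\bar s)\approx f_j(\bar t)$ in $\Delta$ it introduces a \emph{single} fresh variable $v$, puts $R_i(\bar s,v)$ into $\Gamma$, and replaces the equation by $R_j(\bar t,v)$ in $\Delta$, thereby eliminating the equation altogether. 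For the lemma as stated both routes are fine, but the paper's rules are tailored so that the resulting clause $D$ directly fits the ``stratified and guarded'' shape of Definition~\ref{definition:StratifiedClauseSet}: every $R_j$-atom introduced into $\Delta$ comes with a matching guard $R_i(\ldots,v)$ in $\Gamma$ (the paper singles out \texttt{(fun-var right)} as the sole exception). Your approach achieves the analogous guardedness for the residual equations $u\approx v$, so it is compatible with the later development, but the syntactic form of $D$ is slightly different from what the paper relies on afterward.
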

\begin{proof}[Proof sketch]
	We apply the following flattening rules.
	$v$ stands for a fresh variable that has not occurred yet.
	$P$ ranges over predicate symbols different from~$\approx$.
	$\bar{s}$ and $\bar{t}$ stand for tuples of arguments.\\[-3ex]
	\begin{tabular}{p{42ex} p{42ex}}
		\begin{prooftree}
			\AxiomC{$\Gamma, f_i(\bar{s}) \approx f_j(\bar{t}\,) \rightarrow \Delta$}
			\RightLabel{\scriptsize (fun-fun \!left) \normalsize}
			\UnaryInfC{$\Gamma, R_i(\bar{s},v), R_j(\bar{t},v) \rightarrow \Delta$}
		\end{prooftree}
		&
		\begin{prooftree}
			\AxiomC{$\Gamma \rightarrow \Delta, f_i(\bar{s}) \approx f_j(\bar{t}\,)$}
			\RightLabel{\scriptsize (fun-fun \!right) \normalsize}
			\UnaryInfC{$\Gamma, R_i(\bar{s},v) \rightarrow \Delta, R_j(\bar{t},v)$}
		\end{prooftree}
		\\[-3ex]
		\begin{prooftree}
			\AxiomC{$\Gamma, f_i(\bar{s}) \approx c \rightarrow \Delta$}
			\RightLabel{\scriptsize (fun-const \!left) \normalsize}
			\UnaryInfC{$\Gamma, R_i(\bar{s},c) \rightarrow \Delta$}
		\end{prooftree}
		&
		\begin{prooftree}
			\AxiomC{$\Gamma \rightarrow \Delta, f_i(\bar{s}) \approx c$}
			\RightLabel{\scriptsize (fun-const \!right) \normalsize}
			\UnaryInfC{$\Gamma \rightarrow \Delta, R_i(\bar{s},c)$}
		\end{prooftree}
		\\[-3ex]
		\begin{prooftree}
			\AxiomC{$\Gamma, f_i(\bar{s}) \approx x \rightarrow \Delta$}
			\RightLabel{\scriptsize (fun-var \!left) \normalsize}
			\UnaryInfC{$\Gamma, R_i(\bar{s},x) \rightarrow \Delta$}
		\end{prooftree}
		&
		\begin{prooftree}
			\AxiomC{$\Gamma \rightarrow \Delta, f_i(\bar{s}) \approx x$}
			\RightLabel{\scriptsize (fun-var \!right) \normalsize}
			\UnaryInfC{$\Gamma \rightarrow \Delta, R_i(\bar{s},x)$}
		\end{prooftree}
		\\[-3ex]
		\begin{prooftree}
			\AxiomC{$\Gamma, P(\ldots, f_i(\bar{s}), \ldots) \rightarrow \Delta$}
			\RightLabel{\scriptsize (fun \!left) \normalsize}
			\UnaryInfC{$\Gamma, R_i(\bar{s},v), P(\ldots, v, \ldots) \rightarrow \Delta$}
		\end{prooftree}
		&
		\begin{prooftree}
			\AxiomC{$\Gamma \rightarrow \Delta, P(\ldots, f_i(\bar{s}), \ldots)$}
			\RightLabel{\scriptsize (fun \!right) \normalsize}
			\UnaryInfC{$\Gamma, \hspace{-0.2ex}R_i(\bar{s},v) \rightarrow \Delta, \hspace{-0.2ex}P(\ldots, v, \ldots)$}
		\end{prooftree}
	\end{tabular}\\
\end{proof}
Given a BSR clause $\Gamma \rightarrow \Delta$, we consider an atom $R_j(\bar{t},v)$ in $\Delta$ to be \emph{guarded}, if there is also an atom $R_i(\bar{s}, v)$ in $\Gamma$.
With the exception of the rule \texttt{(fun-var right)} the flattening rules presented in the proof of Lemma~\ref{lemma:FunctionFlattening} preserve guardedness of atoms in $\Delta$ and introduce atoms $R_j(\bar{t}, v)$ on the right-hand side of a clause only if at the same time a corresponding guard is introduced on the left-hand side of the clause.

Hence, if we are given a stratified clause set in which the atoms $x \approx t$ in the consequents of implications are subject to certain restrictions (e.g.\ $t \neq f(\ldots)$ and guardedness of atoms $u \approx c$ and $u \approx v$), then the above flattening rules yield clauses that belong to the following class of \BsrSli\ clauses---after necessary purification and normalization steps.
In the definition we mark certain predicate symbols that are intended to represent uninterpreted functions.
By adding suitable axioms later on, these will be equipped with the properties of function graphs.
\begin{definition}[Stratified and Guarded \BsrSli]\label{definition:StratifiedClauseSet}
	Consider a \BsrSli\ clause set $N$ in normal form.
	Let $R_1, \ldots, R_n$ be a list of predicate symbols that we consider to be \emph{marked in $N$}.
	We call $N$ \emph{stratified and guarded} with respect to $R_1, \ldots, R_n$, if and only if the following conditions are met.
	
	\begin{enumerate}[label=(\alph{*}), ref=\alph{*}]
		\item\label{enum:StratifiedClauseSet:I}
			There is some function $\lvl_N : \Pi \times \Nat \to \Nat$ that assigns to each argument position pair $\<P,i\>$ a nonnegative integer $\lvl_N\<P,i\>$ such that
			\begin{enumerate}[label=(\ref{enum:StratifiedClauseSet:I}.\arabic{*}), ref=(\ref{enum:StratifiedClauseSet:I}.\arabic{*})]
				\item\label{enum:StratifiedClauseSet:I:I} $\<P,i\> \prop_N \<Q,j\>$ entails $\lvl_N\<P,i\> = \lvl_N\<Q,j\>$, and
	
				\item\label{enum:StratifiedClauseSet:I:II} for every marked predicate symbol $R_j : \xi_1 \times \ldots \times \xi_m \times \xi_{m+1}$ we have $\lvl_N\<R_j,i\> > \lvl_N\<R_j,m+1\>$ for every $i \leq m$.
			\end{enumerate}
			
		\item\label{enum:StratifiedClauseSet:III}
			In every clause $\Lambda \,\|\, \Gamma \rightarrow \Delta$ in $N$ any occurrence of an atom $R_j(s_1, \ldots, s_m, v)$ in $\Delta$ entails that $\Gamma$ contains some atom $R_\ell(t_1, \ldots, t_{m'}, v)$.
			
		\item\label{enum:StratifiedClauseSet:II}
			For every atom $u \approx t$ in $N$, where $t$ is either a free-sort variable $v$ or a free-sort constant symbol, at least one of two cases applies:
			\begin{enumerate}[label=(\ref{enum:StratifiedClauseSet:II}.\arabic{*}), ref=(\ref{enum:StratifiedClauseSet:II}.\arabic{*})]
				\item\label{enum:StratifiedClauseSet:II:I} $u \approx t$, which must occur in the consequent of a clause, is guarded by some atom $R_j(t_1, \ldots, t_m, u)$ occurring in the antecedent of the same clause.
				\item\label{enum:StratifiedClauseSet:II:II} For every marked predicate symbol $R_j : \xi_1 \times \ldots \times \xi_m \times \xi_{m+1}$ and every argument position closure $\downcl_N\<R_j, i\>$ with $1 \leq i \leq m$ we have $\downcl_N\<R_j,i\> \cap \downcl_N(u) = \emptyset$. If $t = v$, we in addition have $\downcl_N\<R_j,i\> \cap \downcl_N(v) = \emptyset$.
			\end{enumerate}			
	\end{enumerate}
\end{definition}
Notice that any atom $u \approx v$ over distinct variables requires two guards $R(\bar{s}, u)$ and $R(\bar{t}, v)$ in order to be guarded in accordance with Condition~\ref{enum:StratifiedClauseSet:II:I}.


Let $N$ be a finite \BsrSli\ clause set in normal form that is stratified and guarded with respect to $R_1, \ldots, R_n$.
Let $R_i: \xi_1 \times \ldots \times \xi_m \times \xi_{m+1}$ be marked in $N$ and let $P: \zeta_1 \times \ldots \times \zeta_{m'}$ be any predicate symbol occurring in $N$ (be it marked or not).
We write $R_i \succeq P$ if and only if $\lvl_N\<R_i, m+1\> \geq \min_{1 \leq \ell \leq m'}\bigl( \lvl_N\<P, \ell\> \bigr)$.
Without loss of generality, we assume $R_1 \succeq_N \ldots \succeq_N R_n$.
Let
$\Phi_1 := \{ \forall x_1 \ldots x_m u u'. ( R_i(x_1, \ldots, x_m, u) \wedge R_i(x_1, \ldots, x_m, u') ) \rightarrow u \simeq u' \mid \text{$R_i$ has arity $m+1$} \}$ 
and
$\Phi_2 := \{ \forall x_1 \ldots x_m \exists u.\, R_i(x_1, \ldots, x_m, u) \mid \text{$R_i$ has arity}$ $m+1 \}$,
where~``$\simeq$'' is a placeholder for~``$\approx$'' in free-sort equations and for ``$=$'' in base-sort equations.

Given a set $M$ of \BsrSli\ clauses and an $(m+1)$-ary predicate symbol $R$ that is marked in $M$, we define the set
	$\hPhi(R, M) :=$\\
	$\bigl\{ R(c_1, \ldots, c_m, d_{R c_1 \ldots c_m}) \bigm| \<c_1, \ldots, c_m\> \in \cI_{\downcl_M\<R,\cdot\>}^{[m]} \bigr\}$ \\
	$\cup \bigl\{ \forall x_1 \ldots x_m. \bigvee_{\<c_1, \ldots, c_m\> \in \cI_{\downcl_M \<R,\cdot\>}^{[m]}}  R(x_1, \ldots, x_m, d_{R c_1 \ldots c_m}) \bigl\}$ \\
	$\cup \bigl\{ \forall x_1 \ldots x_m u.\; R(x_1, \ldots, x_m, u) \rightarrow \bigvee_{\<c_1, \ldots, c_m\> \in \cI_{\downcl_M \<R,\cdot\>}^{[m]}}  u \simeq d_{R c_1 \ldots c_m} \bigr\}$ \\
	$\cup \bigl\{ \forall x_1 \ldots x_m .\; R(x_1, \ldots, x_m, d_{R c_1 \ldots c_m}), R(x_1, \ldots, x_m, d_{R c'_1 \ldots c'_m})$ \\
	\strut\hspace{17ex} 
		$\rightarrow d_{R c_1 \ldots c_m} \simeq d_{R c'_1 \ldots c'_m} \bigm| \<c_1, \ldots, c_m\>, \<c'_1, \ldots, c'_m\> \in \cI_{\downcl_M\<R,\cdot\>}^{[m]} \bigr\}$\\
where $\cI_{\downcl_M\<R,\cdot\>}^{[m]}$ is used as an abbreviation for $\cI_{\downcl_M\<R,1\>} \times \ldots \times \cI_{\downcl_M\<R,m\>}$ and the $d_{R c_1 \ldots c_m}$ are assumed to be fresh constant symbols.
It is worth noticing that the clauses corresponding to $\hPhi(R,M)$ are stratified and guarded \BsrSli\ clauses.

We construct the sequence $M_0, M_1, \ldots, M_n$ of finite clause sets as follows:
$M_0 := N$,
every $M_{\ell+1}$ with $\ell \geq 0$ is an extension of $M_\ell$ by the \BsrSli\ clauses that correspond to the sentences in $\hPhi(R_{\ell+1}, M_\ell)$.

\begin{lemma}\label{lemma:StratifiedClauseSetsIntoBSR}
	The (finite) set $N \cup \Phi_1 \cup \Phi_2$ is satisfiable if and only if $M_n$ is satisfiable.
\end{lemma}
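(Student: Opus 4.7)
The proof proceeds by showing the two implications separately. The $(\Leftarrow)$-direction is straightforward, while the $(\Rightarrow)$-direction reduces to a projection construction that refines the one used in the proof of Lemma~\ref{lemma:EquisatisfiabilityFreeInstantiation}.

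For the $(\Leftarrow)$-direction, let $\cB \models M_n$. Since $N = M_0 \subseteq M_n$, we have $\cB \models N$ immediately. Each functionality axiom in $\Phi_1$ is witnessed by groups~3 and~4 of the corresponding $\hPhi(R_i, M_{i-1}) \subseteq M_n$: whenever $R_i(\bar{a}, u)$ and $R_i(\bar{a}, u')$ hold in $\cB$, group~3 forces $u \simeq d_{R_i\bar{c}}$ and $u' \simeq d_{R_i\bar{c}'}$ for some tuples $\bar{c}, \bar{c}' \in \cI_{\downcl_{M_{i-1}}\<R_i,\cdot\>}^{[m]}$, and then group~4 instantiated at $\bar{a}$ yields $d_{R_i\bar{c}} \simeq d_{R_i\bar{c}'}$, whence $u \simeq u'$. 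Each totality axiom in $\Phi_2$ is immediate from group~2, which explicitly asserts for every $\bar{x}$ that some $d_{R_i\bar{c}}$ is an image.

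For the $(\Rightarrow)$-direction, given $\cA \models N \cup \Phi_1 \cup \Phi_2$, the axioms in $\Phi_1 \cup \Phi_2$ make each $R_i^\cA$ the graph of a total function $f_i$. I first interpret the fresh constants by $d_{R_i c_1 \ldots c_m}^{\cA'} := f_i(c_1^\cA, \ldots, c_m^\cA)$. With these interpretations, groups~1, 3, and~4 of every $\hPhi(R_i, M_{i-1})$ hold automatically, by functionality of $f_i$. Group~2 is the only delicate part: it requires the range of $f_i$ to lie inside $\{d_{R_i\bar{c}}^{\cA'} \mid \bar{c} \in \cI_{\downcl_{M_{i-1}}\<R_i,\cdot\>}^{[m]}\}$, which in general fails. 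To enforce it, I adapt the projection construction from the proof of Lemma~\ref{lemma:EquisatisfiabilityFreeInstantiation}. For every argument position closure $\downcl\<P,j\>$ I fix a projection $\pi_{\downcl\<P,j\>}$ onto representatives chosen among the interpretations of constants in $\cI_{\downcl\<P,j\>}$ (and, for output positions of marked $R_i$, among the $d_{R_i\bar{c}}^{\cA'}$); then I redefine each $P^{\cA'}$ by declaring $\<\fa_1, \ldots, \fa_m\> \in P^{\cA'}$ iff $\<\pi_{\downcl\<P,1\>}(\fa_1), \ldots, \pi_{\downcl\<P,m\>}(\fa_m)\> \in P^\cA$.

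The hard part is verifying consistency. Three facts about stratified and guarded clause sets are critical. First, Condition~\ref{enum:StratifiedClauseSet:I:I} guarantees that all argument positions related by $\prop_N$ carry the same level, so projections are well-defined on closures. Second, Condition~\ref{enum:StratifiedClauseSet:I:II} together with the ordering $R_1 \succeq \ldots \succeq R_n$ ensures that projections on lower-level positions (the output of $R_i$) do not interfere with projections already fixed on higher-level positions (the arguments of $R_i$); processing the $R_i$ from highest to lowest output level keeps the construction coherent as $\hPhi$-layers are added successively. Third, guardedness Condition~\ref{enum:StratifiedClauseSet:II} ensures that every equational consequent $u \approx t$ is either protected by an $R_j$-guard linking $u$ (and, if $t = v$, also $v$) to the same closure, so both sides are projected consistently, or occurs in a closure disjoint from any marked $R_j$'s output, in which case the projection acts as identity on its arguments and preserves equality. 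With these three ingredients, a clause-by-clause verification following the pattern of Lemma~\ref{lemma:EquisatisfiabilityFreeInstantiation} confirms $\cA' \models M_n$.
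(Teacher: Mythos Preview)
Your $(\Leftarrow)$-direction is fine and matches the paper. In the $(\Rightarrow)$-direction the overall strategy---interpret the fresh constants via the functions encoded by $\Phi_1\cup\Phi_2$ and then apply a projection---is also what the paper does, but two points in your sketch are not right.

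First, the claim that group~3 of $\hPhi(R_i,M_{i-1})$ holds ``automatically, by functionality of $f_i$'' is false. Group~3 asserts $\forall \bar{x}\, u.\; R_i(\bar{x},u) \rightarrow \bigvee_{\bar{c}} u \simeq d_{R_i\bar{c}}$, i.e.\ that the \emph{range} of $f_i$ is contained in the finite set $\{d_{R_i\bar{c}}^{\cA'}\}$. This is exactly the same finiteness obstruction you identify for group~2, and it too requires the projection construction. So both groups~2 and~3 are delicate, not just group~2.

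Second, and more seriously, your projection is applied uniformly to \emph{all} argument positions, including the output position $m{+}1$ of each marked $R_i$. This destroys the function-graph property: if two distinct domain elements $\fb \neq \fb'$ satisfy $\pi_{\downcl\<R_i,m+1\>}(\fb) = \pi_{\downcl\<R_i,m+1\>}(\fb')$, then your definition puts both $\<\bar{\fa},\fb\>$ and $\<\bar{\fa},\fb'\>$ into $R_i^{\cA'}$, so groups~3 and~4 (which together imply functionality) cannot both hold. The paper avoids this by treating marked and unmarked predicates asymmetrically: for marked $R$, the last coordinate is \emph{not} projected, i.e.\ $\<\bar{\fa},\fb\>\in R^\cB$ iff $\<\pi(\bar{\fa}),\fb\>\in R^\cA$. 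This keeps each $R^\cB$ a genuine function graph with finite range. To make the clause-by-clause verification go through with this asymmetric definition, the paper enlarges the instantiation sets to ``artificial'' sets $\hI_{\downcl\<P,i\>}\supseteq\cI_{\downcl\<P,i\>}$ that additionally contain all $d_{R\bar{c}}$ with $\<R,m{+}1\>\prop\<P,i\>$; stratification is what guarantees these sets stay finite. Guardedness (Condition~(\ref{enum:StratifiedClauseSet:III}) and~(\ref{enum:StratifiedClauseSet:II})) is then used exactly to handle atoms $R_j(\ldots,v)$ in $\Delta$ and equations $u\approx t$: the guard $R_\ell(\bar{t},v)\in\Gamma$ lets one argue that either $\beta(v)=\beta_\pi(v)$ (so the atom transfers) or the guard itself is falsified under $\cB,\beta$. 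Your sketch gestures at this last point but does not account for the asymmetric treatment of the output position that makes it work.
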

\begin{proof}[Proof sketch]
	Any hierarchic model of $\hPhi(R_1,M_0) \cup \ldots \cup \hPhi(R_n,M_{n-1})$ is also a hierarchic  model of $\Phi_1 \cup \Phi_2$.
	Hence, any hierarchic model of $M_n$ is also a hierarchic model of $N \cup \Phi_1 \cup \Phi_2$.
	Conversely, from any hierarchic model $\cA \models N \cup \Phi_1 \cup \Phi_2$ we can construct a hierarchic interpretation $\cB$ that is a model of both sets $N \cup \Phi_1 \cup \Phi_2$ and $M_n$, and for which the following set is finite for any $R: \xi_1 \times \ldots \times \xi_m \times \xi_{m+1}$ that is marked in $N$:\\
	\centerline{$\bigl\{ \fb \in \xi_{m+1}^\cB \mid \text{there are $\fa_1, \ldots, \fa_m$ such that $\<\fa_1, \ldots, \fa_m, \fb\> \in R^\cB$} \bigr\}$.}
	We develop the details of this construction in the proof of Lemma~\ref{lemma:MarkedPredicatesAsFunctionGraphsInFEU} in the appendix.
	Having $\cB$, we show that $\cB \models N \cup \Phi_1 \cup \Phi_2$ (Lemma~\ref{lemma:MarkedPredicatesAsFunctionGraphsInFEU}) and that $\cB \models M_n$ (Lemma~\ref{lemma:ArtificialAndExistentialInstantiationPoints}).
\end{proof}
This lemma entails that all the instantiation methods developed in Section~\ref{section:DecidingBSRwithSLIC} can be used to decide satisfiability of stratified and guarded \BsrSli\ clause sets.
 
 \begin{remark}
	In the definition of the sets $\hPhi(R,M)$ we refrained from optimizing the number of instantiation points by means of using $\cI_{\upcl_{M}\<P,i\>}$ instead of $\cI_{\downcl_{M}\<R,i\>}$ where this would lead to fewer instances.
	It is clear however, that this sort of optimization is compatible with the taken approach.
\end{remark}


We can add another background theory to the stratified and guarded fragment of \BsrSli\ while preserving compatibility with our instantiation approach.
Let $\Pi_\cT$ and $\Omega_\cT$ be finite sets of sorted predicate symbols and sorted function symbols, respectively, and let $\cT$ be some theory over $\Pi_\cT$ and $\Omega_\cT$.
We assume that $\Pi_\cT$ is disjoint from the set $\Pi$ of uninterpreted predicate symbols.
For any set $X$ of variables, let $\Tterms(X)$ be the set of all well-sorted terms constructed from the variables in $X$ and the function and constant symbols in $\Omega_\cT$.

\begin{definition}[\BsrSliT]\label{definition:RestrictedBSRwithConstrSyntax}	
	A clause set $N$ belongs to \emph{\BsrSliT} if it complies with the syntax of a \BsrSli\ clause set that is stratified and guarded with respect to certain predicate symbols $R_1, \ldots, R_n$ with the following exceptions.
	Let $C := \Lambda \,\|\, \Gamma \rightarrow \Delta$ be a clause in $N$.
	We allow atoms $P(s_1, \ldots, s_m)$ with $P \in \Pi_\cT$ and $s_1, \ldots, s_m \in \Tterms(V_\cZ \cup V_\cS)$---including equations $s_1 \approx s_2$---, if for every variable $u$ occurring in any of the $s_i$ there is either a LIA guard of the form $u = t$ in $\Lambda$ with $t$ being ground, or there is a guard $R_j(t_1, \ldots, t_{m'}, u)$ in $\Gamma$.
\end{definition}

The instantiation methods presented in Section~\ref{section:DecidingBSRwithSLIC} are also applicable to \BsrSliT, since Lemma~\ref{lemma:StratifiedClauseSetsIntoBSR} can be extended to cover finite \BsrSliT\ clause sets.
When computing instantiation points for \BsrSliT\ clause sets, we ignore $\cT$-atoms.
For example, a clause $\| R(t,u), P(s,c) \rightarrow P(s',u), Q(u)$ where $P(s,c)$ and $P(s',u)$ are $\cT$-atoms, \emph{does not} lead to an instantiation point $c$ for $\downcl\<Q,1\>$.
If we stick to this approach, the proof of Lemma~\ref{lemma:StratifiedClauseSetsIntoBSR} can easily be adapted to handle additional $\cT$-atoms. The involved model construction remains unchanged. $\cT$-atoms are basically treated like guarded free-sort atoms $u \approx d$.

\begin{proposition}
	\BsrSliT\ allows an (un)satisfiability-preserving embedding of the \emph{array property fragment} with in\-teger-indexed arrays and element theory $\cT$ (cf.~\cite{Bradley2006}) and of the \emph{finite essentially uninterpreted fragment} extended with simple integer arithmetic literals (cf.~\cite{Ge2009}) into \BsrSliT.
\end{proposition}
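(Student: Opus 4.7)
The plan is to give two separate but structurally identical embeddings, one for each source fragment, each obtained by applying the flattening rules of Lemma~\ref{lemma:FunctionFlattening} to every uninterpreted non-constant function (in the array case, every array symbol viewed as a function $\Int \to \tau$ where $\tau$ is the element sort) and then checking that the resulting clause set satisfies all the syntactic conditions of Definitions~\ref{definition:StratifiedClauseSet} and~\ref{definition:RestrictedBSRwithConstrSyntax}. The $\cT$-atoms of \BsrSliT\ will absorb the element-theory atoms from the array fragment and the arithmetic/\,theory atoms in Ge--de~Moura's fragment; the marked predicates $R_1, \ldots, R_n$ will represent the graphs of the flattened function symbols, and the accompanying axiom sets $\Phi_1, \Phi_2$ of Lemma~\ref{lemma:FunctionFlattening} are exactly the ones already handled by Lemma~\ref{lemma:StratifiedClauseSetsIntoBSR}.

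For the array property fragment I would first put each input sentence $\forall \overline{\imath}.\,\varphi_I(\overline{\imath}) \rightarrow \varphi_V(\overline{\imath})$ into clause normal form, where $\overline{\imath}$ ranges over $\cZ$; then replace every read $a[i]$ by a fresh free-sort variable $v$ and push the atom $R_a(i,v)$ into the antecedent using the rule \texttt{(fun left)}. Writes $a[i\mathbin{:=}e]$ are eliminated in the usual way via McCarthy axioms, which have the same shape and so flatten identically. Stratification is witnessed by $\lvl\<R_a,1\> := 1 > 0 =: \lvl\<R_a,2\>$, and by giving level~$1$ to every argument position connected (via $\prop$) to an index position and level~$0$ to every position connected to an element position; because array indices never contain reads, no variable forces a collision of these two levels. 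The index-guard portion of $\varphi_I$ produces exactly the constraint shapes allowed in the normal form of Definition~\ref{definition:BSRwithConstrNormalform}, and every newly introduced equation $u \approx v$ or $u \approx c$ is by construction accompanied by an antecedent atom $R_a(\overline{t},u)$, so condition~\ref{enum:StratifiedClauseSet:II:I} holds; the remaining element-theory literals become the $\cT$-atoms of \BsrSliT\ and are guarded either by the $R_a$-atoms (for element-sort variables born from flattening) or by the LIA equalities the index variables carry.

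The finite essentially uninterpreted fragment with simple integer arithmetic is handled by the same recipe: Proposition~2 of~\cite{Ge2009} already supplies a level function on predicate and function argument positions, so after replacing each function symbol $f$ by a marked predicate $R_f$ via the rules of Lemma~\ref{lemma:FunctionFlattening} one obtains exactly the stratification required by Condition~\ref{enum:StratifiedClauseSet:I}. Ge and de~Moura's ``essentially uninterpreted'' restriction says that every universally quantified variable appears as an argument of some uninterpreted symbol; after flattening this becomes exactly the guardedness requirement~\ref{enum:StratifiedClauseSet:III} together with the side-condition in Definition~\ref{definition:RestrictedBSRwithConstrSyntax} that variables appearing in $\cT$-atoms are guarded by an $R_j$-atom or a LIA equality. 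Equisatisfiability with the original formula then follows from Lemma~\ref{lemma:FunctionFlattening} combined with Lemma~\ref{lemma:StratifiedClauseSetsIntoBSR}.

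The main obstacle I expect is checking case~\ref{enum:StratifiedClauseSet:II} cleanly, especially the disjointness of argument-position closures in~\ref{enum:StratifiedClauseSet:II:II} whenever an unguarded equation $u \approx v$ slips through the flattening (for instance, coming from a $\texttt{select}/\texttt{store}$ rewrite whose rhs is a bare variable). Handling this cleanly requires fixing an order in which the flattening rules are applied---always flatten $f_i(\bar{s}) \approx x$ on the \emph{left} before any right-hand-side occurrence has been introduced, so that every variable generated by \texttt{(fun right)} comes with its own guard---and arguing that the resulting propagation relation $\prop$ keeps element positions at a strictly lower level than index positions, which in turn makes the closure-disjointness condition automatic. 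Once this bookkeeping is in place, the proposition reduces to a mechanical verification, and the decidability conclusion for both fragments transfers from Theorem~\ref{theorem:DecidabilityOfBSRwithSimpleBounds} via Lemma~\ref{lemma:StratifiedClauseSetsIntoBSR} and its \BsrSliT-extension.
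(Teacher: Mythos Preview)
The paper does not actually prove this proposition: it is stated without proof and merely illustrated by the subsequent array example. So there is no ``paper's own proof'' to compare against; your proposal is filling a gap the authors left open.

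That said, your sketch is exactly the argument the surrounding machinery is set up to support. Flattening via Lemma~\ref{lemma:FunctionFlattening}, marking the resulting $R_a$/$R_f$ predicates, reading off the level function from the array index/element split (respectively from Proposition~2 of~\cite{Ge2009}), and then invoking the \BsrSliT\ extension of Lemma~\ref{lemma:StratifiedClauseSetsIntoBSR} is the intended route, and the paper explicitly signposts it by citing Proposition~2 of~\cite{Ge2009} right after Definition~\ref{definition:StratifiedFunctionalClauseSet} and by the worked array example. Your identification of Condition~\ref{enum:StratifiedClauseSet:II} as the delicate point is accurate: in the array property fragment all element-sort equalities arise from flattening reads and are therefore guarded by some $R_a(\ldots,u)$, and in the finite essentially uninterpreted fragment the defining restriction forces every universally quantified variable through an uninterpreted symbol, which after flattening yields the required guard. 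One small caveat: the rule \texttt{(fun-var right)} in Lemma~\ref{lemma:FunctionFlattening} can produce an unguarded $R_j(\bar{t},x)$ when $x$ is a pre-existing variable, so your remark about ordering the flattening steps (or, equivalently, observing that in both source fragments such a pattern $f(\bar{s}) \approx x$ on the right never arises with $x$ universally quantified and otherwise unconstrained) is necessary and not just bookkeeping.
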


\begin{example}
	The following formula $\varphi$ belongs to the array property fragment with integer indices and the theory of bit vectors as the element theory.
	The operator $\sim$ stands for bitwise negation of bit vectors and the relations $\preceq$ and $\approx$ are used as the ``at most'' and the equality predicate on bit vectors, respectively.
	Moreover, $a[i]$ denotes a read operation on the array $a$ at index $i$. \\
	\centerline{$
		\begin{array}{r@{\hspace{2ex}}r@{\hspace{2ex}}c@{\hspace{2ex}}rr@{\;\;}l}
			\varphi := 	&c \geq 1 	&\wedge &\forall i j.	&0 \leq i \leq j	&\rightarrow\;\; a[i] \preceq a[j] \\
					&		&\wedge &\forall i.		&0 \leq i \leq c-1  	&\rightarrow\;\; a[i] \preceq {\sim\! a[0]} \\
					&		&\wedge &			&			&\rightarrow\;\; a[c] \approx {\sim\! a[0]} \\
					&		&\wedge &\forall i.		&i \geq c+1  		&\rightarrow\;\; a[i] \succeq {\sim\! a[0]} 
		\end{array}
	$}
	Translating $\varphi$ into \BsrSliT\ yields the following clause set $N$, in which we consider $P_a$ to be marked. \\
	\centerline{
		\begin{tabular}{c@{\hspace{4ex}}|@{\hspace{4ex}}c}
			$\begin{array}{r@{\;\;\|\;\;}r@{\;\;\rightarrow\;\;}l}
				c < 1 			& 				& \Box \\
				e \neq c-1		& 				& \Box \\
				f \neq c+1		& 				& \Box \\
			\end{array}$
			&			
			$\begin{array}{r@{\;\;\|\;\;}r@{\;\;\rightarrow\;\;}l}
				0 \leq i, i \leq j		& P_a(i, u), P_a(j,v)	& u \preceq v \\
				0 \leq i, i \leq e , y = 0	& P_a(i, u), P_a(y,v)	& u \preceq {\sim\! v} \\
				x = c, y = 0			&  P_a(x, u), P_a(y,v)	& u \approx {\sim\! v} \\
				i \geq f, y = 0		&P_a(i, u), P_a(y,v)	& u \succeq {\sim\! v}
			\end{array}$
		\end{tabular}	
	}
	In order to preserve (un)satisfiability, functional axioms have to be added for $P_a$ (cf.\ the sets $\Phi_1$ and $\Phi_2$ that we used earlier).
	Doing so, we leave \BsrSliT.
	
	The clause set $N$ induces the set $\cI_{\downcl\<P_a, 1\>} = \{c_{-\infty}, 0, c, f\}$ of instantiation points for the index of the array. 
	An adaptation of Lemma~\ref{lemma:StratifiedClauseSetsIntoBSR} for \BsrSliT\ entails that adding the clause set $N'$ corresponding to the following set of sentences yields a \BsrSliT\ clause set $N \cup N'$ that is equisatisfiable to $\varphi$. \\	
	\centerline{$	
		\begin{array}{l}
			\bigl\{ P_a(c', d_{P_a c'}) \bigm| c' \in \{c_{-\infty}, 0, c, f\} \bigr\} \\
			\cup\; \bigl\{ \forall i. \bigvee_{c' \in \{c_{-\infty}, 0, c, f\}}  P_a(i, d_{P_a c'}) \bigl\} \\
			\cup\; \bigl\{ \forall i u.\; P_a(i, u) \rightarrow \bigvee_{c' \in \{c_{-\infty}, 0, c, f\}}  u \approx d_{P_a c'} \bigr\} \\
			\cup\; \bigl\{ \forall i .\; P_a(i, d_{P_a c'}), P_a(i, d_{P_a c''}) \rightarrow d_{P_a c'} \approx d_{P_a c''} \bigm| c',c'' \in \{c_{-\infty}, 0, c, f\} \bigr\} 
		\end{array}
	$}	
	Using the instantiation methods that we have developed in Sections~\ref{section:BaseSortInstantiation} -- \ref{section:FreeSortInstantiation}, the set $N \cup N'$ can be turned into an equisatisfiable quantifier-free clause set.
	One possible (uniform) model $\cA \models N \cup N'$ assigns $c_{-\infty}^\cA = -1$, $e^\cA = 2$, $c^\cA = 3$, $f^\cA = 4$, $d_{P_a c_{-\infty}}^\cA = 00$, $d_{P_a 0}^\cA = 01$, $d_{P_a e}^\cA = 01$, $d_{P_a c}^\cA = 10$, $d_{P_a f}^\cA = 11$, and yields the array $\< 01, 01, 01, 10, 11, 11, 11, \ldots\>$.
	\qed
\end{example}
In the original array property fragment~\cite{Bradley2006} no nestings of array read operations are allowed. 
The stratification criterion in \BsrSliT\ prevents nestings of the form $a[a[i]]$, but it does not prevent nestings of the form $a[b[i]]$ with $a \neq b$.
In this sense, but not only in this sense, our fragment allows more freedom in formulating properties of arrays than the original array property fragment.


\section{Discussion}\label{section:conclusion}

We have demonstrated how universally quantified variables in \BsrSli\ clause sets can be instantiated economically.
In certain cases our methods lead to exponentially fewer instances than a naive instantiation with all occurring integer terms would generate.
Moreover, we have sketched how defining suitable finite-domain sort predicates instead of explicitly instantiating variables can avoid immediate blow-ups caused by explicit instantiation.
It is then left to the theorem prover to actually instantiate variables as needed.

We have shown that our methods are compatible with uninterpreted, non-constant functions under certain restrictions.
Even another background theory $\cT$ may be added, leading to \BsrSliT.
This entails applicability of our instantiation approach to known logic fragments, such as the \emph{array property fragment} \cite{Bradley2006}, the \emph{finite essentially uninterpreted fragment with arithmetic literals}~\cite{Ge2009}, and many-sorted first-order formulas over \emph{stratified vocabularies}~\cite{Abadi2010,Korovin2013}.

The instantiation methodology that we have described specifically for integer variables can also be adapted to work for universally quantified variables ranging over the reals \cite{VoigtWeidenbachArXiv2015}.
Our computation of instantiation points 
considers all argument positions in predicate atoms independently. 
This can be further refined by considering
dependencies between argument positions and clauses. 
For example, this refinement idea was successfully applied in first-order logic \cite{ClaessenLS11,Korovin2013}.

Once all the integer variables are grounded by successive instantiation, we are left with a clause set where for every integer variable $x$ in any clause there is a defining
equation $x = c$ for some constant $c$. 
Thus, the clause set can actually be turned into a standard first-order BSR clause
set by replacing the integer constants with respective fresh uninterpreted constants. 
Then, as an alternative to further grounding the free-sort variables, any state-of-the-art BSR decision procedure can be applied to test satisfiability~\cite{PiskacEtAl10,Korovin13,AlagiWeidenbach15}.
It is even sufficient to know the instantiation sets for the base sort variables. 
Then, instead of explicit grounding, by defining
respective finite-domain sort predicates for the sets, the worst-case exponential blow-up of grounding can be prevented, as outlined in Section~\ref{section:AvoidingNonlinearBlowups}.



\newpage

\appendix

\section{Appendix}

\subsection{Details Concerning Section~\ref{section:BaseSortInstantiation}}
\subsubsection*{Proof of Lemma~\ref{lemma:EquisatisfiabilityBaseInstantiation}}
\begin{lemma*}
	Let $N$ be a finite \BsrSli\ clause set in normal form such that, if the constant symbol $c_{-\infty}$ occurs in $N$, then $\Psi^{-\infty}_{N} \subseteq N$.
	Suppose there is a clause $C$ in $N$ which contains a base-sort variable $x$. Let $\hN_x$ be the clause set 
		$\hN_x := \bigl(N\setminus\{ C \}\bigr)  \;\cup\; \bigl\{C\subst{x/c}	\;\bigm|\; c\in\cI_{\downcl_N(x)} \bigr\} \;\cup\; \Psi^{-\infty}_{N}$.
	$N$ is satisfiable if and only if $\hN_x$ is satisfiable. 
\end{lemma*}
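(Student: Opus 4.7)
The plan is to handle the two implications of this equisatisfiability statement separately, with the forward direction being essentially a soundness remark and the backward direction requiring a hierarchic model construction that mimics the one used to prove the finite model property of BSR.

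For the ``only if'' direction, I would start from an arbitrary hierarchic model $\cA \models N$. Either $c_{-\infty}$ already occurs in $N$, in which case $\Psi^{-\infty}_{N} \subseteq N$ by assumption and so $\cA \models \Psi^{-\infty}_{N}$; or $c_{-\infty}$ does not occur in $N$ at all, in which case I may freely (re)interpret $c_{-\infty}^\cA$ as any integer strictly smaller than every other base-sort constant value occurring in $N$, without disturbing $\cA \models N$. In either situation $\cA$ satisfies every ground instance $C\subst{x/c}$ because $C$ is universally quantified, so $\cA \models \hN_x$.

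For the ``if'' direction, given $\cA \models \hN_x$, I would construct $\cB$ by reusing the free domain $\cS^\cA$, copying all constant interpretations, and defining each free predicate $P^\cB$ of arity $m$ as the pullback of $P^\cA$ through a family of projection functions $\pi_{\downcl\<P,i\>}: \Int \cup \cS^\cA \to \Int \cup \cS^\cA$. Each $\pi_{\downcl\<P,i\>}$ acts as the identity on $\cS^\cA$ and sends an integer $\fa$ to the lower endpoint of the unique interval $p \in \cP_{\downcl\<P,i\>}^\cA$ containing $\fa$. Lemma~\ref{lemma:PartitionRepresentatives} supplies a constant symbol $c_{\downcl\<P,i\>,p} \in \cI_{\downcl\<P,i\>}$ witnessing that endpoint for every bounded interval; the axioms in $\Psi^{-\infty}_{N}$ pin $c_{-\infty}^\cA$ into the unbounded leftmost interval so that $c_{-\infty}$ serves as the representative there. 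To verify $\cB \models N$, I would fix an arbitrary clause $C' = \Lambda' \,\|\, \Gamma' \to \Delta'$ in $N$ and an arbitrary assignment $\beta$, introduce the auxiliary assignment $\beta_\pi(v) := \pi_{\downcl(v)}(\beta(v))$, and first establish the intermediate claim $\cA, \beta_\pi \models C'$: if $C' \neq C$ this is because $C' \in \hN_x$; if $C' = C$, then letting $p_*$ be the class of $\beta(x)$ in $\cP_{\downcl(x)}^\cA$ and $c_* := c_{\downcl(x),p_*}$, the instance $C\subst{x/c_*}$ lies in $\hN_x$ and $\beta_\pi(x) = c_*^\cA$.

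The main obstacle, and the heart of the argument, is transferring $\cA, \beta_\pi \models C'$ back to $\cB, \beta \models C'$. This decomposes into three implications: $\cB, \beta \models \Lambda'$ implies $\cA, \beta_\pi \models \Lambda'$, $\cB, \beta \models \Gamma'$ implies $\cA, \beta_\pi \models \Gamma'$, and $\cA, \beta_\pi \models \Delta'$ implies $\cB, \beta \models \Delta'$. The last two are direct from the pullback definition of $P^\cB$ together with the identity $\downcl(x_i) = \downcl\<P,i\>$ whenever $x_i$ occurs as the $i$-th argument of a $P$-atom, and from the fact that projections act as the identity on $\cS^\cA$. The first one requires a case analysis on the normal-form shape of LIA constraints: upper bounds $x \leq d$ transfer immediately since $\beta_\pi(x) \leq \beta(x) \leq d^\cA$; for lower bounds $d \leq x$ and equalities $x \equals d$ one uses that $d \in \cI_{\downcl(x)}$ by Definition~\ref{definition:BaseInstantiationPoints}, so $d^\cA$ is one of the partition endpoints and thus collapses to $\beta_\pi(x)$ in the expected way; the most delicate case is $x \leq y$, where Lemma~\ref{lemma:BaseVariableInstantiationPointsClosure} tells us that $\cP_{\downcl(y)}^\cA$ refines $\cP_{\downcl(x)}^\cA$, and a case split on whether $\beta(x)$ and $\beta(y)$ lie in the same coarse interval of $\cP_{\downcl(x)}^\cA$ yields $\beta_\pi(x) \leq \beta_\pi(y)$ from $\beta(x) \leq \beta(y)$.
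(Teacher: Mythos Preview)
Your construction and overall strategy coincide with the paper's: same projection maps $\pi_{\downcl\<P,i\>}$, same pullback definition of $P^\cB$, same auxiliary assignment $\beta_\pi$, and the same reduction of $\cB,\beta\models C'$ to $\cA,\beta_\pi\models C'$ via an atom-by-atom transfer (the paper phrases it contrapositively, starting from a falsified atom under $\cA,\beta_\pi$, but that is the same argument).

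There is one genuine slip in your constraint analysis. For upper bounds $x\leq d$ you write ``$\beta_\pi(x)\leq\beta(x)\leq d^\cA$'', but $\beta_\pi(x)\leq\beta(x)$ is \emph{false} when $\beta(x)$ lies in the leftmost interval $(-\infty,r_1-1]$: there the projection sends $\beta(x)$ to $c_{-\infty}^\cA$, and nothing prevents $\beta(x)<c_{-\infty}^\cA$. The desired conclusion $\beta_\pi(x)\leq d^\cA$ still holds in that case, but for a different reason---namely $\beta_\pi(x)=c_{-\infty}^\cA<d^\cA$ via the axioms in $\Psi^{-\infty}_{N}$ (or trivially if $d=c_{-\infty}$). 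The paper singles out this unbounded-interval case explicitly and closes it with $\Psi^{-\infty}_{N}$; you should do the same rather than rely on a monotonicity claim that does not hold globally. The analogous subtlety resurfaces in the $x\leq y$ case when $\beta(x)$ or $\beta(y)$ falls into a leftmost interval, so your case split there also needs a separate sub-case invoking $c_{-\infty}$.
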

\begin{proof}
	The ``only if''-part is trivial.
			
	The ``if''-part requires a more sophisticated argument.
		In what follows, the notations $\prop$ and $\downcl$ always refer to the original clause set $N$.
		Let $\cA$ be a hierarchic model of $\hN_x$.
		We use $\cA$ to construct the hierarchic model $\cB$ as follows.
		For the domain $\cS^\cB$ we reuse $\cA$'s free domain $\cS^\cA$. 
		For all base-sort and free-sort constant symbols $c\in \consts(N)$, we set $c^{\cB} := c^{\cA}$.
		For every predicate symbol $P:\xi_1\times\ldots\times\xi_m \in \Pi$ that occurs in $N$, for every argument position $i$, $1\leq i\leq m$, with $\xi_i = \cZ$, and for every interval $p\in \cP_{\downcl\<P,i\>}^\cA$ Lemma~\ref{lemma:PartitionRepresentatives} and the extra clauses in $\Psi_N^{-\infty}$ guarantee the existence of a base-sort constant symbol $c_{\downcl\<P,i\>, p}\in \cI_{\downcl(x)}$, such that $c_{\downcl\<P,i\>, p}^\cA \in p$.
		Based on this observation, we define the family of projection functions $\pi_{\downcl\<P,i\>} : \Int\cup\cS^\cB \to \Int\cup\cS^\cA$ by
					\[ \pi_{\downcl\<P,i\>}(\fa) :=
						\begin{cases}
							c_{\downcl\<P,i\>, p}^\cA		&\text{if $\xi_i = \cZ$ and $p\in \cP_{\downcl\<P,i\>}^\cA$}\\
											&\text{is the interval $\fa$ lies in,} \\
							\fa				&\text{if $\xi_i = \cS$.}
						\end{cases}
					\]
		Using the projection functions $\pi_{\downcl\<P,i\>}$, we define the sets $P^\cB$ so that for all domain elements $\fa_1, \ldots, \fa_m$ of appropriate sorts $\bigl\<\fa_1, \ldots, \fa_m\bigr\> \in P^{\cB}$ if and only if $\bigl\<\pi_{\downcl\<P,1\>}(\fa_1), \ldots,$ $\pi_{\downcl\<P,m\>}(\fa_m)\bigr\> \in P^\cA$.

		We next show $\cB\models N$. Consider any clause $C' := \Lambda' \;\|\; \Gamma' \to \Delta'$ in $N$ and let $\beta : V_\cZ\cup V_\cS\to \Int\cup \cS^\cB$ be an arbitrary variable assignment. 
		From $\beta$ we derive a special variable assignment $\beta_\pi$ for which we shall infer $\cA,\beta_\pi \models C'$ as an intermediate step:
			$\beta_\pi(v) := \pi_{\downcl(v)}(\beta(v))$
		for every variable $v$.
		If $C' \neq C$, then $\hN_x$ already contains $C'$, and thus $\cA,\beta_\pi\models C'$ must hold.
		In case of $C' = C$, let $p_*$ be the interval in $\cP_{\downcl(x)}^\cA$ containing the value $\beta(x)$, and let $c_*$ be an abbreviation for $c_{\downcl(x), p_*}$. Due to $\beta_\pi(x) = c_*^\cA$ and since $\cA$ is a model of the clause $C\subst{x/c_*}$ in $\hN_x$, we conclude $\cA,\beta_\pi \models C$. 
		Hence, in any case we can deduce $\cA,\beta_\pi\models C'$. 
		By case distinction on why $\cA,\beta_\pi \models C'$ holds, we may use this result to infer $\cB,\beta\models C'$.
		\begin{description}
			\item Case $\cA, \beta_\pi \not\models s\triangleleft t$ for some ground atomic constraint $s\triangleleft t$ in $\Lambda'$. Since $\cB$ and $\cA$ interpret constant symbols in the same way and independently of a variable assignment, we immediately get $\cB, \beta\not\models s\triangleleft t$.
			
			\item Case $\cA,\beta_\pi \not\models (y \trianglelefteq d) \in \Lambda'$ for some base-sort variable $y$, some constant symbol $d$, and $\trianglelefteq {\in} \{\leq,$ $=, \geq\}$. This means $\beta_\pi(y) \mathrel{\not\!\trianglelefteq} d^\cA$. Let $p$ be the interval from $\cP_{\downcl(y)}^\cA$ that contains $\beta(y)$ and therefore also $\beta_\pi(y)$. 
				\begin{description}
					\item If $d^\cA$ lies outside of $p$, then $\beta_\pi(y) \trianglelefteq d^\cA$ if and only if $\beta(y) \trianglelefteq d^\cA$, since $\beta_\pi(y) \in p$ and $\beta(y) \in p$. Thus, $d^\cB = d^\cA$ entails $\cB, \beta \not\models y \trianglelefteq d$.
					
					\item If $p$ is the point interval $p = \{d^\cA\}$, then $\beta(y) = \beta_\pi(y) = d^\cA$, and thus $\cB,\beta \not\models y \trianglelefteq d$.
					
					\item Suppose $p = [r_\ell, r_u]$ and $r_\ell < d^\cA \leq r_u$, then $\trianglelefteq {\neq} \leq$, since $\beta_\pi(y) = c_{\downcl(y), p}^\cA = r_\ell < d^\cA$ (by Lemma~\ref{lemma:PartitionRepresentatives}). 
					Moreover, we conclude $d\not\in \cI_{\downcl(y)}$, since otherwise $p$ would be of the form $p = [d^\cA, r_u]$ by the construction of $\cP_{\downcl(y)}^\cA$. 
					Therefore, $\trianglelefteq {\not\in} \{\equals, \geq\}$, since otherwise the instantiation point $d$ would be in $\cI_{\downcl(y)}$. 
					But this contradicts our assumption that $\trianglelefteq {\in} \{\leq, \equals, \geq\}$.
					
						The case $p = [r_\ell, +\infty)$ with $r_\ell < d^\cA$ can be handled by similar arguments.

					\item Suppose $p = [d^\cA, r_u]$ and $d^\cA < r_u$, then $\beta_\pi(y) = c_{\downcl(y), p}^\cA = d^\cA$ by Lemma~\ref{lemma:PartitionRepresentatives}. 
						Consequently, \mbox{$\trianglelefteq {\not\in} \{\leq, \equals, \geq\}$}. 
						This contradicts the assumptions we made regarding the syntax of the constraint $y \trianglelefteq d$.
					
						The same applies in the case $p = [d^\cA, +\infty)$.
						
					\item Suppose $p = (-\infty, r_u]$ with $d^\cA \leq r_u$ or $p = (-\infty, +\infty)$.
						We know $c_{-\infty}^\cA \in p$ due to the extra clauses in $\hN_x$.
						\begin{description}
							\item If $c_{-\infty}^\cA = d^\cA$, then $d = c_{-\infty}$.
								Since we also have $\beta_\pi(y) = c_{-\infty}^\cA$, $\trianglelefteq$ cannot be one of the relations $\leq, \equals, \geq$.
							
							\item If $c_{-\infty}^\cA \neq d^\cA$, the fact that $d^\cA$ lies within $p$ entails that $d$ does not belong to $\cI_{\downcl(y)}$.
								Hence, $\triangleleft {\not\in} \{ \equals, \geq \}$.
								Therefore, we observe $\beta_\pi(y) > d^\cA$.
								But $\beta_\pi(y) = c_{-\infty}^\cA$ then leads to a contradiction with the clauses in $\Psi^{-\infty}_{N}$.
								
						\end{description}						
				\end{description}
				
			\item Case $\cA, \beta_\pi \not\models (y \leq z) \in \Lambda'$ for some base-sort variables $y, z$. This means $\beta_\pi(y) > \beta_\pi(z)$.
				Since $N$ is in normal form, we know that $\Gamma \to \Delta$ must contain atoms $P(\ldots, y, \ldots)$ and $R(\ldots, z, \ldots)$.
				By Lemma~\ref{lemma:BaseVariableInstantiationPointsClosure}, it follows that the partition $\cP_{\downcl(z)}^\cA$ is a refinement of $\cP_{\downcl(y)}^\cA$.
				
				Let $p_y = [r_\ell^y, r_u^y] \in \cP_{\downcl(y)}^\cA$ be the interval which contains $\beta(y)$ and let $p_z = [r_\ell^z, r_u^z] \in \cP_{\downcl(z)}^\cA$ be the interval which contains $\beta(z)$. 
				We distinguish several cases.
				\begin{description}
					\item If $\beta_\pi(z)$ lies outside of $p_y$, then $r_\ell^z = \beta_\pi(z) < \beta_\pi(y) = r_\ell^y$ together with the fact that $\cP_{\downcl(z)}^\cA$ is a refinement of $\cP_{\downcl(y)}^\cA$  implies $r_u^z < r_\ell^y$. Hence, $\beta(z) \in [r_\ell^z, r_u^z]$ and $\beta(y) \in [r_\ell^y, r_u^y]$ entail $\beta(z) < \beta(y)$ and thus $\cB, \beta \not\models y \leq z$. 
					
					\item Suppose $\beta_\pi(z)$ lies inside of $p_y$. Since $\cP_{\downcl(z)}^\cA$ is a refinement of $\cP_{\downcl(y)}^\cA$, we must have that $[r_\ell^z, r_u^z] \subseteq [r_\ell^y, r_u^y]$. But then $\beta_\pi(y) = r_\ell^y \leq r_\ell^z = \beta_\pi(z)$ contradicts the observation that $\beta_\pi(y) > \beta_\pi(z)$.
					
					\item Cases where $p_y = [r_\ell^y, +\infty)$ or $p_z = [r_\ell^z, +\infty)$ can be handled similarly.
					
					\item Suppose $p_y$ is of the form $(-\infty, r_u^y]$ or $(-\infty, +\infty)$.
						In this case we have $\beta_\pi(y) = c_{-\infty}^\cA$.
						This contradicts the observation $\beta_\pi(y) > \beta_\pi(z)$.
						
					\item Suppose $p_z$ is of the form $(-\infty, r_u^z]$ or $(-\infty, +\infty)$.
						In this case we have $\beta_\pi(z) = c_{-\infty}^\cA$.
						Since $\cP_{\downcl(z)}^\cA$ is a refinement of $\cP_{\downcl(y)}^\cA$, we either have $p_z \subseteq p_y$ or $p_y$ does not overlap with $p_z$.
						The former contradicts previous observations. 
						Therefore, the latter must apply and $p_y$ must be of the form $[r_\ell^y, r_u^y]$ or $[r_\ell^y, +\infty)$. 
						Moreover, $p_z$ has the form $(-\infty, r_u^z]$ with $r_u^z < r_\ell^y$.
						But then we conclude $\beta(z) \leq r_u^z < r_\ell^y \leq \beta(y)$.
						This observation entails $\cB, \beta \not\models y \leq z$.								
			  	\end{description}

			\item Case $\cA, \beta_\pi \not\models s\approx s'$ for some free atom $s\approx s' \in \Gamma'$. Hence, $s$ and $s'$ are either free-sort variables or constant symbols of the free sort, which means they do not contain subterms of the base sort. Since $\cB$ and $\cA$ behave identical on free-sort constant symbols and $\beta(u) = \beta_\pi(u)$ for any variable $u\in V_\cS$, it must hold $\cB, \beta \not\models s\approx s'$.

			\item Case $\cA, \beta_\pi \models s\approx s'$ for some $s\approx s' \in \Delta'$. Analogous to the above case, $\cB, \beta \models s\approx s'$ holds.
			
			\item Case $\cA, \beta_\pi \not\models P(s_1, \ldots, s_m)$ for some free atom $P(s_1, \ldots, s_m)\in\Gamma'$. This means \\ $\bigl\<\cA(\beta_\pi)(s_1), \ldots, \cA(\beta_\pi)(s_m)\bigr\> \not\in P^\cA$.
				\begin{description}
					\item Every $s_i$ of the free sort is either a constant symbol or a variable. Thus, we have $\cA(\beta_\pi)(s_i) = \cB(\beta)(s_i) = \pi_{\downcl\<P,i\>}(\cB(\beta)(s_i))$, since free-sort constant symbols are interpreted in the same way by $\cA$ and $\cB$, and because $\beta_\pi(u) = \beta(u)$ for every free-sort variable $u$.					
					\item Every $s_i$ that is of the base sort must be a variable. Hence, $\cA(\beta_\pi)(s_i) = c_{\downcl\<P,i\>,p}^\cA = \pi_{\downcl\<P,i\>}(\cB(\beta)(s_i))$, where $p$ is the interval in $\cP_{\downcl\<P,i\>}^\cA$ which contains $\beta(s_i)$ (and thus also $\beta_\pi(s_i)$) and where we have $\downcl(s_i) = \downcl\<P,i\>$.
				\end{description}
			Put together, this yields $\bigl\<\pi_{\downcl\<P,1\>}(\cB(\beta)(s_1)), \ldots, \pi_{\downcl\<P,m\>}(\cB(\beta)(s_m))\bigr\> \not\in P^\cA$. But then, by construction of $\cB$, we have $\bigl\<\cB(\beta)(s_1), \ldots, \cB(\beta)(s_m)\bigr\> \not\in P^\cB$, which entails $\cB, \beta \not\models P(s_1, \ldots,$ $s_m)$.					
													
			\item Case $\cA, \beta_\pi \models P(s_1, \ldots, s_m)$ for some free atom $P(s_1, \ldots, s_m)\in\Delta'$. Analogous to the above case we conclude $\cB, \beta \models P(s_1, \ldots, s_m)$.
		\end{description}
		Altogether, we have shown $\cB\models N$.
\end{proof}

\subsection{Details Concerning Section~\ref{section:FreeSortInstantiation}}
\subsubsection*{Proof of Lemma~\ref{lemma:EquisatisfiabilityFreeInstantiation}}
\begin{lemma*}
	Let $N$ be a finite \BsrSli\ clause set in normal form.
	Suppose there is a clause $C$ in $N$ which contains a free-sort variable $u$. Let $\hN_u := \bigl(N\setminus\{ C \}\bigr) \;\cup\; \bigl\{C\subst{u/c} \;\bigm|\; c\in\cI_{\downcl_N(u)} \bigr\}$.
	$N$ is satisfiable if and only if $\hN_u$ is satisfiable. 
\end{lemma*}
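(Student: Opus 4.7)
The ``only if'' direction is immediate, since $\cA\models N$ implies $\cA\models C\subst{u/c}$ for every $c\in\cI_{\downcl_N(u)}$ and hence $\cA\models\hN_u$. For the ``if'' direction, my plan is to adapt the construction used in the proof of Lemma~\ref{lemma:EquisatisfiabilityBaseInstantiation}. Starting from a hierarchic model $\cA\models\hN_u$, I would build $\cB$ by reusing $\cA$'s interpretation of every constant symbol, by defining each free predicate symbol $P$ via $\bigl\<\fa_1,\ldots,\fa_m\bigr\>\in P^\cB$ iff $\bigl\<\pi_{\downcl\<P,1\>}(\fa_1),\ldots,\pi_{\downcl\<P,m\>}(\fa_m)\bigr\>\in P^\cA$ using the projections from the proof sketch, and by taking the free domain $\cS^\cB$ to be the finite subset $\{c^\cA\mid c\in\fconsts(N)\}$ of $\cS^\cA$, which is nonempty by the normal-form assumption that $N$ contains a free-sort constant.

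The verification amounts to showing $\cB,\beta\models C'$ for every clause $C'\in N$ and every assignment $\beta:V_\cZ\cup V_\cS\to\Int\cup\cS^\cB$. Following the base-sort proof, I would introduce the companion assignment $\beta_\pi(v):=\pi_{\downcl(v)}(\beta(v))$ and first establish $\cA,\beta_\pi\models C'$: if $C'\neq C$ this holds because $C'\in\hN_u$, and if $C'=C$ then $\beta(u)\in\cS^\cB$ equals some $c'^\cA$ with $c'\in\fconsts(N)$, so $\beta_\pi(u)$ is either $c'^\cA$ (when $c'\in\cI_{\downcl(u)}$) or the value $d_{\downcl(u)}^\cA$ of the chosen default constant; in either case $\beta_\pi(u)=c^\cA$ for some $c\in\cI_{\downcl(u)}$, and satisfaction follows from $\cA\models C\subst{u/c}$.

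Transporting $\cA,\beta_\pi\models C'$ to $\cB,\beta\models C'$ then proceeds by the same case distinction on a witnessing atom as in the base-sort proof. LIA constraints are handled exactly as before, since $\beta$ and $\beta_\pi$ agree on integer-sort variables and the integer domain is unchanged. For an uninterpreted atom $P(s_1,\ldots,s_m)$, the defining equivalence of $P^\cB$ yields the desired equivalence of truth values, using that any free-sort constant $c$ appearing at position $i$ of $P$ in $C'$ satisfies $c\in\cI_{\downcl\<P,i\>}$ by Definition~\ref{definition:FreeInstantiationPoints}\ref{enum:FreeInstantiationPoints:I} (so $\pi_{\downcl\<P,i\>}$ fixes $c^\cA$), and that any variable $v$ at that position has $\downcl(v)=\downcl\<P,i\>$.

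The main obstacle is the equational case, because the built-in equality on $\cS$ is not reshaped by the projections and an equational witness $u\approx t$ in a consequent need not survive the passage from $\beta_\pi$ back to $\beta$. The rescuing observation is Definition~\ref{definition:FreeInstantiationPoints}\ref{enum:FreeInstantiationPoints:II}: whenever a variable $u$ (or $v$) occurs both in a non-equational atom and in an equation of the consequent, $\cI_{\downcl(u)}\supseteq\fconsts(N)$. Together with the choice $\cS^\cB=\{c^\cA\mid c\in\fconsts(N)\}$, every $\beta(u)\in\cS^\cB$ equals some $c^\cA$ with $c\in\cI_{\downcl(u)}$, so $\pi_{\downcl(u)}$ fixes $\beta(u)$ and consequently $\beta_\pi(u)=\beta(u)$, and the analogous identity holds for $v$ in $u\approx v$. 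Therefore $\cA,\beta_\pi\models u\approx t$ iff $\cB,\beta\models u\approx t$; equations between free-sort constants agree trivially since $\cA$ and $\cB$ interpret constants identically.
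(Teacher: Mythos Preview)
Your proposal is correct and follows essentially the same approach as the paper: restrict $\cS^\cB$ to $\{c^\cA\mid c\in\fconsts(N)\}$, define free predicates via the free-sort projections from the proof sketch, pass through the companion assignment $\beta_\pi$, and use Definition~\ref{definition:FreeInstantiationPoints}\ref{enum:FreeInstantiationPoints:II} to make the projections act as the identity on variables appearing in equations of the consequent. The only minor imprecision is your phrase ``occurs both in a non-equational atom and in an equation of the consequent''---a free-sort variable may occur exclusively in equations, but the paper's footnote convention (pretending $\False_v(v)\in\Delta$) ensures Definition~\ref{definition:FreeInstantiationPoints}\ref{enum:FreeInstantiationPoints:II} still applies, so your argument goes through unchanged.
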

\begin{proof}
	The ``only if''-part is trivial.
			
	Consider the ``if''-part.
		In what follows, the notations $\prop$ and $\downcl$ always refer to the original clause set $N$.
		Let $\cA$ be a hierarchic model of $\hN_u$. 
		We use $\cA$ to construct the hierarchic model $\cB$ as follows.
		For the domain $\cS^\cB$ we take the set $\{ \fa \in \cS^\cA \mid \text{$\fa = c^\cA$ for some $c \in \fconsts(N)$} \}$. 
		For all base-sort and free-sort constant symbols $c\in \consts(N)$, we set $c^{\cB} := c^{\cA}$.
		For every argument position closure $\downcl\<P,i\>$ we fix some \emph{default instantiation point} $d_{\downcl\<P,i\>} \in \cI_{\downcl\<P,i\>}$. 
		To this end, we choose an arbitrary constant symbol from $\cI_{\downcl\<P,i\>}$.
		We define the family of projection functions $\pi_{\downcl\<P,i\>} : \Int\cup\cS^\cB \to \Int\cup\cS^\cA$ by
			\[ \pi_{\downcl\<P,i\>}(\fa) :=
				\begin{cases}
					\fa					&\text{if $\xi_i = \cS$ and $\fa = c^\cA$ for some $c \in \cI_{\downcl\<P,i\>}$,} \\
					d_{\downcl\<P,i\>}^\cA		&\text{if $\xi_i = \cS$ and $\fa \neq c^\cA$ for every $c \in \cI_{\downcl\<P,i\>}$,} \\
					\fa					&\text{if $\xi_i = \cZ$.}
				\end{cases}
			\]
		Using the projection functions $\pi_{\downcl\<P,i\>}$, we define the sets $P^\cB$ so that for all domain elements $\fa_1, \ldots, \fa_m$ of appropriate sorts $\bigl\<\fa_1, \ldots, \fa_m\bigr\> \in P^{\cB}$ if and only if $\bigl\<\pi_{\downcl\<P,1\>}(\fa_1), \ldots,$ $\pi_{\downcl\<P,m\>}(\fa_m)\bigr\> \in P^\cA$.

		We next show $\cB\models N$. Consider any clause $C' := \Lambda' \;\|\; \Gamma' \to \Delta'$ in $N$ and let $\beta : V_\cZ\cup V_\cS\to \Int\cup \cS^\cB$ be an arbitrary variable assignment. 
		From $\beta$ we derive a special variable assignment $\beta_\pi$ for which we shall infer $\cA,\beta_\pi \models C'$ as an intermediate step: for every variable $v$ we set $\beta_\pi(v) := \pi_{\downcl(v)}(\beta(v))$.
		If $C' \neq C$, then $\hN_u$ already contains $C'$, and thus $\cA,\beta_\pi\models C'$ must hold.
		In case of $C' = C$, we know that there is some constant symbol $c \in \cI_{\downcl(u)}$ such that $\beta_\pi(u) = c^\cA$. Since $C\subst{u/c}$ is a clause in $\hN_u$, $\cA$ is a model of $C\subst{u/c}$ and thus we conclude $\cA,\beta_\pi \models C$. 
		Hence, in any case we can deduce $\cA,\beta_\pi\models C'$. 
		By case distinction on why $\cA,\beta_\pi \models C'$ holds, we may use this result to infer $\cB,\beta\models C'$.
		\begin{description}
			\item Case $\cA, \beta_\pi \not\models s\triangleleft t$ for some atomic constraint $s\triangleleft t$ in $\Lambda'$. Since $\cB$ and $\cA$ interpret constant symbols in the same way and since $\beta$ and $\beta_\pi$ assign identical values to all base-sort variables, we immediately get $\cB, \beta\not\models s\triangleleft t$.
						
			\item Case $\cA, \beta_\pi \not\models s\approx t$ for some free atom $s\approx s' \in \Gamma'$. 
				Since $C'$ is in normal form, $s$ and $s'$ must be constant symbols.
				$\cB$ and $\cA$ interpret constant symbols in the same way and independently of a variable assignment and thus we immediately get $\cB, \beta\not\models s\approx t$.
			
			\item Case $\cA, \beta_\pi \models s\approx t$ for some $s\approx t \in \Delta'$. 
				\begin{description}
					\item If $s$ and $t$ are constant symbols, we know that $\cB, \beta \models s\approx t$ holds, by analogy to the above case.
					\item If $s$ is a free-sort variable $v$ and $t$ is a constant symbol $d$, we know that $d \in \cI_{\downcl(v)} = \fconsts(N)$ and thus $\beta_\pi(v) = d^\cA = \beta(v)$. 
						This entails $\cB,\beta \models v \approx d$.
						
					\item If $s$ is a free-sort variable $v$ and $t$ is a free-sort variable $w$, we know that $\cI_{\downcl(v)} = \cI_{\downcl(w)} = \fconsts(N)$ and thus $\beta(v) = \beta_\pi(v) = \beta_\pi(w) = \beta(w)$. 
						Consequently, we have $\cB,\beta \models v \approx w$.						
				\end{description}	

			\item Case $\cA, \beta_\pi \not\models P(s_1, \ldots, s_m)$ for some free atom $P(s_1, \ldots, s_m)\in\Gamma'$. This means $\bigl\<\cA(\beta_\pi)(s_1), \ldots,$ $\cA(\beta_\pi)(s_m)\bigr\> \not\in P^\cA$.
				\begin{description}
					\item Every $s_i$ that is of the base sort must be a variable. Hence, $\cA(\beta_\pi)(s_i) = \beta_\pi(s_i) = \beta(s_i) = \pi_{\downcl\<P,i\>}(\beta(s_i)) =  \pi_{\downcl\<P,i\>}(\cB(\beta)(s_i))$.
					
					\item Every $s_i$ of the free sort is either a constant symbol or a variable. 
					
						If $s_i$ is a constant symbol $d$, then we have $d \in \cI_{\downcl\<P,i\>}$.
						Hence, we have $\cA(\beta_\pi)(d) = d^\cA = \pi_{\downcl\<P,i\>}(d^\cA) = \pi_{\downcl\<P,i\>}(d^\cB) = \pi_{\downcl\<P,i\>}(\cB(\beta)(d))$ .
						
						If $s_i$ is a variable $v$, then\\ 
							\centerline{$\cA(\beta_\pi)(v) = \beta_\pi(v) = \pi_{\downcl(v)}(\beta(v)) = \pi_{\downcl\<P,i\>}(\cB(\beta)(v))$.}
				\end{description}
			Put together, this yields $\bigl\<\pi_{\downcl\<P,1\>}(\cB(\beta)(s_1)), \ldots, \pi_{\downcl\<P,m\>}(\cB(\beta)(s_m))\bigr\> \not\in P^\cA$. But then, by construction of $\cB$, we have $\bigl\<\cB(\beta)(s_1), \ldots, \cB(\beta)(s_m)\bigr\> \not\in P^\cB$, which entails $\cB, \beta \not\models P(s_1, \ldots,$ $s_m)$.					
													
			\item Case $\cA, \beta_\pi \models P(s_1, \ldots, s_m)$ for some free atom $P(s_1, \ldots, s_m)\in\Delta'$. Analogously to the above case we conclude $\cB, \beta \models P(s_1, \ldots, s_m)$.
		\end{description}
		Altogether, we have shown $\cB\models N$.
\end{proof}

\subsection{Details Concerning Section~\ref{section:StratifiedClauseSets}}

\begin{lemma}\label{lemma:MarkedPredicatesAsFunctionGraphsInFEU}
	Let $N$ be a clause set in normal form and let $N$ be stratified and guarded with respect to $R_1, \ldots, R_n$.
	Let $N'$ be the clause set that we obtain from $N$ by adding the clauses corresponding to the following sets of sentences:
	\begin{align*} 
		\Phi_1 := \bigl\{ \forall x_1 \ldots x_m u. \bigl( R(x_1, \ldots, x_m, &u) \wedge R(x_1, \ldots, x_m, u') \bigr) \rightarrow u \approx u' \\
			&\bigm| \text{$R : \xi_1\times \ldots \times \xi_m \times \xi_{m+1}$  is marked in $N$} \bigr\}
	\end{align*}
	and
	\begin{align*} 
		\Phi_2 := \bigl\{ \forall x_1 \ldots x_m &\exists u.\, R(x_1, \ldots, x_m, u) 
			\bigm| \text{$R : \xi_1\times \ldots \times \xi_m \times \xi_{m+1}$  is marked in $N$} \bigr\} ~. 
	\end{align*}	
	If $N'$ is satisfiable, then there is a model $\cB$ of $N'$ such that the following set is finite for any $R: \xi_1 \times \ldots \times \xi_m \times \xi_{m+1}$:\\
		\centerline{$\{ \fb \in \xi_{m+1}^\cB \mid \text{there are $\fa_1, \ldots, \fa_m$ such that $\<\fa_1, \ldots, \fa_m, \fb\> \in R^\cB$} \}$.}
\end{lemma}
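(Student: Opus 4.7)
The plan is to construct $\cB$ inductively by modifying the interpretation of one marked predicate at a time, starting from $\cA$ and processing $R_1, \ldots, R_n$ in the order given by $\succeq_N$. Write $\cB_0 := \cA$, and at step $j \in \{1, \ldots, n\}$ build $\cB_j$ from $\cB_{j-1}$ by replacing only the interpretation of $R_j$. Setting $\cB := \cB_n$, I would argue by induction on $j$ that each $\cB_j$ is a hierarchic model of $N \cup \Phi_1 \cup \Phi_2$ and that for every $i \leq j$ the set $\{\fb : \exists \bar{\fa}.\; \<\bar{\fa}, \fb\> \in R_i^{\cB_j}\}$ is finite. The assumption $R_1 \succeq_N \ldots \succeq_N R_n$ is what lets us sequence the modifications coherently: the output sort of $R_j$ sits at a low enough level that changing its value cannot affect the input arguments of any $R_i$ with $i < j$ (by stratification conditions \ref{enum:StratifiedClauseSet:I:I} and~\ref{enum:StratifiedClauseSet:I:II}), so finiteness established at earlier stages is preserved.

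For the induction step, suppose $\cB_{j-1}$ has been constructed. By $\Phi_1 \cup \Phi_2$, the predicate $R_j : \xi_1 \times \ldots \times \xi_m \times \xi_{m+1}$ is the graph of a total function $f_j : \xi_1^{\cB_{j-1}} \times \ldots \times \xi_m^{\cB_{j-1}} \to \xi_{m+1}^{\cB_{j-1}}$. I would define the finite target set
\[ D_j := \bigl\{ c^{\cB_{j-1}} \bigm| c \in \consts(N) \text{ of sort } \xi_{m+1} \bigr\} \cup \bigl\{ f_j(c_1^{\cB_{j-1}}, \ldots, c_m^{\cB_{j-1}}) \bigm| (c_1, \ldots, c_m) \in \cI_{\downcl_N \<R_j, 1\>} \times \ldots \times \cI_{\downcl_N \<R_j, m\>} \bigr\} \cup \{d_0\}, \]
where $d_0$ is an arbitrary default element of sort $\xi_{m+1}$ (adding a fresh constant interpretation if no suitable $\consts(N)$-element exists). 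This $D_j$ is finite because all instantiation-point sets are finite. Pick any projection $\rho_j : \xi_{m+1}^{\cB_{j-1}} \to D_j$ with $\rho_j|_{D_j} = \text{id}_{D_j}$, and define $R_j^{\cB_j}$ to be the graph of $f'_j := \rho_j \circ f_j$. All other symbols are interpreted as in $\cB_{j-1}$. By construction, $R_j^{\cB_j}$ is still the graph of a total function, so $\cB_j$ satisfies $\Phi_1 \cup \Phi_2$, and the image of $R_j^{\cB_j}$ equals the image of $\rho_j \circ f_j$, a subset of the finite set $D_j$.

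The main obstacle is showing that $\cB_j$ still satisfies every clause $C = \Lambda \,\|\, \Gamma \to \Delta$ of $N$. The only change from $\cB_{j-1}$ to $\cB_j$ concerns $R_j$-atoms, so clauses free of $R_j$ are unchanged. For clauses containing $R_j$, I would leverage the guardedness condition~\ref{enum:StratifiedClauseSet:III}: each occurrence $R_j(\bar{s}, v) \in \Delta$ comes with a matching guard $R_\ell(\bar{t}, v) \in \Gamma$ sharing the output variable $v$. Given any assignment $\beta$ into $\cB_j$, I plan to construct a companion assignment $\beta'$ into $\cB_{j-1}$ that agrees with $\beta$ on all variables except possibly the guarded output variables, and on those projects via the appropriate $\rho$'s; then show $\cB_{j-1}, \beta' \models C$ implies $\cB_j, \beta \models C$. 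The key algebraic fact is that $\rho_j$ fixes every element of $D_j$, including every interpretation of a constant of the right sort, so $R_j$-atoms in the antecedent $\Gamma$ and equations $u \approx c$ behave consistently between $\cB_{j-1}$ and $\cB_j$. Condition~\ref{enum:StratifiedClauseSet:II} on where equations on output variables may occur is essential here to rule out the two problematic scenarios: unguarded equations $u \approx t$ in consequents that could be falsified by the projection, and connections between the output closure of $R_j$ and the closure of a free variable in an equation that could force inconsistent identifications. The hard part will be executing this case analysis carefully while simultaneously maintaining the invariant that the $R_1, \ldots, R_{j-1}$ established at earlier stages remain satisfied; stratification isolates their input sorts from $\xi_{m+1}$, which is precisely what makes the argument go through.
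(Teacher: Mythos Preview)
Your iterative plan changes only the interpretation of the marked $R_j$ at step $j$ and leaves every other predicate untouched. That is where the argument breaks. Take the stratified, guarded clause
\[
\;\|\; R_j(x,v),\, Q(v) \rightarrow \Box
\]
with $Q$ unmarked. In $\cB_{j-1}$ this says $f_j(\fa) \notin Q^{\cB_{j-1}}$ for every $\fa$. After your modification, $R_j^{\cB_j}$ is the graph of $\rho_j\circ f_j$ while $Q^{\cB_j}=Q^{\cB_{j-1}}$, so you now need $\rho_j(f_j(\fa)) \notin Q^{\cB_{j-1}}$ for every $\fa$. Nothing forces $\rho_j$ to map the complement of $Q^{\cB_{j-1}}$ into itself. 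Your companion-assignment idea does not rescue this: if $\beta(v)=\rho_j(f_j(\beta(x)))$ and you set $\beta'(v)=\rho_j(\beta(v))=\beta(v)$, then the guard $R_j(x,v)$ is typically \emph{false} in $\cB_{j-1},\beta'$ (it would require $f_j(\beta(x))\in D_j$), so $\cB_{j-1},\beta'\models C$ holds vacuously and tells you nothing about $\cB_j,\beta$. If instead you lift to $\beta'(v)=f_j(\beta(x))$, the guard matches but $Q(v)$ is evaluated at two different points in the two structures, and $Q$ has not been adjusted to make them agree.

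The paper avoids this by building $\cB$ in one shot and modifying \emph{every} predicate simultaneously: it introduces enlarged instantiation sets $\hI_{\downcl\langle P,i\rangle}$ (closing the $\cI$-sets under artificial points $d_{R c_1\ldots c_m}$, with finiteness coming from stratification), defines a projection $\pi_{\downcl\langle P,i\rangle}$ for each argument-position closure, and sets $\langle\fa_1,\ldots,\fa_m\rangle\in Q^{\cB}$ iff $\langle\pi_{\downcl\langle Q,1\rangle}(\fa_1),\ldots,\pi_{\downcl\langle Q,m\rangle}(\fa_m)\rangle\in Q^{\cA}$ for unmarked $Q$, and similarly for marked $R$ except that the last coordinate is left unprojected. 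Because the output position $\langle R_j,m{+}1\rangle$ and any position $\langle Q,k\rangle$ sharing the variable $v$ lie in the same closure, they are collapsed by the \emph{same} projection, which is exactly what your per-predicate $\rho_j$ lacks. To make your inductive scheme work you would have to, at each step, also re-project every other predicate along the closures touching $\langle R_j,m{+}1\rangle$; once you do that, you are essentially carrying out the paper's global construction stagewise.
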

\begin{proof}
	Without loss of generality, we assume $R_1 \succeq_N R_2 \succeq_N \ldots \succeq_N R_n$.

	Let $\cA$ be a model of $N'$.
	For every $R: \xi_1 \times \ldots \times \xi_m \times \xi_{m+1}$ among the $R_1, \ldots, R_n$
	let $\tau_R^\cA: \xi_1^\cA \times \ldots \times \xi_m^\cA \to \xi_{m+1}^\cA$ be a mapping such that for every tuple $\<\fa_1, \ldots, \fa_m\>$ of domain elements we have \\
		\centerline{$\bigl\< \fa_1, \ldots, \fa_m, \tau_R^\cA(\fa_1, \ldots, \fa_m) \bigr\> \in R^\cA$.}
	Due to $\cA \models \Phi_1 \cup \Phi_2$, every $\tau_R^\cA$ is uniquely determined.
	
	In the rest of the proof $\downcl$ is an abbreviation for $\downcl_N$ and $\prop$ stands for $\prop_N$.
	
	Let $P$ be any predicate symbol occurring in $N$.
	We introduce artificial instantiation points as follows.
	Let $\hI_{\downcl\<P,i\>}$ be the smallest set satisfying the following requirements.
	\begin{enumerate}[label=($\hI$-\alph{*}), ref=($\hI$-\alph{*})]
		\item\label{enum:ArtificialInstantiationPoints:I} $\cI_{\downcl\<P,i\>} \subseteq \hI_{\downcl\<P,i\>}$.
		\item\label{enum:ArtificialInstantiationPoints:II} For every $R: \xi_1 \times \ldots \times \xi_m \times \xi_{m+1}$ that is marked in $N$ and for which $R \succeq P$ and $\<R,m+1\> \prop \<P,i\>$
			we have $d_{R c_1 \ldots c_m} \in \hI_{\downcl\<P,i\>}$ for all tuples $\<c_1, \ldots, c_m\> \in \hI_{\downcl\<R,1\>} \times \ldots \times \hI_{\downcl\<R,m\>}$.
		\item\label{enum:ArtificialInstantiationPoints:III} If there is some free-sort atom $u \approx t$ ($t$ being ground or non-ground) in $N$ that is \emph{not guarded} (cf.\ Condition~(\ref{enum:StratifiedClauseSet:II}) in Definition~\ref{definition:StratifiedClauseSet}) and for which $\downcl(u) = \downcl\<P,i\>$, then $\hI_{\downcl\<Q,j\>} \subseteq \hI_{\downcl\<P,i\>}$ for every argument position pair $\<Q,j\>$.
		
			In other words, in this case $\hI_{\downcl\<P,i\>}$ collects all artificial instantiation points that are introduced into any set $\hI_{\downcl\<Q,j\>}$.
	\end{enumerate}
	The $d_{R c_1 \ldots c_m}$ are assumed to be fresh constant symbols that do not occur in $N$.
	Their intended meaning is fixed by assuming $d_{R c_1 \ldots c_m}^\cA := \tau_R^\cA (c_1^\cA, \ldots, c_m^\cA)$ (without loss of generality).
	Moreover, we assume that $c_{-\infty}$ does not occur in $N$ (but may occur as instantiation point) and we set the value of $c_{-\infty}$ so that $c_{-\infty}^\cA < c^\cA$ holds for every base-sort constant symbol $c$ occurring in $N$ and any $c$ that is an artificial instantiation point of the base sort.
			\begin{description}
				\item \underline{Claim:} 
						For every argument position closure $\downcl\<P,i\>$ the set $\hI_{\downcl\<P,i\>}$ is finite.
				\item \underline{Proof:} 
						All the $\cI_{\downcl\<Q,j\>}$ are finite, since $N$ and the clauses therein are assumed to be finite.
						Hence, if $\hI_{\downcl\<P,i\>}$ were infinite, then it would contain infinitely many artificial instantiation points.
						
						Consider any artificial instantiation point $d_{R c_1 \ldots c_{k-1} d_{R' c'_1 \ldots c'_{m'}} c_{k+1} \ldots c_m}$ with $R : \xi_1 \times \ldots \times \xi_m \times \xi_{m+1}$ and $R' : \zeta_1 \times \ldots \times \zeta_{m'} \times \zeta_{m'+1}$, both being marked in $N$.
						Hence, $d_{R' c'_1 \ldots c'_{m'}} \in \hI_{\downcl\<R,k\>} \setminus \cI_{\downcl\<R,k\>}$.
						
						Assume that $d_{R' c'_1 \ldots c'_{m'}}$ has been added to $\hI_{\downcl\<R,k\>}$ because of requirement~\ref{enum:ArtificialInstantiationPoints:III}.
						Hence, there is some free-sort variable $u$ such that $\downcl\<R,k\> = \downcl(u)$ and there is some unguarded free-sort atom $u \approx t$ in some clause in $N$.
						By Condition~\ref{enum:StratifiedClauseSet:II:II} of Definition~\ref{definition:StratifiedClauseSet}, $R$ cannot be marked in $N$.
						This contradicts our assumptions.
						
						Assume that $d_{R' c'_1 \ldots c'_{m'}}$ has been added to $\hI_{\downcl\<R,k\>}$ because of requirement~\ref{enum:ArtificialInstantiationPoints:II}.
						Consequently, we have $R' \succeq R$ and $\<R',m'+1\> \prop \<R,k\>$.
						The latter fact entails $\lvl_N\<R',m'+1\> = \lvl_N\<R,k\>$.\
						Since $N$ is stratified and $R$ marked in $N$, we must have $\lvl_N\<R,k\> > \lvl_N\<R,$ $m+1\>$.
						Hence, $\lvl_N\<R',m'+1\> > \lvl_N\<R,m+1\>$.
						
						This means, the length of chains of the form
							$d_1 = d_{R_{j_1} \ldots d_2 \dots}$,
							$d_2 = d_{R_{j_2} \ldots d_3 \dots}$,
							$\ldots$,
							$d_{k} = d_{R_{j_k} \ldots d_{k+1} \dots}$,
							$\ldots$
						is upper bounded by the highest level that $\lvl_N$ assigns to any argument position pair in $N$.
							
						Consequently, $\hI_{\downcl\<P,i\>}$ must be finite.						
						\strut\hfill$\Diamond$
			\end{description}
	
	We next define a family of projections $\pi_{\downcl\<P,i\>}$ for every predicate symbol $P : \zeta_1 \times \ldots \times \zeta_m$ occurring in $N$:
		\[
			\pi_{\downcl\<P,i\>}(\fa) := 
							\begin{cases}
								c_{\downcl\<P,i\>,p}^\cA		&\text{if $\zeta_i = \cZ$ and $p \in \hP_{\downcl\<P,i\>}^\cA$ is the interval $\fa$ lies in,}\\
								\fa					&\text{if $\zeta_i = \cS$ and $\fa = c^\cA$ for some $c \in \hI_{\downcl\<P,i\>}$,}\\
								d_{\downcl\<P,i\>}^\cA		&\text{if $\zeta_i = \cS$ and $\fa \neq c^\cA$ for every $c \in \hI_{\downcl\<P,i\>}$,}
							\end{cases}		
		\]
	where $\hP_{\downcl\<P,i\>}^\cA$ is defined based on $\hI_{\downcl\<P,i\>}$ (cf.\ Definition~\ref{definition:BaseVariableInstantiationPoints}), 
	$c_{\downcl\<P,i\>,p}$ is some constant symbol in $\hI_{\downcl\<P,i\>}$ such that $c_{\downcl\<P,i\>,p}^\cA \in p$,
	and $d_{\downcl\<P,i\>}$ is some \emph{default instantiation point} of sort $\cS$ picked from $\cI_{\downcl\<P,i\>}$ (not $\hI_{\downcl\<P,i\>}$).
	
	We are now ready to construct the hierarchic interpretation $\cB$:
	\begin{itemize}
		\item $\cS^\cB := \bigl\{ c^\cA \bigm| c \in \fconsts(N) \bigr\}$\\
		 \strut\hspace{7ex}$\cup \bigl\{ d_{R c_1 \ldots c_m}^\cA \bigm| \text{$d_{R c_1 \ldots c_m}$ is some free-sort artificial instantiation point} \bigr\}$,
		
		\item $c^\cB := c^\cA$ for every constant symbol occurring in $N$ and also for every artificially introduced instantiation point $d_{R c_1 \ldots c_m}$, i.e.\ $d_{R c_1 \ldots c_m}^\cB := \tau_R^\cA(c_1^\cA, \ldots, c_m^\cA)$,
		
		\item for every non-marked $Q : \zeta_1 \times \ldots \times \zeta_m$ occurring in $N$ and every tuple $\<\fa_1, \ldots, \fa_m\>$ of appropriate sort we set 
			$\<\fa_1, \ldots, \fa_m\> \in Q^\cB$ if and only if $\bigl\< \pi_{\downcl\<Q,1\>}(\fa), \ldots, \pi_{\downcl\<Q,m\>}(\fa_m) \bigr\> \in Q^\cA$,
			
		\item for every marked $R : \xi_1 \times \ldots \times \xi_m \times \xi_{m+1}$ occurring in $N$, every tuple $\<\fa_1, \ldots, \fa_m\>$ of appropriate sort, and any domain element $\fb$ we set 
			$\<\fa_1, \ldots, \fa_m, \fb\> \in R^\cB$ if and only if $\bigl\< \pi_{\downcl\<R,1\>}(\fa), \ldots, \pi_{\downcl\<R,m\>}(\fa_m), \fb \bigr\> \in R^\cA$.		
	\end{itemize}
	Notice that $\<\fa_1, \ldots, \fa_m, \fb\> \in R^\cB$ if and only if $\fb = \tau_R^\cA(\pi_{\downcl\<R,1\>}(\fa_1), \ldots, \pi_{\downcl\<R,m\>}(\fa_m))$ for every marked $R$, because of $\cA \models \Phi_1$. 
	Hence, the set\\
		\centerline{$\{ \fb \mid \text{there are $\fa_1, \ldots, \fa_m$ such that $\<\fa_1, \ldots, \fa_m, \fb\> \in R^\cB$} \}$}
	is finite.
	
	Next, we show $\cB \models N'$.
	The first observation that we make is that, due to $\cA \models \Phi_1 \cup \Phi_2$ and due to the construction of $\cB$, $\cB$ also satisfies $\Phi_1 \cup \Phi_2$.
	It remains to show that $\cB$ is a hierarchic model of $N$.
	
	Consider any clause $C := \Lambda \;\|\; \Gamma \to \Delta$ in $N$ and let $\beta : V_\cZ\cup V_\cS\to \Int\cup \cS^\cB$ be some variable assignment. 
	From $\beta$ we derive a special variable assignment $\beta_\pi$: for every variable $v$ we set $\beta_\pi(v) := \pi_{\downcl(v)}(\beta(v))$.
	By assumption, $\cA$ is a model of $C$ and thus we conclude $\cA,\beta_\pi \models C$.
	By case distinction on why $\cA,\beta_\pi \models C$ holds, we may use this result to infer $\cB,\beta\models C$.
	\begin{description}
		\item Case $\cA, \beta_\pi \not\models s\triangleleft t$ for some ground LIA constraint $s\triangleleft t$ in $\Lambda$. Since $\cB$ and $\cA$ interpret constant symbols in the same way and independently of a variable assignment, we immediately get $\cB, \beta\not\models s\triangleleft t$.
			
		\item Case $\cA,\beta_\pi \not\models (y \trianglelefteq d) \in \Lambda$ for some base-sort variable $y$, some constant symbol $d$, and $\trianglelefteq {\in} \{\leq,$ $=, \geq\}$. This means $\beta_\pi(y) \mathrel{\not\!\trianglelefteq} d^\cA$. Let $p$ be the interval from $\hP_{\downcl(y)}^\cA$ that contains $\beta(y)$ and therefore also $\beta_\pi(y)$. 
			\begin{description}
				\item If $d^\cA$ lies outside of $p$, then $\beta_\pi(y) \trianglelefteq d^\cA$ if and only if $\beta(y) \trianglelefteq d^\cA$, since $\beta_\pi(y) \in p$ and $\beta(y) \in p$. Thus, $d^\cB = d^\cA$ entails $\cB, \beta \not\models y \trianglelefteq d$.
				
				\item If $p$ is the point interval $p = \{d^\cA\}$, then $\beta(y) = \beta_\pi(y) = d^\cA$, and thus $\cB,\beta \not\models y \trianglelefteq d$.
				
				\item Suppose $p = [r_\ell, r_u]$ and $r_\ell < d^\cA \leq r_u$, then $\trianglelefteq {\neq} \leq$, since $\beta_\pi(y) = c_{\downcl(y), p}^\cA = r_\ell < d^\cA$ (by Lemma~\ref{lemma:PartitionRepresentatives}). 
				Moreover, we conclude $d\not\in \hI_{\downcl(y)}$, since otherwise $p$ would be of the form $p = [d^\cA, r_u]$ by the construction of $\hP_{\downcl(y)}^\cA$. 
				Therefore, $\trianglelefteq {\not\in} \{\equals, \geq\}$, since otherwise the instantiation point $d$ would be in $\hI_{\downcl(y)}$. 
				But this contradicts our assumption that $\trianglelefteq {\in} \{\leq, \equals, \geq\}$.
				
					The case $p = [r_\ell, +\infty)$ with $r_\ell < d^\cA$ can be handled by similar arguments.

				\item Suppose $p = [d^\cA, r_u]$ and $d^\cA < r_u$, then $\beta_\pi(y) = c_{\downcl(y), p}^\cA = d^\cA$ by Lemma~\ref{lemma:PartitionRepresentatives}. 
					Consequently, \mbox{$\trianglelefteq {\not\in} \{\leq, \equals, \geq\}$}. 
					This contradicts the assumptions we made regarding the syntax of the constraint $y \trianglelefteq d$.
				
					The same applies in the case $p = [d^\cA, +\infty)$.
					
				\item Suppose $p = (-\infty, r_u]$ with $d^\cA \leq r_u$ or $p = (-\infty, +\infty)$.
					We know $c_{-\infty}^\cA \in p$ due to our earlier assumption on the value that is assigned to $c_{-\infty}$ by $\cA$.
					By the same assumption, we know that $c_{-\infty}^\cA < d^\cA$.
					The fact that $d^\cA$ lies within $p$ entails that $d$ does not belong to $\hI_{\downcl(y)}$.
					Hence, $\triangleleft {\not\in} \{ \equals, \geq \}$.
					Therefore, we conclude $\beta_\pi(y) > d^\cA$.
					But $\beta_\pi(y) = c_{-\infty}^\cA < d^\cA$ then leads to a contradiction.						
			\end{description}
			
		\item Case $\cA, \beta_\pi \not\models (y \leq z) \in \Lambda$ for two base-sort variables $y, z$. This means $\beta_\pi(y) > \beta_\pi(z)$.
			\begin{description}
				\item \underline{Claim:} 
						$\hI_{\downcl(y)} \subseteq \hI_{\downcl(z)}$.
				\item \underline{Proof:}
						Since $N$ is in normal form, we know that $\Gamma \to \Delta$ must contain atoms $P(\ldots, y, \ldots)$ and $Q(\ldots, z, \ldots)$ where $y$ occurs in the $i$-th argument position and $z$ in the $j$-th. Hence, we have $\<P,i\> \prop \<Q,j\>$ and thus also $\cI_{\downcl\<P,i\>} \subseteq \cI_{\downcl\<Q,j\>}$, by Lemma~\ref{lemma:BaseVariableInstantiationPointsClosure}.
						
						Suppose that $R: \xi_1 \times \ldots \times \xi_m \times \xi_{m+1}$ is some marked predicate symbol such that $R \succeq P$ and $\<R,m+1\> \prop \<P,i\>$.
						Since we assume $N$ to be stratified with respect to $R_1, \ldots, R_n$, $\<R,m+1\> \prop \<P,i\> \prop \<Q,j\>$ entails $\lvl_N\<R,m+1\> = \lvl_N\<P,i\> = \lvl_N\<Q,j\>$. 
						Consequently, we observe $R \succeq Q$ and $\<R,m+1\> \prop \<Q,j\>$, by transitivity of $\prop$.
						This means any artificial instantiation points that are introduced into $\hI_{\downcl\<P,i\>}$ because of $R$ are also introduced into $\hI_{\downcl\<Q,j\>}$.
						
						Therefore, we observe $\hI_{\downcl\<P,i\>} \subseteq \hI_{\downcl\<Q,j\>}$.																	\strut\hfill$\Diamond$
			\end{description}
			By virtue of the above claim, we conclude that $\hP_{\downcl(z)}^\cA$ is a refinement of $\hP_{\downcl(y)}^\cA$.
			
			Let $p_y = [r_\ell^y, r_u^y] \in \hP_{\downcl(y)}^\cA$ be the interval which contains $\beta(y)$ and let $p_z = [r_\ell^z, r_u^z] \in \hP_{\downcl(z)}^\cA$ be the interval which contains $\beta(z)$. 
			We distinguish several cases.
			\begin{description}
				\item If $\beta_\pi(z)$ lies outside of $p_y$, then $r_\ell^z = \beta_\pi(z) < \beta_\pi(y) = r_\ell^y$ together with the fact that $\hP_{\downcl(z)}^\cA$ is a refinement of $\hP_{\downcl(y)}^\cA$  implies $r_u^z < r_\ell^y$. Hence, $\beta(z) \in [r_\ell^z, r_u^z]$ and $\beta(y) \in [r_\ell^y, r_u^y]$ entail $\beta(z) < \beta(y)$ and thus $\cB, \beta \not\models y \leq z$. 
				
				\item Suppose $\beta_\pi(z)$ lies inside of $p_y$. Since $\hP_{\downcl(z)}^\cA$ is a refinement of $\hP_{\downcl(y)}^\cA$, we must have that $[r_\ell^z, r_u^z] \subseteq [r_\ell^y, r_u^y]$. But then $\beta_\pi(y) = r_\ell^y \leq r_\ell^z = \beta_\pi(z)$ contradicts the observation that $\beta_\pi(y) > \beta_\pi(z)$.
				
				\item Cases where $p_y = [r_\ell^y, +\infty)$ or $p_z = [r_\ell^z, +\infty)$ can be handled similarly.
				
				\item Suppose $p_y$ is of the form $(-\infty, r_u^y]$ or $(-\infty, +\infty)$.
					In this case we have $\beta_\pi(y) = c_{-\infty}^\cA$.
					This contradicts the observation $\beta_\pi(y) > \beta_\pi(z)$.
					
				\item Suppose $p_z$ is of the form $(-\infty, r_u^z]$ or $(-\infty, +\infty)$.
					In this case we have $\beta_\pi(z) = c_{-\infty}^\cA$.
					Since $\hP_{\downcl(z)}^\cA$ is a refinement of $\hP_{\downcl(y)}^\cA$, we either have $p_z \subseteq p_y$ or $p_y$ does not overlap with $p_z$.
					The former contradicts previous observations. 
					Therefore, the latter must apply and $p_y$ must be of the form $[r_\ell^y, r_u^y]$ or $[r_\ell^y, +\infty)$. 
					Moreover, $p_z$ has the form $(-\infty, r_u^z]$ with $r_u^z < r_\ell^y$.
					But then we conclude $\beta(z) \leq r_u^z < r_\ell^y \leq \beta(y)$.
					This observation entails $\cB, \beta \not\models y \leq z$.								
		  	\end{description}					

		\item Case $\cA, \beta_\pi \not\models Q(s_1, \ldots, s_m)$ for some free atom $Q(s_1, \ldots, s_m)\in\Gamma$ with $Q$ being unmarked. 
			This means  $\bigl\<\cA(\beta_\pi)(s_1), \ldots, \cA(\beta_\pi)(s_m)\bigr\> \not\in Q^\cA$.
			\begin{description}
				\item Every $s_i$ that is of the base sort must be a variable. 
					Hence, $\cA(\beta_\pi)(s_i) = \beta_\pi(s_i) = \pi_{\downcl\<P,i\>}(\beta(s_i)) =  \pi_{\downcl\<P,i\>}(\cB(\beta)(s_i))$.
				
				\item Every $s_i$ of the free sort is either a constant symbol or a variable. 
				
					If $s_i$ is a constant symbol $c$, then we have $c \in \cI_{\downcl\<P,i\>} \subseteq \hI_{\downcl\<P,i\>}$.
					Hence, we have $\cA(\beta_\pi)(c) = c^\cA = \pi_{\downcl\<P,i\>}(c^\cA) = \pi_{\downcl\<P,i\>}(c^\cB) = \pi_{\downcl\<P,i\>}(\cB(\beta)(c))$ .
					
					If $s_i$ is a variable $v$, then\\
						\centerline{$\cA(\beta_\pi)(v) = \beta_\pi(v) = \pi_{\downcl(v)}(\beta(v)) = \pi_{\downcl\<P,i\>}(\cB(\beta)(v))$.}
			\end{description}
		Put together, this yields $\bigl\<\pi_{\downcl\<P,1\>}(\cB(\beta)(s_1)), \ldots, \pi_{\downcl\<P,m\>}(\cB(\beta)(s_m))\bigr\> \not\in P^\cA$. But then, by construction of $\cB$, we have $\bigl\<\cB(\beta)(s_1), \ldots, \cB(\beta)(s_m)\bigr\> \not\in P^\cB$, which entails $\cB, \beta \not\models P(s_1, \ldots,$ $s_m)$.					
												
		\item Case $\cA, \beta_\pi \models Q(s_1, \ldots, s_m)$ for some free atom $q(s_1, \ldots, s_m)\in\Delta$ with unmarked $Q$. 
			Analogously to the above case we conclude $\cB, \beta \models P(s_1, \ldots, s_m)$.
			
		\item Case $\cA, \beta_\pi \not\models R(s_1, \ldots, s_m, t)$ for some free atom $R(s_1, \ldots, s_m, t)\in\Gamma$ with $R$ being marked in $N$. 
			This means\\
				\centerline{$\bigl\<\cA(\beta_\pi)(s_1), \ldots, \cA(\beta_\pi)(s_m), \cA(\beta_\pi)(t)\bigr\> \not\in R^\cA$.}
			Moreover, it follows $\cA(\beta_\pi)(t) \neq \tau_R^\cA\bigl( \cA(\beta_\pi)(s_1), \ldots, \cA(\beta_\pi)(s_m) \bigr)$.
			
			As in the previous case, we can show 
			\begin{itemize}
				\item[($*$)] $\cA(\beta_\pi)(s_i) = \pi_{\downcl\<R,i\>}(\cB(\beta)(s_i))$ for every $s_i$, $1 \leq i \leq m$.			
			\end{itemize}	
			\begin{description}
				\item If $t$ is a constant symbol $d$, then 
					$\cA(\beta_\pi)(d) = d^\cA = d^\cB = \cB(\beta)(d)$.
					Due to \\
						$d^\cA \neq \tau_R^\cA\bigl( \cA(\beta_\pi)(s_1), \ldots, \cA(\beta_\pi)(s_m) \bigr)$\\
						\strut\hspace{7ex} $= \tau_R^\cA \bigl( \pi_{\downcl\<R,1\>}(\cB(\beta)(s_1)), \ldots, \pi_{\downcl\<R,m\>}(\cB(\beta)(s_m)) \bigr)$,\\
					we have $\bigl\< \cB(\beta)(s_1), \ldots, \cB(\beta)(s_m), \cB(\beta)(d) \bigr\> \not\in R^\cB$.
															
				\item If $t$ is a variable $v$, then $\cA(\beta_\pi)(v) = \beta_\pi(v) = \pi_{\downcl(v)}(\beta(v))$.
					By definition of $\pi_{\downcl(v)}$, there must be some instantiation point $d \in \hI_{\downcl(v)}$ such that $\beta_\pi(v) = d^\cA$.					
					Similarly, by definition of the $\pi_{\downcl\<R,i\>}$, ($*$) entails the existence of a tuple of instantiation points $\<c_1, \ldots, c_m\> \in \hI_{\downcl\<R,1\>} \times \ldots \times \hI_{\downcl\<R,m\>}$ such that for every $i$, $1 \leq i \leq m$, we have $c_i^\cA = \cA(\beta_\pi)(s_i) = \pi_{\downcl\<R,i\>}(\cB(\beta)(s_i))$.
					Hence, by reflexivity of the relations $\succeq_N$ and $\prop$, we know that there is some artificial instantiation point $d_{R c_1 \ldots c_m} \in \hI_{\downcl\<R,m+1\>}$ such that $d_{R c_1 \ldots c_m}^\cA = \tau_R^\cA(c_1^\cA, \ldots, c_m^\cA)$.
					
					Because of $d^\cA = \cA(\beta_\pi)(d) \neq \tau_R^\cA(c_1^\cA, \ldots, c_m^\cA) = d_{R c_1 \ldots c_m}^\cA$, it follows that $d \neq d_{R c_1 \ldots c_m}$.
					Since $\pi_{\downcl\<R,m+1\>}$ projects $\beta(v)$ onto some value different from $d_{R c_1 \ldots c_m}^\cA$, the original $\beta(v)$ must be different from $d_{R c_1 \ldots c_m}^\cA$.
					Hence, \\
						\centerline{$\bigl\< \pi_{\downcl\<R,1\>}\bigl(\cB(\beta)(s_1)\bigr), \ldots, \pi_{\downcl\<R,m\>}\bigl(\cB(\beta)(s_m)\bigr), \cB(\beta)(t) \bigr\> = \bigl\< c_1^\cA, \ldots, c_m^\cA, \beta(v) \bigr\> \not\in R^\cA$ \hspace{11ex}}
					and thus also 
					$\bigl\< \cB(\beta)(s_1), \ldots \cB(\beta)(s_m), \cB(\beta)(t) \bigr\> \not\in R^\cB$.
			\end{description}
			Hence, we have $\cB, \beta \not\models R(s_1, \ldots, s_m, t)$.			

		\item Case $\cA, \beta_\pi \models R(s_1, \ldots, s_m, t)$ for some free atom $R(s_1, \ldots, s_m, t) \in \Delta$ with $R$ being marked in $N$. 
			This means\\
				\centerline{$\bigl\<\cA(\beta_\pi)(s_1), \ldots, \cA(\beta_\pi)(s_m), \cA(\beta_\pi)(t)\bigr\> \in R^\cA$.}
			Moreover, it follows $\cA(\beta_\pi)(t) = \tau_R^\cA\bigl( \cA(\beta_\pi)(s_1), \ldots, \cA(\beta_\pi)(s_m) \bigr)$.
			
			As in the previous case, we can show 
			\begin{itemize}
				\item[($*$)] $\cA(\beta_\pi)(s_i) = \pi_{\downcl\<R,i\>}(\cB(\beta)(s_i))$ for every $s_i$, $1 \leq i \leq m$.			
			\end{itemize}	
			\begin{description}
				\item If $t$ is a constant symbol $d$, then 
					$\cA(\beta_\pi)(d) = d^\cA = d^\cB = \cB(\beta)(d)$.
					Due to \\
						$d^\cA = \tau_R^\cA\bigl( \cA(\beta_\pi)(s_1), \ldots, \cA(\beta_\pi)(s_m) \bigr)$\\
							 	\strut\hspace{7ex} $= \tau_R^\cA \bigl( \pi_{\downcl\<R,1\>}(\cB(\beta)(s_1)), \ldots, \pi_{\downcl\<R,m\>}(\cB(\beta)(s_m)) \bigr)$,\\
					we have $\bigl\< \cB(\beta)(s_1), \ldots, \cB(\beta)(s_m), \cB(\beta)(d) \bigr\> \in R^\cB$.
															
				\item If $t$ is a variable $v$, then $\cA(\beta_\pi)(v) = \beta_\pi(v) = \pi_{\downcl(v)}(\beta(v))$.
					Since we assume $N$ to be guarded with respect to $R$, $\Gamma$ must contain an atom of the form $R'(t_1, \ldots, t_{m'}, v)$ with $R'$ being marked in $N$. 
					The case $\cA,\beta_\pi \not\models R'(t_1, \ldots, t_{m'}, v)$ has been treated earlier, and thus we assume $\cA,\beta_\pi \models R'(t_1, \ldots, t_{m'}, v)$.
					
					Similarly to ($*$), we can prove
					\begin{itemize}
						\item[($**$)] $\cA(\beta_\pi)(t_i) = \pi_{\downcl\<R',i\>}(\cB(\beta)(t_i))$ for every $t_i$, $1 \leq i \leq m'$.	
					\end{itemize}	
					
					By ($*$) and ($**$) we have \\
						\centerline{$\beta_\pi(v) 
									= \tau_{R}^\cA \bigl( \pi_{\downcl\<R,1\>}(\cB(\beta)(s_1), \ldots, \pi_{\downcl\<R,m\>}(\cB(\beta)(s_m)) \bigr)$}
						\centerline{\hspace{9ex}$		= \tau_{R'}^\cA \bigl( \pi_{\downcl\<R',1\>}(\cB(\beta)(t_1), \ldots, \pi_{\downcl\<R',m'\>}(\cB(\beta)(t_{m'})) \bigr)$.}
				
					We distinguish two cases.
					\begin{description}
						\item If $\beta(v) = \beta_\pi(v)$, then $\bigl\< \pi_{\downcl\<R,1\>}\bigl(\cB(\beta)(s_1)\bigr), \ldots, \pi_{\downcl\<R,m\>}\bigl(\cB(\beta)(s_m)\bigr), \beta(v) \bigr\> \in R^\cA$ and thus $\cB, \beta \models R(s_1, \ldots, s_m, v)$.
						
						\item If $\beta(v) \neq \beta_\pi(v)$, then $\bigl\< \pi_{\downcl\<R',1\>}\bigl(\cB(\beta)(t_1),\bigr) \ldots, \pi_{\downcl\<R',1\>}\bigl(\cB(\beta)(t_{m'})\bigr), \beta(v) \bigr\> \not\in R'^\cA$ and thus $\cB, \beta \not\models R'(t_1, \ldots, t_{m'}, v)$.
					\end{description}
					In both cases we end up with $\cB,\beta \models R'(t_1, \ldots, t_{m'}, v) \rightarrow R(s_1, \ldots, s_m, v)$.
			\end{description}
			Consequently, we can derive $\cB, \beta \models C$ in all sub-cases.

		\item Case $\cA, \beta_\pi \not\models s\approx t$ for some free atom $s\approx t \in \Gamma$. 
			Since $C$ is in normal form, $s$ and $t$ must be constant symbols.
			$\cB$ and $\cA$ interpret constant symbols in the same way and independently of a variable assignment and thus we immediately get $\cB, \beta\not\models s\approx t$.
		
		\item Case $\cA, \beta_\pi \models s\approx t$ for some $s\approx t \in \Delta$. 
			\begin{description}
				\item If $s$ and $t$ are constant symbols, we know that $\cB, \beta \models s\approx t$ holds, by analogy to the above case.
				
				\item If $s$ is a free-sort variable $v$ and $t$ is a constant symbol $d$, we have $\beta_\pi(v) = d^\cA$. 
					\begin{description}
						\item Suppose $v \approx d$ is guarded by some atom $R(t_1, \ldots, t_m, v)$ in $\Gamma$ with $R$ being marked.
							As done previously, we may assume that $\cA, \beta_\pi \models R(t_1, \ldots, t_m, v)$.					
							Hence, we have\\
							$\beta_\pi(v) = \tau_{R}^\cA\bigl( \cA(\beta_\pi)(t_1), \ldots, \cA(\beta_\pi)(t_m) \bigr)$\\ 
							\strut\hspace{6ex} $= \tau_{R}^\cA\bigl( \pi_{\downcl\<R,1\>}\bigl(\cB(\beta)(t_1)\bigr), \ldots, \pi_{\downcl\<R,m\>}\bigl(\cB(\beta)(t_m)\bigr) \bigr)$.
							\begin{description}
								\item If $\beta(v) = \beta_\pi(v)$, then $\beta(v) = d^\cA = d^\cB$ and thus $\cB, \beta \models v \approx d$.
								
								\item If $\beta(v) \neq \beta_\pi(v)$, then $\bigl\< \pi_{\downcl\<R,1\>}\bigl(\cB(\beta)(t_1)\bigr), \ldots, \pi_{\downcl\<R,m\>}\bigl(\cB(\beta)(t_m)\bigr),$ $\beta(v) \bigr\> \not\in R^\cA$ and thus $\cB, \beta \not\models R(t_1, \ldots, t_m, v)$.
							\end{description}
							In both cases we can derive $\cB,\beta \models R(t_1, \ldots, t_m, v) \rightarrow v \approx d$.
					
						\item Now suppose that $v \approx d$ is not guarded.
							In this case we know that $\hI_{\downcl(v)}$ contains all free-sort constant symbol occurring in $N$ and also all artificial instantiation points $d_{R c_1 \ldots c_m}$.
							Therefore and by the definition of $\cS^\cB$, $\pi_{\downcl(v)}$ can only project $\beta(v)$ to $d^\cA$, if $\beta(v)$ equals $d^\cA$ in the first place.
							Hence, $\beta_\pi(v) = \pi_{\downcl(v)}(\beta(v)) = d^\cA = \beta(v)$.
							This entails $\cB,\beta \models v \approx d$.
					\end{description}	
					
				\item Suppose $s$ is a free-sort variable $v$ and $t$ is a free-sort variable $w$.
					\begin{description}
						\item If there are guards for both variables $v$ and $w$, i.e.\ $\Gamma$ contains two atoms $R(s_1, \ldots, s_m, v)$ and $R'(t_1, \ldots, t_{m'}, w)$ with marked $R$ and $R'$, then we assume $\cA,\beta_\pi \models R(s_1, \ldots,$ $s_m, v)$ and $\cA,\beta_\pi \models R'(t_1, \ldots, t_{m'}, w)$, as in previous cases. Hence,\\
							$\beta_\pi(v) = \tau_R^\cA\bigl( \cA(\beta_\pi)(s_1), \ldots, \cA(\beta_\pi)(s_m) \bigr)$\\
							\strut\hspace{10ex}$= \tau_{R}^\cA\bigl( \pi_{\downcl\<R,1\>}\bigl(\cB(\beta)(s_1)\bigr), \ldots, \pi_{\downcl\<R,m\>}\bigl(\cB(\beta)(s_m)\bigr) \bigr)$\\
							and\\
							$\beta_\pi(w) = \tau_{R'}^\cA\bigl( \cA(\beta_\pi)(t_1), \ldots, \cA(\beta_\pi)(t_m) \bigr)$\\
							\strut\hspace{10ex}$ = \tau_{R'}^\cA\bigl( \pi_{\downcl\<R',1\>}\bigl(\cB(\beta)(t_1)\bigr), \ldots, \pi_{\downcl\<R',m\>}\bigl(\cB(\beta)(t_m)\bigr) \bigr)$\\
							and $\beta_\pi(v) = \beta_\pi(w)$.
							\begin{description}
								\item Suppose $\beta(v) = \beta(w)$. $\cB,\beta \models v \approx w$ follows immediately.
								\item Suppose $\beta(v) \neq \beta(w)$. Hence, we either have \\
									$\beta(v) \neq \tau_{R}^\cA\bigl( \pi_{\downcl\<R,1\>}\bigl(\cB(\beta)(s_1)\bigr), \ldots, \pi_{\downcl\<R,m\>}\bigl(\cB(\beta)(s_m)\bigr) \bigr)$, which entails $\cB,\beta \not \models R(s_1, \ldots, s_m, v)$,
									or\\
									$\beta(w) \neq \tau_{R'}^\cA\bigl( \pi_{\downcl\<R',1\>}\bigl(\cB(\beta)(t_1)\bigr), \ldots, \pi_{\downcl\<R',m\>}\bigl(\cB(\beta)(t_m)\bigr) \bigr)$, which implies $\cB,\beta \not\models R'(t_1, \ldots, t_{m'}, w)$.									
							\end{description}
							In both cases, we have $\cB,\beta \models R(s_1, \ldots, s_m, v) \wedge R'(t_1, \ldots, t_{m'}, w) \rightarrow v \approx w$, and thus also $\cB,\beta \models C$.
					
						\item If at least one of the variables is unguarded, we know that both $\hI_{\downcl(v)}$ and $\hI_{\downcl(w)}$ contain all free-sort constant symbol occurring in $N$ and also all artificial instantiation points $d_{R c_1 \ldots c_m}$. In fact, it even holds $\hI_{\downcl(v)} = \hI_{\downcl(w)}$.
							Analogously to previous cases, we observe $\beta(v) = \beta_\pi(v) = \beta_\pi(w) = \beta(w)$.
							Consequently, we have $\cB,\beta \models v \approx w$.						
					\end{description}		
			\end{description}	
	\end{description}
	Altogether, we have shown $\cB\models N$.	
\end{proof}

\begin{lemma}\label{lemma:ArtificialAndExistentialInstantiationPoints}
	The hierarchic interpretation $\cB$ constructed in the proof of Lemma~\ref{lemma:MarkedPredicatesAsFunctionGraphsInFEU} is a model of $M_n$.
\end{lemma}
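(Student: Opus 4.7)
The plan is to verify that $\cB$ satisfies each of the four families of sentences comprising $\hPhi(R_\ell, M_{\ell-1})$, for every $\ell = 1, \ldots, n$. The technical facts I carry over from Lemma~\ref{lemma:MarkedPredicatesAsFunctionGraphsInFEU} are that for any marked $R$ of arity $m+1$, $\<\fa_1, \ldots, \fa_m, \fb\> \in R^\cB$ iff $\fb = \tau_R^\cA\bigl(\pi_{\downcl\<R,1\>}(\fa_1), \ldots, \pi_{\downcl\<R,m\>}(\fa_m)\bigr)$; that $d_{R c_1 \ldots c_m}^\cB = \tau_R^\cA(c_1^\cA, \ldots, c_m^\cA)$ by construction; and that $\pi_{\downcl\<R,i\>}(c^\cA) = c^\cA$ whenever $c \in \hI_{\downcl\<R,i\>}$.

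With these in hand, the first family of atoms $R(c_1, \ldots, c_m, d_{R c_1 \ldots c_m})$ follows immediately: the tuples range over $\cI_{\downcl_{M_{\ell-1}}\<R,\cdot\>}^{[m]}$, each $c_i$ lies in $\cI_{\downcl_{M_{\ell-1}}\<R,i\>} \subseteq \hI_{\downcl\<R,i\>}$ by requirement~\ref{enum:ArtificialInstantiationPoints:I}, and the projections act as the identity on the $c_i^\cA$. The fourth family (pairwise collapse of two distinct $d_{R c_1 \ldots c_m}^\cB$'s sharing a common preimage tuple) reduces directly to $\cB \models \Phi_1$, already established. The third family (that every $u$ with $\<\ldots, u\> \in R^\cB$ coincides with some $d_{R c_1 \ldots c_m}^\cB$) follows from the first identity together with the main inductive claim below, which identifies every projected value with the value of a constant in $\cI_{\downcl_{M_{\ell-1}}\<R,\cdot\>}^{[m]}$.

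The main obstacle is the totality disjunction $\forall x_1 \ldots x_m.\; \bigvee_{\<c_1, \ldots, c_m\>} R_\ell(x_1, \ldots, x_m, d_{R_\ell c_1 \ldots c_m})$, because the disjunction ranges only over $\cI_{\downcl_{M_{\ell-1}}\<R_\ell,\cdot\>}^{[m]}$, whereas $\pi$ a priori lands on values of constants in the possibly larger set $\hI$. Given $\fa_1, \ldots, \fa_m$, one has $\pi_{\downcl\<R_\ell,i\>}(\fa_i) = c_i^\cA$ for some $c_i \in \hI_{\downcl\<R_\ell, i\>}$, and the key is to arrange such a $c_i$ inside $\cI_{\downcl_{M_{\ell-1}}\<R_\ell,i\>}$. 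I plan to do this by induction on $\ell$: any artificial $d_{R' c'_1 \ldots c'_{m'}} \in \hI_{\downcl\<R_\ell,i\>}$ with $i \leq m$ enters via requirement~\ref{enum:ArtificialInstantiationPoints:II}, so $R' \succeq_N R_\ell$ and $\<R', m'+1\> \prop_N \<R_\ell, i\>$; the latter forces $\lvl_N\<R', m'+1\> = \lvl_N\<R_\ell, i\>$, which combined with Condition~\ref{enum:StratifiedClauseSet:I:II} (the strict drop $\lvl_N\<R_\ell, i\> > \lvl_N\<R_\ell, m+1\>$ for $i \leq m$) and the ordering $R_1 \succeq_N \ldots \succeq_N R_n$ pins down $R' = R_{\ell'}$ with $\ell' < \ell$. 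Hence $d_{R' c'_1 \ldots c'_{m'}}$ was introduced already in $\hPhi(R_{\ell'}, M_{\ell'-1})$, occurs in $M_{\ell'} \subseteq M_{\ell-1}$ at argument position $\<R_{\ell'}, m'+1\>$, and so propagates into $\cI_{\downcl_{M_{\ell-1}}\<R_\ell, i\>}$ via the inherited link $\<R_{\ell'}, m'+1\> \prop_{M_{\ell-1}} \<R_\ell, i\>$. The base-sort argument positions are handled in parallel, using the interval partition $\hP_{\downcl\<R_\ell,i\>}^\cA$ together with interval representatives drawn from $\cI_{\downcl_{M_{\ell-1}}}$ as supplied by Lemma~\ref{lemma:PartitionRepresentatives}.
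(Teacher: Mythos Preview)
Your overall strategy matches the paper's: the core is an induction along $R_1 \succeq_N \ldots \succeq_N R_n$ relating the $\hI$-sets (used to define the projections $\pi$) to the $\cI_{\downcl_{M_{\ell-1}}}$-sets (indexing the disjunctions in $\hPhi$). However, you only carry one of the two needed inclusions. Your ``main inductive claim'' correctly shows $\hI_{\downcl_N\<R_\ell, i\>} \subseteq \cI_{\downcl_{M_{\ell-1}}\<R_\ell, i\>}$ for $i \leq m$ (this is the paper's Claim~II, and it indeed handles the second and third families). But for the first family --- the ground atoms $R_\ell(c_1, \ldots, c_m, d_{R_\ell c_1 \ldots c_m})$ --- you need the \emph{reverse} inclusion $\cI_{\downcl_{M_{\ell-1}}\<R_\ell, i\>} \subseteq \hI_{\downcl_N\<R_\ell, i\>}$ so that $\pi_{\downcl\<R_\ell, i\>}$ fixes each $c_i^\cA$. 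You attribute this to requirement~\ref{enum:ArtificialInstantiationPoints:I}, but that requirement only gives $\cI_{\downcl_N} \subseteq \hI_{\downcl_N}$; for $\ell > 1$ the set $\cI_{\downcl_{M_{\ell-1}}\<R_\ell,i\>}$ already contains artificial constants $d_{R_{\ell'} c'_1 \ldots c'_{m'}}$ introduced in earlier rounds, and showing these lie in $\hI_{\downcl_N\<R_\ell, i\>}$ requires its own induction (the paper's Claim~III). Without it, $\pi(c_i^\cA)$ may differ from $c_i^\cA$, and then $d_{R_\ell c_1 \ldots c_m}^\cB = \tau_{R_\ell}^\cA(c_1^\cA, \ldots, c_m^\cA)$ need not equal $\tau_{R_\ell}^\cA(\pi(c_1^\cA), \ldots, \pi(c_m^\cA))$, so $\cB \models R_\ell(c_1, \ldots, c_m, d_{R_\ell c_1 \ldots c_m})$ does not follow.

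A smaller point: you assert that any artificial $d_{R' c'_1 \ldots c'_{m'}} \in \hI_{\downcl\<R_\ell, i\>}$ with $i \leq m$ enters via~\ref{enum:ArtificialInstantiationPoints:II}, tacitly excluding~\ref{enum:ArtificialInstantiationPoints:III}. This is correct but needs the observation --- which the paper makes explicit --- that Condition~\ref{enum:StratifiedClauseSet:II:II} of Definition~\ref{definition:StratifiedClauseSet} forbids any unguarded free-sort atom $u \approx t$ with $\downcl_N(u)$ meeting $\downcl_N\<R_\ell, i\>$ for marked $R_\ell$ and $i \leq m$.
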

	Before we proceed with the proof, we need to update Definition~\ref{definition:FreeInstantiationPoints} (instantiation points for free-sort argument positions) in order to adapt it to the new situation with marked predicate symbols and guarded free-sort atoms.
	To this end, we replace Condition~\ref{enum:FreeInstantiationPoints:II} in Definition~\ref{definition:FreeInstantiationPoints} with the following condition.
	\begin{itemize}
		\item[(b)]  For any clause $\Lambda \,\|\, \Gamma \rightarrow \Delta$ in $N$ such that $\Gamma\rightarrow \Delta$ contains $P(\ldots, u, \ldots)$ in which $u$ occurs as the $i$-th argument and $\Delta$ contains an atom of the form $u \approx t$ where $t$ is either a variable or a constant symbol, we set
			\begin{enumerate}[label=(b.\arabic{*}), ref=(b.\arabic{*})]
				\item $d \in \cI_{P,i}$, if $t$ is some constant symbol $d$ and if there is a guard $R(s_1, \ldots,$ $s_m, u)$ such that $R$ is marked in $N$.
				\item $\cI_{P,i} = \fconsts(N)$, if $u \approx t$ is unguarded.
			\end{enumerate}
	\end{itemize}
\begin{proof}[Proof sketch]
	We already know that $\cB \models N$.
	Hence, in order to proof the lemma, we have to show two things:
	\begin{enumerate}[label=(\arabic{*}), ref=(\arabic{*})]
		\item\label{enum:proofArtificialAndExistentialInstantiationPoints:I} $\hI_{\downcl_N\<P,i\>} = \cI_{\downcl_{M_n}\<P,i\>}$ for every argument position pair and
		\item\label{enum:proofArtificialAndExistentialInstantiationPoints:II} $\cB \models \hPhi(R_1, M_0) \cup \hPhi(R_2, M_1) \cup \ldots \cup \hPhi(R_n,M_{n-1})$.
	\end{enumerate}
	
	\paragraph{Ad~\ref{enum:proofArtificialAndExistentialInstantiationPoints:I}.}
		The requirement~\ref{enum:ArtificialInstantiationPoints:III} regarding the artificial instantiation points in $\hI_{\downcl\<P,i\>}$ does only play a role for argument position pairs $\<P,i\>$ in which either $P$ is unmarked or $i$ is the last argument position in $P$.
		The reason is, on the one hand, that in~\ref{enum:ArtificialInstantiationPoints:III} the existence of an unguarded free-sort atom $u \approx t$ with $\downcl_N(u) = \downcl_N\<P,i\>$ is required.
		On the other hand, Condition~\ref{enum:StratifiedClauseSet:II:II} in Definition~\ref{definition:StratifiedClauseSet} states that $\downcl_N(u) \cap \downcl_N\<R,j\> = \emptyset$ for every marked $R : \xi_1 \times \ldots \times \xi_m \times \xi_{m+1}$ and $j = 1, \ldots, m$.
		This means, any set $\hI_{\downcl_N\<P,i\>}$ that is subject to the requirement~\ref{enum:ArtificialInstantiationPoints:III} cannot participate \emph{as source} in the generation of new artificial instantiation points by means of the requirement~\ref{enum:ArtificialInstantiationPoints:II}.
		However, it could participate \emph{as target} of requirement~\ref{enum:ArtificialInstantiationPoints:II}.
		But this would not lead to new instantiation points, as requirement~\ref{enum:ArtificialInstantiationPoints:III} already covers all possibilities.
		
		Before we continue, we show a technical result.
		\begin{description}
			\item\underline{Claim I:}
				Consider two predicate symbols $R: \xi_1 \times \ldots \times \xi_m \times \xi_{m+1}$ and $R': \zeta_1 \times \ldots \times \zeta_m \times \zeta_{m'+1}$ that are marked in $N$.					
				For every $i$, $1\leq i\leq m'$, $\<R,m+1\> \prop_N \<R',i\>$ entails $R' \not\succeq_N R$.
				
			\item\underline{Proof:}
				Since $N$ is stratified with respect to $R$ and $R'$ and because of\\
				$\<R,m+1\> \prop_N \<R',i\>$, we observe
				$\min_{1 \leq j \leq m+1} \lvl_N\<R,j\> = \lvl_N\<R,m+1\> = \lvl_N\<R',i\> > \lvl_N\<R',m'+1\>$.
				Suppose $R' \succeq_N R$, i.e.\ $\lvl_N\<R',m'+1\> \geq \min_{1 \leq j \leq m+1} \lvl_N\<R,j\>$.
				This contradicts the above observation.
				\strut\hfill$\Diamond$
		\end{description}
		
		Considering the sets of artificial instantiation points, it is clear that any point $d_{R c_1 \ldots c_{m}}$ can only be generated by an application of requirement~\ref{enum:ArtificialInstantiationPoints:II}.
		\begin{description}
			\item\underline{Claim II:}
				For every $d_{R_i c_1 \ldots c_{m_i}}$ that is generated because of requirement~\ref{enum:ArtificialInstantiationPoints:II}, we have \linebreak 
				$d_{R_i c_1 \ldots c_{m_i}} \in \cI_{\downcl_{M_i}\<R_i, m_1+1\>}$.
				
			\item\underline{Proof:}
				We proceed by induction from $R_1$ to $R_n$.				
				\begin{description}
					\item Consider $R_1 : \xi_1 \times \ldots \times \xi_{m_1} \times \xi_{m_1+1}$.
						For $j = 1, \ldots, m_1$ we observe $\hI_{\downcl\<R_1,j\>} = \cI_{\downcl\<R_1,j\>}$, since neither requirement~\ref{enum:ArtificialInstantiationPoints:II} nor~\ref{enum:ArtificialInstantiationPoints:III} introduces artificial instantiation points into $\hI_{\downcl\<R_1,j\>}$.
						Requirement~\ref{enum:ArtificialInstantiationPoints:II} generates exactly the instantiation points in 
							$\bigl\{d_{R_1 c_1 \ldots c_{m_1}} \bigm| c_1 \in \hI_{\downcl_N\<R_1,1\>}, \ldots, c_{m_1} \in \hI_{\downcl_N\<R_1,m_1\>} \bigr\}$
						for $R_1$.
						On the other hand, the definition of $\hPhi(R_1, M_0) = \hPhi(R_1, N)$ leads to
							$\bigl\{R_1( c_1, \ldots, c_{m_1}, d_{R_1 c_1 \ldots c_{m_1}}) \bigm| c_1 \in \cI_{\downcl_N\<R_1,1\>}, \ldots, c_{m_1} \in \cI_{\downcl_N\<R_1,m_1\>} \bigr\} \subseteq \hPhi(R_1,N)$.
						Hence, $\bigl\{d_{R_1 c_1 \ldots c_{m_1}} \bigm| c_1 \in \hI_{\downcl_N\<R_1,1\>}, \ldots, c_{m_1} \in  \hI_{\downcl_N\<R_1,m_1\>} \bigr\}$\\
						$ = \bigl\{d_{R_1 c_1 \ldots c_{m_1}} \bigm| c_1 \in \cI_{\downcl_N\<R_1,1\>}, \ldots, c_{m_1} \in \cI_{\downcl_N\<R_1,m_1\>} \bigr\} \subseteq \cI_{\downcl_{M_1}\<R_1,m_1+1\>}$.
					
					\item Consider $R_\ell : \xi_1 \times \ldots \times \xi_{m_\ell} \times \xi_{m_\ell+1}$ with $\ell > 1$.
						Moreover, consider any $\<R_\ell,j\>$ with $j \leq m_\ell$.
						We have pointed out earlier, that none of the artificial instantiation points in $\hI_{\downcl_N\<R_\ell,j\>} \setminus \cI_{\downcl_N\<R_\ell,j\>}$ with $j = 1, \ldots, m_j$ belongs to $\hI_{\downcl_N\<R_\ell,j\>}$ because of requirement~\mbox{\ref{enum:ArtificialInstantiationPoints:III}}.
						For any $d_{R_k c_1 \ldots c_{m_k}} \in \hI_{\downcl_N\<R_\ell,j\>} \setminus \cI_{\downcl_N\<R_\ell,j\>}$ we must have $\<R_k, m_k+1\> \prop_N \<R_\ell,j\>$, which, by Claim I and our assumption $R_1 \succeq_N \ldots \succeq_N R_n$, entails $k < \ell$.
						By induction, we have $\hI_{\downcl_{N}\<R_k, m_k+1\>} \subseteq \cI_{\downcl_{M_k}\<R_k, m_k+1\>} \subseteq \cI_{\downcl_{M_k}\<R_\ell,j\>} \subseteq \cI_{\downcl_{M_{\ell-1}}\<R_\ell,j\>}$.
						Consequently, we have $\hI_{\downcl_{n}\<R_\ell,j\>} \subseteq \cI_{\downcl_{M_{\ell-1}}\<R_\ell,j\>}$.

						Since requirement~\ref{enum:ArtificialInstantiationPoints:II} generates exactly the instantiation points in \\ 
							$\bigl\{d_{R_\ell c_1 \ldots c_{m_\ell}} \bigm| c_1 \in \hI_{\downcl_N\<R_\ell,1\>}, \ldots, c_{m_\ell} \in \hI_{\downcl_N\<R_\ell,m_\ell\>} \bigr\}$
						for $\<R_\ell, m_\ell+1\>$ and due to \\
							$\bigl\{R_\ell( c_1, \ldots, c_{m_\ell}, d_{R_\ell c_1 \ldots c_{m_\ell}}) \bigm| c_1 \in \cI_{\downcl_N\<R_\ell,1\>}, \ldots, c_{m_\ell} \in \cI_{\downcl_N\<R_\ell,m_\ell\>} \bigr\} \subseteq \hPhi(R_\ell,M_{\ell-1})$,
						we obtain \\
						$\bigl\{d_{R_\ell c_1 \ldots c_{m_\ell}} \bigm| c_1 \in \hI_{\downcl_N\<R_\ell,1\>}, \ldots, c_{m_\ell} \in \hI_{\downcl_N\<R_\ell,m_\ell\>} \bigr\}$\\
						$\subseteq \bigl\{d_{R_\ell c_1 \ldots c_{m_\ell}} \bigm| c_1 \in \cI_{\downcl_{M_{\ell-1}}\<R_\ell,1\>}, \ldots, c_{m_\ell} \in \cI_{\downcl_{M_{\ell-1}}\<R_\ell,m_\ell\>} \bigr\} \subseteq \cI_{\downcl_{M_\ell}\<R_\ell,m_\ell+1\>}$.
						\strut\hfill$\Diamond$
				\end{description}	
		\end{description}
			
		\begin{description}
			\item\underline{Claim III:}
				Let $R_i : = \xi_1 \times \ldots \times \xi_{m_i} \times \xi_{m_i+1}$ be marked in $N$.
				For every $d_{R_i c_1 \ldots c_{m_i}} \in \cI_{\downcl_{M_n}\<Q,j\>}$ with $\<R_i, m_i+1\> \prop_{M_n} \<Q,j\>$ we have $d_{R_i c_1 \ldots c_{m_i}} \in \hI_{\downcl_{N}\<Q,j\>}$.
				
			\item\underline{Proof:}
				We proceed by induction from $R_1$ to $R_n$.
				\begin{description}
					\item Consider $R_1 : \xi_1 \times \ldots \times \xi_m \times \xi_{m_1+1}$.
						Whenever $d_{R_1 c_1 \ldots c_{m_1}}$ belongs to $\cI_{\downcl_{M_n}\<P,i\>}$, then the atom $R_1(c_1, \ldots, c_{m_1}, d_{R_1 c_1 \ldots c_{m_1}})$ must occur in $\hPhi(R_1,$ $N)$.
						Hence, we have $\<c_1, \ldots, c_{m_1}\> \in \cI_{\downcl_{N}\<R_1,1\>} \times \ldots \times \cI_{\downcl_{N}\<R_1,m_1\>}$.
						Because of requirement~\ref{enum:ArtificialInstantiationPoints:I} regarding artificial instantiation points, $\<c_1, \ldots, c_{m_1}\>$ must also belong to $\hI_{\downcl_{N}\<R_1,1\>} \times \ldots \times \hI_{\downcl_{N}\<R_1,m_1\>}$.
						
						Our assumption $\<R_1, m_1+1\> \prop_{M_n} \<Q,j\>$ can only be satisfied if $\<R_1, m_1+1\> \prop_{N} \<Q,j\>$ holds.
						This, in turn, entails $R_1 \succeq_N Q$.
						Taken together, requirement~\ref{enum:ArtificialInstantiationPoints:II} leads to $d_{R_1 c_1 \dots c_{m_1}} \in \hI_{\downcl_N\<Q,j\>}$.
				
					\item Consider $R_\ell : \xi_1 \times \ldots \times \xi_m \times \xi_{m_\ell+1}$ with $\ell > 1$.
						Whenever $d_{R_\ell c_1 \ldots c_{m_\ell}}$ belongs to $\cI_{\downcl_{M_n}\<P,i\>}$, then the atom $R_\ell(c_1, \ldots, c_{m_\ell}, d_{R_\ell c_1 \ldots c_{m_\ell}})$ must occur in $\hPhi(R_\ell,M_{\ell-1})$.
						Hence, we have $\<c_1, \ldots, c_{m_\ell}\> \in \cI_{\downcl_{M_{\ell-1}}\<R_\ell,1\>} \times \ldots \times \cI_{\downcl_{M_{\ell-1}}\<R_\ell,m_\ell\>}$.
						Every instantiation point in any set $\cI_{\downcl_{M_{\ell-1}}\<R_\ell, k\>} \setminus \cI_{\downcl_{N}\<R_\ell, k\>}$, $1 \leq k \leq m_\ell$, has been propogated into the set $\cI_{\downcl_{M_{\ell-1}}\<R_\ell, k\>}$ via $\prop_N$, because our syntax does not allow any unguarded free-sort atom $u \approx t$ with $\downcl_N(u) = \downcl_N\<R_\ell,k\>$.						
						Thus, induction entails $\<c_1, \ldots, c_{m_\ell}\> \in \hI_{\downcl_{N}\<R_\ell,1\>} \times \ldots \times \hI_{\downcl_{N}\<R_\ell,m_\ell\>}$.
						
						Our assumption $\<R_\ell, m_\ell+1\> \prop_{M_n} \<Q,j\>$ can only hold if $\<R_\ell, m_\ell+1\> \prop_{N} \<Q,j\>$ holds.
						This, in turn, entails $R_\ell \succeq_N Q$.
						Taken together, requirement~\ref{enum:ArtificialInstantiationPoints:II} leads to $d_{R_\ell c_1 \dots c_{m_\ell}} \in \hI_{\downcl_N\<Q,j\>}$.
						\strut\hfill$\Diamond$
				\end{description}	
		\end{description}
										
		Let $\<P,i\>$ be an argument position pair such that there is an unguarded free-sort atom $u \approx t$ in $N$ for which $\downcl_N(u) = \downcl_N\<P,i\>$.		
		Due to Claim~II, we have $\hI_{\downcl_N\<Q,j\>} \subseteq \cI_{\downcl_{M_n}\<Q,j\>}$ for every argument position pair $\<Q,j\>$.
		Hence, $\hI_{\downcl_N\<P,i\>} = \bigcup_{\<Q,j\>} \hI_{\downcl_N\<Q,j\>} \subseteq \cI_{\downcl_{M_n}\<P,i\>}$.
		
		Conversely, we have $\cI_{\downcl_{M_n}\<P,i\>} = \fconsts(M_n)$ and we can split $\cI_{\downcl_{M_n}\<P,i\>}$ into $\cI_{\downcl_N\<P,i\>}$ and the rest $\cI_{\downcl_{M_n}\<P,i\>}\setminus \cI_{\downcl_N\<P,i\>}$.
		Every instantiation point in this rest is of the form $d_{R c_1 \ldots c_m}$ and it belongs to $\cI_{\downcl_{M_n}\<R,m+1\>}$.
		In addition, we observe $\<R,m+1\> \prop_{M_n} \<R,m+1\>$.
		Hence, Claim III implies that $\cI_{\downcl_{M_n}}\<R,m+1\> \subseteq \hI_{\downcl_N}\<R,m+1\>$.
		Moreover, by requirement~\ref{enum:ArtificialInstantiationPoints:I}, we know $\cI_{\downcl_N\<P,i\>} \subseteq \hI_{\downcl_N\<P,i\>}$.
		Taken together, this entails $\cI_{\downcl_{M_n}\<P,i\>} \subseteq \hI_{\downcl_N\<P,i\>}$.
		
		\medskip
		Consequently, for every (arbitrary) argument position pair $\<P,i\>$ in $N$, we may conclude $\cI_{\downcl_{M_n}\<P,i\>} = \hI_{\downcl_N\<P,i\>}$ by Claim~II, Claim~III, the just made observations concerning the unguarded free-sort atoms $u \approx t$, and the requirement~\ref{enum:ArtificialInstantiationPoints:I} stating $\cI_{\downcl_N\<P,i\>} \subseteq \hI_{\downcl_N\<P,i\>}$.
																								
	\paragraph{Ad~\ref{enum:proofArtificialAndExistentialInstantiationPoints:II}.}
		
		Let $R_\ell(c_1, \ldots, c_{m_\ell}, d_{R_\ell c_1 \ldots c_{m_\ell}}) \in \hPhi(R_\ell,M_{\ell-1})$ for some $\ell$, $1\leq \ell\leq n$.
		By construction of $\cB$, we known that $d_{R_\ell c_1 \ldots c_{m_\ell}}^\cB = d_{R_\ell c_1 \ldots c_{m_\ell}}^\cA = \tau_{R_\ell}^\cA(c_1^\cA, \ldots, c_{m_\ell}^\cA) = \tau_{R_\ell}^\cA(c_1^\cB, \ldots, c_{m_\ell}^\cB)$.
		Hence, $\cB \models R_\ell(c_1, \ldots, c_{m_\ell}, d_{R_\ell c_1 \ldots c_{m_\ell}})$.
		
		More generally, for every $\<\fa_1, \ldots, \fa_{m_\ell}, \fb\>$ of domain elements we have $\<\fa_1, \ldots, \fa_{m_\ell}, \fb\> \in R_\ell^\cB$ if and only if
			$\fb = \tau_{R_\ell}^\cA\bigl( \pi_{\downcl_{N}\<R_\ell,m_1\>}(\fa_1), \ldots, \pi_{\downcl_{N}\<R_\ell,m_\ell\>}(\fa_{m_\ell}) \bigr)$.
		By definition of the projections $\pi_{\downcl_N\<R_\ell,i\>}$, there must be a tuple 
		$\<c_1, \ldots, c_{m_\ell}\> 
			\in \hI_{\downcl_N\<R_\ell,1\>} \times \ldots \times \hI_{\downcl_N\<R_\ell,m_\ell\>} 
			= \cI_{\downcl_{M_n}\<R_\ell,1\>} \times \ldots \times \cI_{\downcl_{M_n}\<R_\ell,m_\ell\>}
			= \cI_{\downcl_{M_\ell}\<R_\ell,1\>} \times \ldots \times \cI_{\downcl_{M_\ell}\<R_\ell,m_\ell\>}$
		(the last equation is valid, as the $\hI_{\downcl_N\<R_\ell,i\>}$ with $i \leq m_\ell$ are not affected by \ref{enum:ArtificialInstantiationPoints:III}), 
		such that 
		$\<\pi_{\downcl_{N}\<R_\ell,m_1\>}(\fa_1), \ldots, \pi_{\downcl_{N}\<R_\ell,m_\ell\>}(\fa_{m_\ell})\> 
			= \<c_1^\cB, \ldots, c_{m_\ell}^\cB\>$ 
		and, hence, 
		$\fb = \tau_{R_\ell}^\cA(c_1^\cB, \ldots, c_{m_\ell}^\cB) 
			= d_{R_\ell c_1 \ldots c_{m_\ell}}^\cB$.
		
		From this observation $\cB \models \hPhi(R_\ell,M_{\ell-1})$ follows.
\end{proof}

\end{document}